\title{Quadratic Kernel for Cliques or Trees Vertex Deletion} 
\author{Soh Kumabe}{CyberAgent, Japan}{kumabe_soh@cyberagent.co.jp}{https://orcid.org/0000-0002-1021-8922}{}
\authorrunning{S. Kumabe} 
\keywords{Fixed-Parameter Tractability, Kernelization, Deletion to Scattered Graph Classes, Cluster Vertex Deletion, Feedback Vertex Set} 
\newcommand{\DP}{\mathsf{DP}}
\newcommand{\AUX}{\mathsf{AUX}}
\newtheorem{rrule}{Reduction Rule}
\begin{document}

\maketitle

\begin{abstract}
We consider \textsc{Cliques or Trees Vertex Deletion}, which is a hybrid of two fundamental parameterized problems: \textsc{Cluster Vertex Deletion} and \textsc{Feedback Vertex Set}.  
In this problem, we are given an undirected graph $G$ and an integer $k$, and asked to find a vertex subset $X$ of size at most $k$ such that each connected component of $G-X$ is either a clique or a tree.  
Jacob et al. (ISAAC, 2024) provided a kernel of $O(k^5)$ vertices for this problem, which was recently improved to $O(k^4)$ by Tsur (IPL, 2025).  

Our main result is a kernel of $O(k^2)$ vertices.  
This result closes the gap between the kernelization result for \textsc{Feedback Vertex Set}, which corresponds to the case where each connected component of $G-X$ must be a tree.  

Although both \emph{cluster vertex deletion number} and \emph{feedback vertex set number} are well-studied structural parameters, little attention has been given to parameters that generalize both of them.  
In fact, the lowest common well-known generalization of them is clique-width, which is a highly general parameter.  
To fill the gap here, we initiate the study of the \emph{cliques or trees vertex deletion number} as a structural parameter.  
We prove that \textsc{Longest Cycle}, which is a fundamental problem that does not admit $o(n^k)$-time algorithm unless ETH fails when $k$ is the clique-width, becomes fixed-parameter tractable when parameterized by the cliques or trees vertex deletion number.

\end{abstract}

\newpage

\section{Introduction}

Given a graph, can we remove at most $k$ vertices so that the resulting graph belongs to a class $\Pi$ of well-structured graphs?
Such problems are called \emph{vertex deletion problems} and include many well-studied parameterized problems.
Indeed, \textsc{Vertex Cover}~\cite{fellows2018known,Harris24}, \textsc{Feedback Vertex Set}~\cite{downey1995fixed,iwata17,li2022detecting,thomasse20104}, \textsc{Cluster Vertex Deletion}~\cite{bessy2023kernelization,fomin2019subquadratic,huffner2010fixed}, \textsc{Odd Cycle Transversal}~\cite{kratsch2012representative,reed2004finding}, \textsc{Interval Vertex Deletion}~\cite{cao2015interval}, and \textsc{Chordal Vertex Deletion}~\cite{marx2010chordal} correspond to the cases where $\Pi$ is the classes of collection of isolated vertices, collection of trees, collection of cliques, bipartite graphs, interval graphs, and chordal graphs, respectively.

However, why must $\Pi$ be a single graph class?
It is still reasonable to call a graph well-structured when all connected components are well-structured, even if different components belong to different classes.  
Jacob et al.~\cite{jacob2023deletion1} introduced the problem framework of \emph{deletion to scattered graph classes} capturing this concept, asking: 
Can we remove at most $k$ vertices from a given graph so that each connected component of the resulting graph belongs to one of the graph classes $(\Pi_1, \dots, \Pi_p)$?
Together with the subsequent paper~\cite{jacob2023deletion2}, they investigated the parameterized complexity of this type of problems and provided both general and problem-specific algorithms.  


This paper considers the following fundamental special case of deletion problems to scattered graph classes, called \textsc{Cliques or Trees Vertex Deletion}, which is first studied in~\cite{jacob2023deletion2}.

\begin{quote}
\textsc{Cliques or Trees Vertex Deletion}:  
Given an undirected graph $G=(V,E)$ and an integer $k\in \mathbb{Z}_{\geq 1}$, is there a vertex subset $X\subseteq V$ with $|X|\leq k$ such that each connected component of $G-X$ is either a clique or a tree?
\end{quote}

This case is particularly interesting because it combines two of the most prominent parameterized problems, \textsc{Feedback Vertex Set} and \textsc{Cluster Vertex Deletion}.
Moreover, since both the \emph{feedback vertex set number} and the \emph{cluster vertex deletion number} are well-studied structural parameters, it is natural to expect that the \emph{cliques or trees vertex deletion number}, which is the minimum $k$ such that $(G,k)$ becomes an $\mathsf{yes}$-instance of \textsc{Cliques or Trees Vertex Deletion}, is likewise an interesting structural parameter.
In particular, this number captures the structural simplicity of graphs that contain both dense parts (i.e., clique components) and sparse parts (i.e., tree components).
This illustrates the benefit of considering deletion to scattered graph classes, as a deletion distance to a single dense or sparse class alone cannot capture such a property.

It has been proved that this problem is in FPT~\cite{jacob2023deletion1,jacob2023deletion2}; that is, it admits an $f(k)n^c$-time algorithm for some computable function $f$ and constant $c$. 
Accordingly, researchers are investigating kernelization algorithms, which are polynomial-time algorithms that reduce the given instance to an equivalent instance of size $g(k)$ for some function $g$.
Jacob et al.~\cite{jacob2024kernel} showed that this problem admits a kernel with $O(k^5)$ vertices, which was later improved to $O(k^4)$ vertices by Tsur~\cite{tsur2025faster}.
However, there is still a significant gap when compared to the kernelization results of the original two problems, \textsc{Feedback Vertex Set} and \textsc{Cluster Vertex Deletion}, which admit a kernel with $O(k^2)$ vertices~\cite{iwata17,thomasse20104} and $O(k)$ vertices~\cite{bessy2023kernelization}, respectively.

In this paper, we close this gap by proving the following.

\begin{theorem}\label{thm:main}
\textsc{Cliques or Trees Vertex Deletion} admits a kernel with $O(k^2)$ vertices.
\end{theorem}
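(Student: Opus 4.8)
The plan is to follow the standard kernelization blueprint for \textsc{Feedback Vertex Set}-type problems, suitably adapted to accommodate clique components. First I would compute in polynomial time a constant-factor approximate solution $X_0$ of size $O(k)$ (this is easy: iteratively pick any vertex of a small forbidden induced subgraph witnessing that a component is neither a clique nor a tree; such an obstruction has bounded size, so a branching/LP argument gives an $O(k)$-approximation, and if it exceeds $c\cdot k$ we answer \textsf{no}). The graph $G - X_0$ then decomposes into clique components and tree components, and the whole task reduces to bounding (a) the number of vertices of $X_0$ — already $O(k)$ — and (b) the number of vertices in $G - X_0$, charged against the $O(k^2)$ pairs of vertices of $X_0$.

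The main body of the argument I would organize around reduction rules of two flavours. For the tree side, I would essentially transplant the quadratic \textsc{Feedback Vertex Set} kernel machinery (degree-one and degree-two suppression inside tree components, and the key rule that a vertex of a tree component with at most one neighbour in $X_0$ and whose removal does not disconnect anything can be bypassed), so that each surviving tree component is "reduced" and every reduced tree component must touch $X_0$ in a way that can be charged to pairs or to edges incident to $X_0$; a Hall-type / marking argument then bounds the total size of the tree part by $O(k^2)$. For the clique side, the crucial observation is that a clique component $C$ of $G-X_0$ interacts with the solution only through $N(C)\cap X_0$; if $|C|$ is large, almost all of its vertices are "twins" with respect to $X_0$, so I would introduce a rule that keeps only $O(|N(C)\cap X_0|) = O(k)$ representative vertices per clique component (enough to preserve all relevant obstructions and the deletion budget), and a rule bounding the number of clique components that have a given neighbourhood pattern in $X_0$. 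Combined, these give $O(k)$ clique components each of size $O(k)$, hence $O(k^2)$ vertices.

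The step I expect to be the genuine obstacle is handling the \emph{interface} between clique components and tree components — that is, obstructions that are not contained in $G-X_0$ but are created jointly by a clique part, a tree part, and vertices of $X_0$. Unlike in pure \textsc{Feedback Vertex Set} or pure \textsc{Cluster Vertex Deletion}, a component of $G-X$ for the \emph{eventual} solution $X$ can merge pieces of several components of $G-X_0$, and whether the merged component is allowed depends delicately on whether it ends up a clique or a tree; a long induced path through a clique, or a triangle straddling a clique and a tree, must be tracked. Making the clique-reduction rule safe in the presence of these cross-obstructions — proving that the representatives we keep certify all the ways a hypothetical solution could fail — is where the careful case analysis lies, and it is presumably what forces the kernel to be quadratic rather than linear. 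I would therefore spend most of the effort formalizing a notion of "relevant configuration" through $X_0$, showing there are only polynomially many up to the equivalence induced by our rules, and verifying that each reduction rule preserves every relevant configuration, before finally assembling the size bound.
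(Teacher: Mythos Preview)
Your outline is the natural first attempt and broadly matches the \emph{previous} kernelizations of Jacob et al.\ and Tsur, which obtained $O(k^5)$ and $O(k^4)$ vertices respectively. The genuine gap is exactly in the clique-side step you describe informally. The claim ``if $|C|$ is large, almost all of its vertices are twins with respect to $X_0$'' is false: vertices of a clique component $C$ of $G-X_0$ can have up to $2^{|N(C)\cap X_0|}$ distinct neighbourhoods in $X_0$, so keeping $O(|N(C)\cap X_0|)$ representatives does not reduce $|C|$ to $O(k)$. Marking $O(k)$ vertices per vertex (or pair of vertices) of $X_0$ is what the earlier papers do, and that only gives a polynomial-in-$k$ bound of higher degree. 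Your final paragraph correctly locates the difficulty --- obstructions straddling clique pieces, tree pieces, and $X_0$ --- but ``formalizing a notion of relevant configuration'' is not yet an argument, and nothing in the proposal explains how to get below $O(k^4)$.

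The paper's route is genuinely different and sidesteps the interface problem rather than attacking it head-on. Instead of classifying vertices by which component of $G-X_0$ they lie in, it classifies them by \emph{degree and local edge density}: a vertex $v$ with $|N(v)|>7k$ and $\rho(v)$ large (``large-dense'') is shown to lie in a clique component of $G-X$ for \emph{every} feasible $X$ not containing it, and symmetrically a ``large-sparse'' vertex must lie in a tree component. This stability is what lets Thomass\'e's FVS rules be applied verbatim to large-sparse vertices and, more importantly, lets the large-dense part be treated purely as a \textsc{Cluster Vertex Deletion} instance. The $O(k^2)$ bound on the large-dense part then comes not from twin reduction but from a 3-\textsc{Hitting Set} style argument: one builds a maximal family $\mathcal{P}$ of induced $P_3$'s inside $V_{\mathrm{ld}}$ pairwise sharing at most one vertex, bounds $|\mathcal{P}|\le k^2$, and shows that the connected components of $V_{\mathrm{ld}}\setminus\bigcup\mathcal{P}$ are clique-modules, of which there are $O(k)$, each reducible to size $O(k)$. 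The small-degree remainder is handled by refined versions of the local rules (straightforward degree-one/degree-two suppression is \emph{not} safe here, since deleting a leaf can turn a non-clique into a clique; the paper uses more restrictive variants).
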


We remark that our result may surpass the expectations in the original research by Jacob et al.~\cite{jacob2024kernel}, as they stated the following in their conclusion section:
\begin{quote}
     One natural open question is to improve the size of our kernel, e.g. to $O(k^3)$ vertices. We believe that such a result is possible to achieve, but we suspect that it would require new techniques to develop such results. 
\end{quote}

We further initiate the study of \textsc{Cliques or Trees Vertex Deletion} as a structural parameter.
As mentioned above, both the feedback vertex set number~\cite{GalbyKIST23,Zehavi23} and the cluster vertex deletion number~\cite{doucha2012cluster,Gima0KMV23} are well-studied structural parameters. However, their common generalization has received relatively little attention.
Actually, the smallest well-investigated class that includes both graphs with bounded feedback vertex set number and graphs with bounded cluster vertex deletion number is the class of bounded \emph{clique-width}, which is a very general parameter often placed at the top of diagrams illustrating the inclusion relationships among structural parameters of graphs (see Figure~\ref{fig:diagram}).

As the second contribution of this paper, we demonstrate that the cliques or trees vertex deletion number is indeed a useful structural parameter. 
Particularly, we demonstrate that when parameterized by the cliques or trees vertex deletion number, a fundamental problem that is hard when parameterized by clique-width becomes tractable.
Specifically, we consider the following \textsc{Longest Cycle}, one of the most well-investigated problems in the field of parameterized complexity under structural parameterizations~\cite{bergougnoux2020optimal,cygan2018fast,cygan2022solving,doucha2012cluster,fomin2014almost,fomin2018clique,ganian2015improving,golovach2020graph,Jacob0Z23,lampis2011algorithmic}.  
\begin{quote}{\textsc{Longest Cycle}:}
Given an undirected graph $G=(V,E)$, find a cycle of $G$ with the largest possible number of vertices.
\end{quote}
It is known that, when $k$ is the clique-width, \textsc{Longest Cycle} (and even the special case \textsc{Hamiltonian Cycle}) does not admit an $n^{o(k)}$-time algorithm unless ETH fails~\cite{fomin2014almost,fomin2018clique}.
We prove the following.

\begin{theorem}\label{thm:cycle}
\textsc{Longest Cycle} can be solved in $2^{O(k\log k)}$-time, where $k$ is the cliques or trees vertex deletion number.
\end{theorem}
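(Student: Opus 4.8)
Here is the plan for proving \cref{thm:cycle}.

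\textbf{Setup.} First we apply the kernelization of \cref{thm:main}, after which we may assume $G$ has $O(k^2)$ vertices; there are then only $2^{O(k\log k)}$ vertex subsets of size at most $k$, and by trying all of them we compute a \emph{modulator} $X\subseteq V$ with $|X|\le k$ such that every component of $G-X$ is a clique or a tree (one exists by the assumption on $k$). Fix a longest cycle $Y$. If $Y$ avoids $X$, then $Y$ lies in one component of $G-X$; since trees are acyclic this component is a clique $K_m$ and $|V(Y)|=m$, which we detect by scanning the clique components. So assume $Y$ meets $X$ and let $S:=Y\cap X=\{x_1,\dots,x_t\}$, listed in the cyclic order in which $Y$ visits them, $1\le t\le k$. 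Then $Y$ splits into $t$ \emph{arcs}, the $i$-th joining $x_i$ to $x_{i+1}$ (indices modulo $t$) with no internal vertex in $X$. An arc with no internal vertex is an edge $x_ix_{i+1}$; otherwise it is a \emph{path-arc} $x_i,u,\dots,v,x_{i+1}$ whose middle $u,\dots,v$ (possibly $u=v$, a single vertex) is a path lying in one component of $G-X$, with $u\in N(x_i)$ and $v\in N(x_{i+1})$ — indeed the internal vertices of a path-arc are consecutive on $Y$ and avoid $X$, so they induce a connected subgraph of $G-X$.

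\textbf{Guessing the combinatorial type.} Next we enumerate the combinatorial type of $Y$: the set $S\subseteq X$ together with the cyclic order $Y$ induces on it; which of the arcs are edges rather than path-arcs (possible only where the corresponding edge of $G$ exists, with the obvious lower bound on the number and length of path-arcs when $t\in\{1,2\}$ so that $Y$ is a simple cycle of length at least $3$); and a partition of the path-arcs into \emph{groups}, the intention being that the path-arcs of one group are routed through one common component of $G-X$ by pairwise vertex-disjoint paths while distinct groups use distinct components. There are $2^{O(k\log k)}$ types (at most $2^{O(k\log k)}$ choices of the ordered $S$, at most $2^k$ choices of which arcs are edges, and at most the Bell number $B_t\le 2^{O(k\log k)}$ partitions). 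Fix a type. Once the cyclic order of $S$ is pinned down the skeleton of $Y$ is determined and the tasks of filling in the arcs become mutually independent: path-arcs in distinct components are automatically vertex-disjoint, path-arcs in one group are required to be disjoint, and the $x_i$ never clash with the interiors of path-arcs. Hence $|V(Y)|$ equals $t$ plus the number of $(V\setminus X)$-vertices used by all path-arcs, and we only need to maximize this number.

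\textbf{Per-component subproblem and a matching.} For a group $J$ and a component $C$, turn each path-arc $i\in J$ into a \emph{demand} $(A_i,B_i)$ with $A_i:=N(x_i)\cap V(C)$, $B_i:=N(x_{i+1})\cap V(C)$, and let $g(J,C)$ be the maximum number of vertices of $C$ covered by $|J|$ pairwise vertex-disjoint paths in $C$ realizing all demands of $J$ (honoring the minimum-length constraints from $t\in\{1,2\}$), or $-\infty$ if infeasible. If $C$ is a clique, then — since any ordering of any vertex set is a path there — realizing $J$ reduces to choosing pairwise distinct endpoint representatives for the demands (a demand may shrink to a single vertex only when its two endpoint sets meet) and then dumping all remaining vertices of $C$ into one demand as internal vertices; this is a bipartite-matching / network-flow computation and is polynomial. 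If $C$ is a tree, the path realizing a demand is the unique tree-path between its chosen endpoints, so realizing $J$ means picking endpoints so that these tree-paths are pairwise vertex-disjoint while maximizing their total length; since the paths are disjoint, at most one of them uses any fixed vertex of $C$, and $g(J,C)$ can be computed by a bottom-up dynamic program on $C$ whose state at a vertex records which single demand (if any) has its path passing upward through that vertex, which side ($A$ or $B$) of that demand is already placed below, and which demands are already fully realized in the subtree — giving $2^{O(k)}$ states per vertex and $2^{O(k)}$ time overall. Having tabulated $g(J,C)$ for all groups $J$ of the fixed type and all components $C$, the best routing of these groups through pairwise distinct components — maximizing $\sum_J g(J,\pi(J))$ over injective assignments $\pi$ of groups to components — is a maximum-weight bipartite matching, computable in polynomial time. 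Taking the best type, and comparing with the ``cycle inside one clique'' value, yields the length of a longest cycle (and the cycle itself is reconstructed from the optimal type, matching, and realizations).

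\textbf{Cost and the hard part.} The kernelization is polynomial and everything after it runs in $2^{O(k\log k)}$ time: finding $X$ costs $2^{O(k\log k)}$; for each of the $2^{O(k\log k)}$ types we build $2^{O(k)}$ tables $g(J,\cdot)$, each over the $O(k^2)$ components at cost $2^{O(k)}$, and solve one bipartite matching. The hard part will be the tree case of the per-component subproblem — packing vertex-disjoint tree-paths with set-constrained endpoints so as to maximize coverage — where the dynamic program, though conceptually standard, needs a carefully chosen state and careful handling of single-vertex paths and of the $t\in\{1,2\}$ corner cases; the remainder is the (routine but somewhat tedious) bookkeeping of the cyclic-order / edge-arc / group decomposition together with a standard bipartite matching.
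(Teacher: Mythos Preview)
Your first step contains a genuine error. You write ``apply the kernelization of \cref{thm:main}, after which we may assume $G$ has $O(k^2)$ vertices'' and then solve \textsc{Longest Cycle} on this reduced graph. But a kernel for \textsc{Cliques or Trees Vertex Deletion} is only guaranteed to preserve the yes/no answer of \emph{that} problem; it does not preserve the length of a longest cycle. Concretely, the reduction rules in Section~\ref{sec:kernelization} include deleting clique or tree components outright (Rule~\ref{rrule:trivial_component_removal}), shrinking large clique-modules (Rule~\ref{rrule:large_clique_module}), and contracting induced $P_5$'s of degree-$2$ vertices (Rule~\ref{rrule:p5_removal}); each of these can strictly decrease the length of the longest cycle. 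So you cannot transfer the answer from the kernel back to $G$.

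The fix is painless, and in fact your argument does not need the $O(k^2)$ bound anywhere. Compute the modulator $X$ on the \emph{original} graph using the known FPT algorithm for \textsc{Cliques or Trees Vertex Deletion} (e.g.\ the $O^*(4^k)$ algorithm of~\cite{jacob2023deletion2}), and run the remainder of your procedure on $G$ itself. For a fixed type, computing $g(J,C)$ over all components costs $2^{O(k)}\cdot n^{O(1)}$ in total (the tree DP is linear in $|C|$ for a fixed state), and the matching instance has at most $k$ groups on one side and at most $n$ components on the other, still polynomial. The overall bound $2^{O(k\log k)}\cdot n^{O(1)}$ is unchanged.

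Apart from that, your approach is correct and genuinely different from the paper's. The paper avoids the assignment step by proving (Lemma~\ref{lem:labeled_components}) that for each arc it suffices to try only the top $k$ components ranked by the best single $v_i$--$v_{i+1}$ path they support, and then brute-forces the tuple $(C_1,\dots,C_l)\in\{1,\dots,k\}^l$. You instead guess only the coarser information of which arcs share a component (the partition into groups) and delegate the choice of component to a maximum-weight bipartite matching. Your route trades the top-$k$ lemma for an extra matching computation; both give $2^{O(k\log k)}$. The clique and tree subroutines you sketch coincide with the paper's (flag-guessing plus matching for cliques; a subset-indexed DP with an ``open path through $v$'' marker for trees).
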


Note that this result generalizes, without losing time complexity, the FPT algorithm for \textsc{Hamiltonian Cycle} parameterized by the cluster vertex deletion number given by Doucha and Kratochv\'{i}l~\cite{doucha2012cluster} in the following two aspects.
First, we solve \textsc{Longest Cycle}, which is a generalization of \textsc{Hamiltonian Cycle}.
Second, our algorithm is parameterized by cliques or trees vertex deletion number, which is a more general parameter.
We further note that it is not hard to see that the clique-width of graphs with bounded cliques or trees vertex deletion number is bounded.

\begin{figure}[t]
  \centering
  \includegraphics[width=10cm]{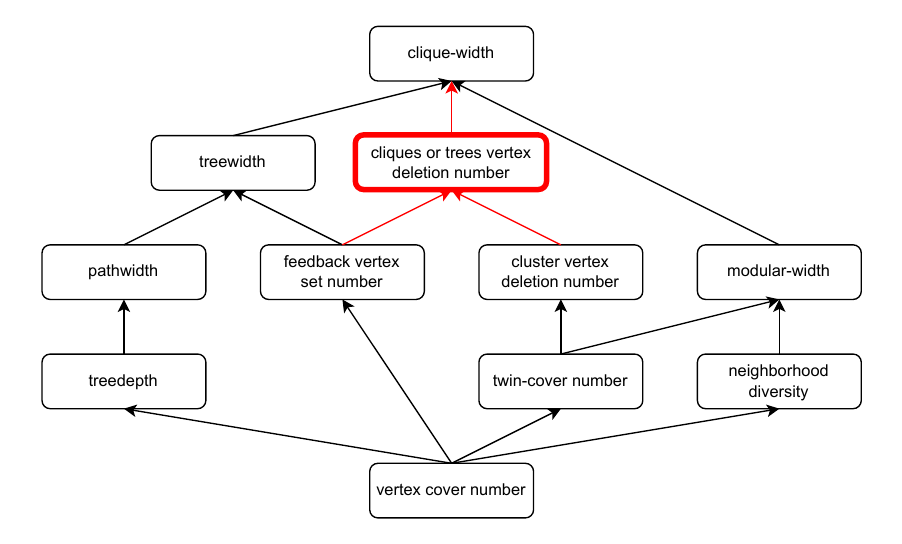}
  \caption{Diagram of structural parameters including the cliques or trees vertex deletion number.}
  \label{fig:diagram}
\end{figure}

\subsection{Related Work}
The literature on $\Pi$-deletion problems has studied several variants of \textsc{Feedback Vertex Set} and \textsc{Cluster Vertex Deletion}.  
Examples of extensions of \textsc{Feedback Vertex Set} are the cases where $\Pi$ consists of graphs with treewidth at most $\eta$~\cite{fomin2012planar,kim2015linear} and graphs that are at most $l$ edges away from forests~\cite{RaiS18}.  
Examples of variants of \textsc{Cluster Vertex Deletion} are the cases where $\Pi$ consists of $s$-plexes~\cite{liu2012editing}, $s$-clubs~\cite{guo2010more}, and graphs with a small dominating set number~\cite{bentert2024breaking}.  

Jacob et al.~\cite{jacob2023deletion1} introduced the notion of \emph{deletion to scattered graph classes}.  
In that paper, they obtained two general tractability results, stating that $(\Pi_1,\dots, \Pi_p)$\textsc{-Vertex Deletion} (i) is fixed-parameter tractable when each $\Pi_i$\textsc{-Vertex Deletion} is fixed-parameter tractable, and (ii) admits a $2^{poly(k)}$-time algorithm when each $\Pi_i$ is defined by a finite set of forbidden subgraphs.  
Jansen et al.~\cite{JansenKW25} improved the time complexity of result (ii) to $2^{O(k)}$.  
In the subsequent work~\cite{jacob2023deletion2}, Jacob et al. considered several specific cases and obtained efficient FPT algorithms and approximation algorithms.  
In particular, for \textsc{Cliques or Trees Vertex Deletion}, they obtained an $O^*(4^k)$-time FPT algorithm and a polynomial-time $4$-approximation algorithm.  
Jacob et al.~\cite{jacob2024kernel} provided a kernel with $O(k^5)$ vertices for this problem.  
Tsur~\cite{tsur2025faster} improved both of FPT and kernelization results by presenting a deterministic algorithm running in $O(3.46^k)$ time, a randomized algorithm running in $O(3.103^k)$ time, and a kernel with $O(k^4)$ vertices.  

Several studies have investigated the parameterized complexity of \textsc{Longest Cycle} and its special case, \textsc{Hamiltonian Cycle}, under structural parameterizations.  
Parameterized by the cluster vertex deletion number, Doucha and Kratochv\'{i}l~\cite{doucha2012cluster} presented an FPT algorithm for \textsc{Hamiltonian Cycle}. 
Parameterized by the clique-width, Fomin et al.~\cite{fomin2014almost,fomin2018clique} showed that no $n^{o(k)}$-time algorithm for \textsc{Hamiltonian Cycle} exists unless ETH fails. Bergougnoux et al.~\cite{bergougnoux2020optimal} proved this bound is tight by giving an $n^{O(k)}$-time algorithm.
The cases parameterized by treewidth~\cite{cygan2022solving}, pathwidth~\cite{cygan2018fast}, twin-cover number~\cite{ganian2015improving}, and proper interval deletion number~\cite{golovach2020graph} are also investigated, as well as directed tree-width~\cite{lampis2011algorithmic} and directed feedback vertex set number~\cite{Jacob0Z23} for the directed version.  

\subsection{Technical Overview}
Here, we provide a technical overview of Theorems~\ref{thm:main} and~\ref{thm:cycle}.  
Section~\ref{sec:overview_kernel} also contains a brief explanation of how our kernelization algorithm differs from the existing kernelization algorithms by Jacob et al.~\cite{jacob2024kernel} and Tsur~\cite{tsur2025faster}.  

\subsubsection{Overview of Theorem~\ref{thm:main}}\label{sec:overview_kernel}

Here, we briefly explain our idea toward Theorem~\ref{thm:main}.  
Our approach, at the highest level, is based on the following simple observation.  
Let $v$ be a vertex, and let $N_G(v):=\{u\in V\colon (u,v)\in E\}$ be the set of neighbors of $v$.  
Assume there is a feasible solution\footnote{We say a solution $X$ is \emph{feasible} if $|X|\leq k$ and each connected component of $G-X$ is either a clique or a tree.} $X$ such that $v$ is in a connected component of $G-X$ that is a clique.  
Then, the neighbors of $v$ in $G-X$ induce a clique.  
In particular, $N_G(v)$ contains a clique of size at least $|N_G(v)\setminus X|\geq |N_G(v)|-k$.  
Similarly, assume there is a feasible solution $X$ such that $v$ is in a connected component of $G-X$ that is a tree.  
Then, the neighbors of $v$ in $G-X$ induce an independent set.  
In particular, $N_G(v)$ contains an independent set of size at least $|N_G(v)\setminus X|\geq |N_G(v)|-k$.  

The core observation is that if $|N_G(v)|$ is large, say at least $2k+2$, these two situations cannot occur simultaneously, as a clique and an independent set cannot share an edge.  
In other words, if the degree of $v$ is large, we can determine that $v$ is either  
\begin{itemize}
    \item always in a clique component of $G-X$ unless $v\in X$, or  
    \item always in a tree component of $G-X$ unless $v\in X$,  
\end{itemize}
for all feasible solutions $X$.  

In our algorithm, we first partition the vertices into three subsets. Let $c=7$. Set $V_{\mathrm{ld}}$ of vertices with a high degree ($ > ck$) and dense neighbors, set $V_{\mathrm{ls}}$ of vertices with a high degree ($ > ck$) and sparse neighbors,  
and set $V_{\mathrm{small}}$ of vertices with a low degree ($\leq ck$).  
Using the above observation, we can claim that, for any feasible solution $X$, vertices in $V_{\mathrm{ld}}$ cannot be in a tree component of $G-X$, and vertices in $V_{\mathrm{ls}}$ cannot be in a clique component of $G$ of $G-X$.  
This observation, very roughly speaking, enables us to apply reduction rules to $V_{\mathrm{ld}}$ as if we were solving \textsc{Cluster Vertex Deletion}.
Similarly, we can apply reduction rules to $V_{\mathrm{ls}}$ as if we were solving \textsc{Feedback Vertex Set}.  
Furthermore, we can reduce the number of vertices in $V_{\mathrm{small}}$ using the fact that they have small degrees.  

Here we go into a little more detail.
Our kernelization algorithm proceeds as follows.  
First, we reduce the degrees of vertices in $V_{\mathrm{ls}}$ into $O(k)$.
To acheive this, we directly apply reduction rules used in the celebrated quadratic kernelization of \textsc{Feedback Vertex Set} by Thomass\'{e}~\cite{thomasse20104}.
Note that this direct applicability is already a benefit of partitioning the vertex set.
Indeed, the original kernelization algorithm by Jacob et al.~\cite{jacob2024kernel} also relied on Thomass\'{e}’s quadratic kernelization in its ``tree part'' (and Tsur's improvement~\cite{tsur2025faster} did not touch this part), but extended the algorithm and analysis, including the use of a new version of the expansion lemma by Fomin et al.~\cite{fomin2019subquadratic}.
In contrast, ours is almost identical to Thomass\'{e}’s, and its analysis is also nearly the same except for a few technical details, making our approach significantly simpler.  

Next, we reduce the number of vertices in $V_{\mathrm{ld}}$.
The main technical contribution of this paper lies in this part, since the kernelization algorithms by Jacob et al.~\cite{jacob2024kernel} and Tsur~\cite{tsur2025faster} bound the number of vertices in the ``clique part'' by $O(k^5)$ and $O(k^4)$, respectively, which dominates the overall kernel size.  
Both of those algorithms are based on a marking procedure, whereas ours is not.  
As in those algorithms, we first compute a constant factor approximate solution $S$ in polynomial time and reduce the number of clique components of $G-S$ by $O(k)$ using the textbook reduction~\cite{fomin2019kernelization} for \textsc{Cluster Vertex Deletion} using the expansion lemma.
Now, we can argue that there are only $O(k^2)$ vertices in $V_{\mathrm{ld}}$ that are adjacent to some vertex outside $V_{\mathrm{ld}}$ as follows. 
First, any such vertex must
\begin{itemize}
    \item[(i)] belongs to $S$,  
    \item[(ii)] be adjacent to a vertex in $S\setminus V_{\mathrm{ld}}$, or
    \item[(iii)] belongs to a clique component of $G-S$ that contains a vertex outside $V_{\mathrm{ld}}$.
\end{itemize}
The number of vertices satisfying (i) is $O(k)$ by definition.  
Since every vertex outside $V_{\mathrm{ld}}$ has degree $O(k)$, the number of vertices satisfying (ii) is bounded by $O(k^2)$.  
Furthermore, each clique component that appears in (iii) contains at most $O(k)$ vertices because it contains a vertex with degree $O(k)$. Together with the fact that the number of clique components is $O(k)$, the number of vertices satisfying (iii) is bounded by $O(k^2)$.
Now, we concentrate on reducing the number of vertices in $V_{\mathrm{ld}}$ that are adjacent only to vertices in $V_{\mathrm{ld}}$ by $O(k^2)$.
To do this, we first borrow ideas from the quadratic kernelization of the \textsc{3-Hitting Set} by Abu-Khzam~\cite{abu2010kernelization} and construct a list $\mathcal{P}$ of induced $P_3$s (that are, induced subgraphs that are paths of length two) such that no two different $P_3$ share more than one vertex.
A standard argument similar to that used in Buss Kernel~\cite{buss1993nondeterminism} for \textsc{Vertex Cover} reduces the size of $\mathcal{P}$ to $O(k^2)$.
Let $V_{\mathrm{ldmod}}:=V_{\mathrm{ld}}\setminus \bigcup_{P\in \mathcal{P}}P$.
By introducing additional structural observations, we can state that 
\begin{itemize}
    \item there are at most $O(k)$ connected components in $V_{\mathrm{ldmod}}$, and
    \item each connected component $C$ of $V_{\mathrm{ldmod}}$ is a \emph{clique-module} of $V_{\mathrm{ld}}$, that is, the set $N_{V_{\mathrm{ld}}}(v)\cup \{v\}$ is same for all $v\in C$ and includes $C$.
\end{itemize}
The important observation is that if $G$ contains a clique-module of $V$ of size at least $k+4$, then one of its vertices can be safely removed.  
This rule seems to bound the size of each clique-module by $O(k)$ and bound the size of $V_{\mathrm{ldmod}}$ by $O(k^2)$. 
However, it is still insufficient because we want clique-modules of $V$ for the reduction, whereas the observation above only provides clique-modules of $V_{\mathrm{ld}}$.
Nevertheless, if a clique-module in $V_{\mathrm{ld}}$ consists only of vertices with no neighbor outside $V_{\mathrm{ld}}$, then it is also a clique-module of $V$. 
Therefore, we can reduce the number of vertices in each clique-module that have no neighbor outside $V_{\mathrm{ld}}$ to $O(k)$, obtaining the desired bound. 
(Precisely speaking, we need to perform a slightly more careful argument to deal with multi-edges that may be introduced in reductions for $V_{\mathrm{ls}}$.)

We reduce the number of vertices in $V\setminus V_{\mathrm{ld}}$ to $O(k^2)$ to complete the kernelization. 
By an argument similar to that for (iii) above, we can bound the number of vertices in $V\setminus V_{\mathrm{ld}}$ that belong to clique components of $G-S$ by $O(k^2)$.  
Therefore, it remains to bound the number of vertices that belong to tree components.
To do this, we apply a few reduction rules based on local structure. Most of these rules appear in~\cite{jacob2024kernel}, but one is new. 
We apply the following standard argument used in the kernelization of \textsc{Feedback Vertex Set}:  
If a graph with minimum degree at least $3$ can be turned into a forest by removing $k$ vertices of degree at most $t$, then the number of vertices in the original graph was $O(tk)$.  
However, in this case, we cannot completely eliminate vertices of degree $1$ or $2$, so this argument cannot be applied directly.  
Nevertheless, we can bound the number of such low-degree vertices using the numbers of vertices of degree $3$ and at least $4$.  
By extending the argument for \textsc{Feedback Vertex Set} using these fine-grained bounds, we can bound the number of vertices in $V\setminus V_{\mathrm{ld}}$ by $O(k^2)$, which completes the analysis.  

\subsection{Overview of Theorem~\ref{thm:cycle}}
Here, we briefly explain our ideas toward Theorem~\ref{thm:cycle}.  
Let $G=(V,E)$ be a graph and $X\subseteq V$ be a given vertex subset with $|X| = k$ such that each connected component of $G-X$ is either a clique or a tree.  
We begin by brute-force the order in which the desired cycle $P$ visits the vertices in $X$.  
If $P$ is disjoint from $X$, the problem is trivial, so we may assume that $P$ intersects $X$, and denote the vertices in $X\cap P$ by $v_1,\dots, v_l$ in the order they appear along $P$.  
Let $P_i$ be the $v_i,v_{i+1}$-path appearing on $P$ (indices modulo $l$).  
Then, unless $P_i$ has length $1$, the internal vertices of $P_i$ belong to a single connected component $C_i$ of $G-X$.  

The next step is limiting the candidates of $C_i$. Here, we can state that, as candidates of $C_i$, it is sufficient to consider the top $k$ components that admit the longest $v_i,v_{i+1}$-paths.
Then, we can brute-force over all tuples $(C_1,\dots,C_l)$ within a total cost of $2^{O(k\log k)}$.  
Now, the problem is reduced to solving the following problem for each connected component $H$ of $G-X$, where we set $J_H := \{i \colon C_i = H\}$.
\begin{quote}
Given a list of vertex subset pairs $\{(V_{i1} := N(v_i)\cap C_i,\ V_{i2} := N(v_{i+1})\cap C_i)\}_{i\in J_H}$, compute the maximum total length of vertex-disjoint paths $\{P_i\}_{i\in J_H}$ such that each $P_i$ is a path from a vertex in $V_{i1}$ to a vertex in $V_{i2}$.
\end{quote}

We give FPT algorithms for this problem on both cliques and trees.
First, we explain the algorithm for cliques.
We brute-force over flags $f \in \{0,1\}^{J_H}$, where $f_i = 0$ represents that $P_i$ consists of a single vertex, and $f_i = 1$ indicates that $P_i$ contains at least two vertices.  
The problem of whether a family of paths satisfying the conditions defined by $f$ exists can be reduced to the bipartite matching problem and, thus, solved in polynomial time.  
If such a family of paths exists for some flag $f$ such that $f_i = 1$ holds for some $i \in J_H$, we can extend the family to cover all vertices in $H$ by appropriately adding internal vertices to the paths. Thus, in this case, the answer is $|H|-|J_H|$.
If such a family of paths exists only for $f = (0, \dots, 0)$, the answer is zero.  
If no such family of paths exists for any $f$, then the problem is infeasible. 
The time complexity is $O^*(2^k)$.

Now, we explain the algorithm for trees.
Our algorithm uses dynamic programming.
We omit the formal details here, but the intuition is as follows.
Regard $H$ as a rooted tree.  
For a vertex $v$ and a set $Z \subseteq J_H$, we define $\DP[v][Z]$ denote the maximum total length of a family of paths $\{P_i\}_{i \in Z}$ that can be packed into the subtree rooted at $v$.
We compute these values in a bottom-up manner.
We can analyze that such dynamic programming can be implemented to work in $O^*(3^k)$ time. 

\subsection{Organization}

The rest of this paper is organized as follows.  
In Section~\ref{sec:prelim}, we introduce basic notation and well-known techniques in the literature on kernelization.  
In Section~\ref{sec:kernelization}, we prove Theorem~\ref{thm:main} by constructing a kernel for \textsc{Cliques or Trees Vertex Deletion} with $O(k^2)$ vertices. 
In Section~\ref{sec:cycle}, we prove Theorem~\ref{thm:cycle} by presenting an FPT algorithm for \textsc{Longest Cycle} parameterized by the cliques or trees vertex deletion number.  
\section{Preliminaries}\label{sec:prelim}

In this paper, the term \emph{graph} refers to an undirected graph, which does not contain self-loops but may contain multi-edges.
For a graph $G=(V,E)$ and a vertex $v\in V$, we call a vertex belonging to $N_G(v):=\{u\in V\colon (u,v)\in E\}$ a \emph{neighbor} of $v$ in $G$, and the number of edges incident to $v$, denoted $d_G(v)$, is referred to as the \emph{degree} of $v$.
Note that $|N_G(v)|$ and $d_G(v)$ may differ due to multi-edges.
Moreover, we denote by $\rho_G(v)$ the size of the set $\{\{u_1,u_2\}\subseteq N_G(v)\colon (u_1,u_2)\in E\}$. 
In other words, $\rho_G(v)$ represents the number of edges connecting the neighbors of $v$ in $G$, where multi-edges are counted as a single edge.
When the context is clear, we omit the subscript $G$ and write $N(v)$, $d(v)$, and $\rho(v)$ for simplicity.
A vertex $v$ with $d(v)=1$ is called a \emph{pendant}.

For a vertex subset $Z\subseteq V$, we denote $E(Z):=\{e\in E\colon e\subseteq Z\}$. 
The \emph{subgraph of $G$ induced by $Z$} is the graph $(Z,E(Z))$.
For simplicity, when there is no risk of confusion, we identify the vertex set $Z$ with the subgraph induced by it: that is, when we refer to the ``graph $Z$'' for $Z\subseteq V$, we mean the subgraph induced by $Z$.
Moreover, for a vertex subset $Z\subseteq V$, we write $G-Z$ to denote the subgraph induced by $V\setminus Z$. 
When $Z$ consists of a single vertex $z$, we abbreviate $G-\{z\}$ as $G-z$.
Similarly, we denote by $G-e$ the graph obtained by removing an edge $e\in E$ from $G$.
A vertex subset $Z\subseteq V$ induces a \emph{clique} if there is exactly one edge between any two different vertices in $Z$, an \emph{independent set} if there is no edge between any two vertices in $Z$, and a \emph{tree} if $Z$ is connected and contains no cycle.  

For a vertex $v\in V$ and $t\in \mathbb{Z}_{\geq 1}$, a \emph{$v$-flower} of \emph{order} $t$ is a set of $t$ cycles passing through $v$ such that no two cycles share a common vertex other than $v$. 
The following is well-known.
\begin{lemma}[\rm{Gallai's Theorem~\cite{cygan2015parameterized,fomin2019kernelization,thomasse20104}}]\label{lem:gallai}
Given an undirected graph $G=(V,E)$, a vertex $v\in V$, and an integer $t\in \mathbb{Z}_{\geq 1}$, there is a polynomial-time algorithm that computes either
\begin{itemize}
    \item a $v$-flower of order $t+1$, or
    \item a vertex set $B\subseteq V\setminus \{v\}$ of size at most $2t$ such that $G-B$ contains no cycle passing through $v$.
\end{itemize}
\end{lemma}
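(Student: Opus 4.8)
The plan is to reduce the statement to Gallai's classical min--max theorem on vertex-disjoint $T$-paths. First I would \emph{split} the vertex $v$: let $G'$ be the graph on vertex set $(V\setminus\{v\})\cup T$, where $T:=\{t_e : e\in E,\ v\in e\}$ has one fresh vertex per edge incident to $v$, and each $t_e$ is adjacent only to the endpoint of $e$ other than $v$. Since every vertex of $T$ then has degree exactly $1$, no $T$-path (a path with both endpoints in $T$) can have an internal vertex in $T$, and I claim that $v$-flowers of order $r$ in $G$ are in natural correspondence with families of $r$ pairwise vertex-disjoint $T$-paths in $G'$. Indeed, a cycle through $v$ uses two \emph{distinct} edges $e\neq f$ at $v$ together with a path in $G-v$ between their other endpoints, and this yields the $T$-path $t_e,\dots,t_f$; conversely such a $T$-path lifts to a cycle through $v$, and disjoint $T$-paths use distinct terminals, hence distinct edges at $v$, so they lift to cycles meeting only in $v$. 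This also disposes of multi-edges uniformly: parallel edges between $v$ and $u$ give two terminals attached to $u$, and the length-$2$ cycle $v,u,v$ is the $T$-path through $u$.

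Next I would apply the constructive form of Gallai's theorem on $T$-paths: in polynomial time one obtains either $t+1$ pairwise vertex-disjoint $T$-paths, or a set $A\subseteq V(G')$ attaining the minimum of $|A|+\sum_{K}\lfloor |T\cap V(K)|/2\rfloor$ over connected components $K$ of $G'-A$, whose value equals the maximum number $\nu\le t$ of disjoint $T$-paths. From such an $A$ I would build a transversal: keep $A$, and for each component $K$ with $m_K:=|T\cap V(K)|\ge 2$ add $m_K-1$ further vertices of $K$, chosen so that each remaining fragment of $K$ carries at most one terminal — always possible since $K$ is connected. Using $m_K-1\le 2\lfloor m_K/2\rfloor$, the resulting set $B'$ meets every $T$-path and has $|B'|\le 2\nu\le 2t$. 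Finally I would transfer $B'$ back to $G$: replacing any $t_e\in B'$ by the unique neighbour of $t_e$ (an original vertex, different from $v$) never increases $|B'|$ and keeps it a transversal, so we may assume $B'\subseteq V\setminus\{v\}$; then $G-B'$ contains no cycle through $v$, as such a cycle would lift to a $T$-path disjoint from $B'$. A flower of order $t+1$ is returned in the first case, $B:=B'$ in the second.

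The main obstacle is exactly the ingredient I am treating as a black box: Gallai's min--max theorem for vertex-disjoint $T$-paths together with its polynomial-time algorithm. This is where all the real work sits. A naive transversal — say the $2r$ neighbours of $v$ lying on the cycles of a maximum flower of order $r$ — can fail (take $v$ adjacent to $a,b,p,q$ and a single further vertex $w$ adjacent to all of $a,b,p,q$: the maximum flower has order $1$, yet deleting the two neighbours on its cycle leaves the cycle $v,p,w,q,v$), while the factor $2$ in the bound is tight (take $v$ joined to a clique on $2t+1$ vertices). Everything else — the splitting reduction, the flower/$T$-path correspondence, extracting a size-$2t$ transversal from a Gallai minimizer, and the multi-edge bookkeeping — is routine; in the cited references one simply invokes Gallai's theorem in a standard textbook form and carries out the reduction above.
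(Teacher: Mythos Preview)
The paper does not give its own proof of this lemma; it is stated in the preliminaries as a known result with citations to standard references. Your proposal reproduces precisely the textbook argument found in those references: split $v$ into pendant terminals, identify $v$-flowers with vertex-disjoint $T$-paths, invoke the constructive form of Gallai's $T$-path theorem, and extract a transversal of size at most $2\nu$ from a Gallai minimizer. The reasoning is correct, including the handling of multi-edges and the final replacement of any terminal in $B'$ by its unique neighbour; the one step you leave slightly informal (choosing $m_K-1$ vertices in each component $K$ so that each remaining piece carries at most one terminal) is easily made explicit by deleting the unique neighbours of all but one terminal in $K$, which works because every terminal is a pendant.
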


For vertex subsets $K,L$ and an integer $q\in \mathbb{Z}_{\geq 1}$, an edge set $M$ is a \emph{$q$-expansion of $K$ into $L$} if 
\begin{itemize}
    \item exactly $q$ edges of $M$ are incident to each vertex in $K$, and
    \item exactly one edge of $M$ is incident to each vertex in $L$.
\end{itemize}
The following lemma is also well-known in the literature on kernelization.

\begin{lemma}[\rm{Expansion Lemma~\cite{cygan2015parameterized,fomin2019kernelization}}]
Let $H:=(K\dot{\cup} L, E)$ be a bipartite graph with vertex bipartition $(K,L)$ and $q\in \mathbb{Z}_{\geq 1}$. 
Assume $|L|\geq q|K|$ and $L$ contains no isolated vertex.
Then, there is a polynomial-time algorithm that computes a pair of non-empty vertex subsets $K'\subseteq K$ and $L'\subseteq L$ such that 
\begin{itemize}
    \item $N_H(L')\subseteq K'$, and
    \item there exists a $q$-expansion of $K'$ into $L'$.
\end{itemize}
\end{lemma}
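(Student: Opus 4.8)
The plan is to prove the Expansion Lemma by induction on $|K|$ (assuming $K\neq\emptyset$; note that if $K=\emptyset$ then the hypothesis that $L$ has no isolated vertex forces $L=\emptyset$, so this case is degenerate), after reducing to a question about saturating matchings. The key construction is the blow-up $H_q$: replace each $u\in K$ by $q$ identical copies $u^{(1)},\dots,u^{(q)}$, each adjacent to $N_H(u)$, and leave $L$ untouched; write $K_q$ for the blown-up left side, so $|K_q|=q|K|\le|L|$ and $L$ still has no isolated vertex in $H_q$. A routine bijection shows that a $q$-expansion of $K$ into a subset $L'\subseteq L$ in $H$ is exactly a matching of $H_q$ saturating $K_q$ (with $L'$ the set of saturated $L$-vertices). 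So the first step is to compute a maximum matching $M$ of $H_q$ in polynomial time and branch on whether it saturates $K_q$.

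If $M$ saturates $K_q$, the lemma follows directly: set $K':=K$ and let $L'$ be the set of $L$-vertices matched by $M$. Then $N_H(L')\subseteq K=K'$ holds trivially because $H$ is bipartite, $M$ projects to a $q$-expansion of $K'$ into $L'$, and both sets are non-empty (each $u\in K$ contributes $q\ge 1$ distinct matched neighbours, so $|L'|\ge q$).

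If $M$ does not saturate $K_q$, then by the defect form of Hall's theorem there is a non-empty $A\subseteq K_q$ with $|N_{H_q}(A)|<|A|$, and such an $A$ can be read off in polynomial time from the alternating-path structure of $M$. Since all copies of a vertex share the same neighbourhood, taking the ``copy-closure'' of $A$ only increases $|A|$ and leaves $N_{H_q}(A)$ unchanged, so I may assume $A$ consists of all copies of some non-empty $K_1\subseteq K$, whence $|N_H(K_1)|<q|K_1|$. I then recurse on $H':=H-K_1-N_H(K_1)$ with left side $K'':=K\setminus K_1$ and right side $L'':=L\setminus N_H(K_1)$. Three things must be checked: $|L''|=|L|-|N_H(K_1)|>q|K|-q|K_1|=q|K''|$, so the size hypothesis survives; every $\ell\in L''$ has all its $H$-neighbours in $K''$ (a neighbour in $K_1$ would have put $\ell$ into $N_H(K_1)$), so $L''$ has no isolated vertex in $H'$ and $N_{H'}(\ell)=N_H(\ell)$; and $K''\neq\emptyset$, since $K_1=K$ would give $L=N_H(K)$ and thus $|L|=|N_H(K_1)|<q|K|$, contradicting $|L|\ge q|K|$. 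Applying the induction hypothesis to $H'$ (valid as $|K''|<|K|$) yields non-empty $K'\subseteq K''$ and $L'\subseteq L''$ with a $q$-expansion of $K'$ into $L'$ in $H'$ and $N_{H'}(L')\subseteq K'$; the equality $N_{H'}(\ell)=N_H(\ell)$ for $\ell\in L''$ lifts both conclusions verbatim to $H$.

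The only delicate point is keeping the recursion inside the lemma's hypotheses while guaranteeing progress: the size gap must strictly improve, the ``no isolated vertex'' condition and the agreement of neighbourhoods between $H'$ and $H$ on the surviving $L$-vertices must be preserved, and $K_1$ must be a non-empty proper subset of $K$ so that $|K|$ strictly decreases (in fact the branch producing $K_1$ needs $|K|\ge 2$, so the recursion bottoms out at $|K|=1$, where a saturating matching always exists because then $N_H(u)=L$ for the unique $u\in K$). Each round costs one maximum-matching computation plus the extraction of a Hall violator from the alternating forest, and there are at most $|K|$ rounds, so the overall algorithm runs in polynomial time.
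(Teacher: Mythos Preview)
Your argument is correct and is essentially the standard textbook proof of the Expansion Lemma (via the $q$-fold blow-up, Hall's theorem, and recursion on a Hall violator). Note, however, that the paper does not give its own proof of this statement: it is quoted as a known tool with citations to \cite{cygan2015parameterized,fomin2019kernelization}, so there is no in-paper proof to compare against. Your write-up would serve perfectly well as a self-contained justification; the only cosmetic point is that the base case $|K|=1$ is already subsumed by the ``$M$ saturates $K_q$'' branch, so you need not single it out.
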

\section{Quadratic Kernel for \textsc{Cliques or Trees Vertex Deletion}}\label{sec:kernelization}
Let $G=(V,E)$ be a graph and $k\in \mathbb{Z}_{\geq 1}$.
In this section, we prove Theorem~\ref{thm:main} by constructing a quadratic kernel for \textsc{Cliques or Trees Vertex Deletion}.

\subsection{Partitioning Vertices}\label{sec:partition}
We begin by classifying the vertices into three categories.
Let 
\begin{align*}
    V_{\mathrm{ls}}&:=\left\{v\in V\colon |N(v)| > 7k\land \rho(v)\leq \frac{|N(v)|(|N(v)|-1)}{4}\right\},\\
    V_{\mathrm{ld}}&:=\left\{v\in V\colon |N(v)| > 7k\land \rho(v) > \frac{|N(v)|(|N(v)|-1)}{4}\right\},\\
    V_{\mathrm{small}}&:=\left\{v\in V\colon |N(v)| \leq 7k\right\}.
\end{align*}
The vertices belonging to the first, second, and third groups are referred to as \emph{large-sparse}, \emph{large-dense}, and \emph{small}, respectively.
The following lemma states that, for any feasible solution $X$, large-sparse vertices are included in either $X$ or a tree component of $G-X$.

\begin{lemma}\label{lem:sparse_is_tree}
Let $v\in V_{\mathrm{ls}}$. Then, for any feasible solution $X$ with $v\not \in X$, $v$ is in a tree component of $G-X$.
\end{lemma}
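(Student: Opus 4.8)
The plan is to use the dichotomy observation sketched in the technical overview: a vertex in a clique component forces its neighborhood (minus the solution) to be a clique, and a vertex in a tree component forces its neighborhood (minus the solution) to be an independent set. I would argue by contradiction. Suppose $v \in V_{\mathrm{ls}}$ and $X$ is a feasible solution with $v \notin X$, but $v$ lies in a connected component $C$ of $G-X$ that is a clique. Then $N(v) \setminus X \subseteq C$, and since $C$ is a clique, the set $N(v) \setminus X$ induces a clique in $G$; in particular the number of edges among $N(v)$ is at least $\binom{|N(v)\setminus X|}{2}$.

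The next step is a counting estimate. Since $|X| \le k$ and, by definition of $V_{\mathrm{ls}}$, $|N(v)| > 7k$, we have $|N(v) \setminus X| \ge |N(v)| - k > 6k$, so in particular $|N(v)\setminus X| \ge \frac{6}{7}|N(v)|$. Then
\begin{align*}
\rho(v) \;\ge\; \binom{|N(v)\setminus X|}{2} \;=\; \frac{|N(v)\setminus X|(|N(v)\setminus X| - 1)}{2} \;\ge\; \frac{(|N(v)|-k)(|N(v)|-k-1)}{2}.
\end{align*}
I would then check that this lower bound exceeds $\frac{|N(v)|(|N(v)|-1)}{4}$ whenever $|N(v)| > 7k$, contradicting the membership condition $\rho(v) \le \frac{|N(v)|(|N(v)|-1)}{4}$ that defines $V_{\mathrm{ls}}$. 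Writing $n := |N(v)|$, the desired inequality is $2(n-k)(n-k-1) > n(n-1)$, i.e. $n^2 - (4k+3)n + (2k^2 + 2k) > 0$; substituting the extreme case $n = 7k+1$ (and using that the quadratic in $n$ is increasing for $n$ in this range) gives $49k^2 + 14k + 1 - 28k^2 - 25k - 3 + 2k^2 + 2k = 23k^2 - 9k - 2 > 0$ for all $k \ge 1$, which completes the contradiction. Hence $v$ cannot lie in a clique component, so it lies in a tree component of $G-X$ (the only other possibility for a component of $G-X$).

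The main obstacle is purely bookkeeping: making sure the constant $7$ is large enough that the quadratic inequality holds for every $k \ge 1$, and being careful that multi-edges do not inflate $\rho(v)$ in a way that breaks the argument — but since $\rho(v)$ counts each neighbor-pair at most once, the clique lower bound $\binom{|N(v)\setminus X|}{2}$ is valid regardless of multiplicities, so no issue arises there. One subtlety worth stating explicitly is why $N(v)\setminus X$ must lie entirely inside $C$: every neighbor $u$ of $v$ with $u \notin X$ is adjacent to $v$ in $G-X$, hence in the same component $C$ as $v$. I would present the computation as a short displayed inequality and remark that the constant $7$ is chosen precisely so that the bound clears $\frac{n(n-1)}{4}$; the symmetric statement (large-dense vertices lie in clique components) will be proved analogously in the companion lemma.
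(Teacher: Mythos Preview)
Your argument is correct and essentially identical to the paper's proof: both assume $v$ lies in a clique component, deduce that $N(v)$ contains a clique of size at least $|N(v)|-k$, and then verify that $\binom{|N(v)|-k}{2}$ already exceeds the threshold $\tfrac{|N(v)|(|N(v)|-1)}{4}$ defining $V_{\mathrm{ls}}$. One small arithmetic slip: expanding $2(n-k)(n-k-1)-n(n-1)$ gives $n^2-(4k+1)n+2k^2+2k$, not $n^2-(4k+3)n+2k^2+2k$; with the correct coefficient the value at $n=7k+1$ is $23k^2+5k$, which is still positive, so your conclusion stands.
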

\begin{appendixproof}
Assume that $v$ is in a clique component $C$ of $G-X$.
Then, at least $|N(v)|-|X|\geq |N(v)|-k$ neighbors of $v$ belong to $C$ and thus, $N(v)$ contains a clique of size $|N(v)|-k$. 
Therefore, we have 
\begin{align*}
    \rho(v)\geq \frac{(|N(v)|-k)(|N(v)|-k-1)}{2}.
\end{align*}
However, we have
\begin{align*}
    &\rho(v)-\frac{(|N(v)|-k)(|N(v)|-k-1)}{2}
    \leq \frac{|N(v)|(|N(v)|-1)-2(|N(v)|-k)(|N(v)|-k-1)}{4}\\
    &\quad = \frac{-|N(v)|^2+(4k+1)|N(v)|-2k(k+1)}{4}
    \leq \frac{-49k^2+(28k^2+7k)-(2k^2+2k)}{4}\\
    &\quad = \frac{-23k^2+5k}{4} < 0,
\end{align*}
where the first and second inequalities follow from the assumption that $v$ is a large-sparse vertex.
\end{appendixproof}

The following lemma states a result symmetric to Lemma~\ref{lem:sparse_is_tree}, that is, for any feasible solution $X$, large-dense vertices are included in either $X$ or a clique component of $G-X$.

\begin{lemma}\label{lem:dense_is_clique}
Let $v\in V_{\mathrm{ld}}$. Then, for any feasible solution $X$ with $v\not \in X$, $v$ is in a clique component of $G-X$.
\end{lemma}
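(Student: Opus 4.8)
The plan is to mirror the proof of Lemma~\ref{lem:sparse_is_tree} almost verbatim, exchanging the roles of "clique component" and "tree component" and exploiting that a tree on $m$ vertices has only $m-1$ edges. First I would assume, for contradiction, that $v \in V_{\mathrm{ld}}$ lies in a tree component $T$ of $G - X$ for some feasible solution $X$ with $v \notin X$. Since $|X| \leq k$, at least $|N(v)| - k$ neighbors of $v$ survive in $G - X$, and all of these lie in $T$ because $T$ is the connected component containing $v$. Hence $N(v)$ contains an induced subgraph that is a subgraph of the tree $T$ on at least $|N(v)| - k$ vertices, so the number of edges among the neighbors of $v$ satisfies $\rho(v) \leq |N(v)| - k - 1$ (a forest on $m$ vertices has at most $m-1$ edges; here the relevant vertex set could in principle split into several subtrees of $T$, but the bound $\rho(v) \le (\text{number of such neighbors}) - 1 \le |N(v)| - k - 1$ still holds, and even the weaker $\rho(v) \le |N(v)|$ would suffice).

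Next I would derive the contradiction with the defining inequality of $V_{\mathrm{ld}}$, namely $\rho(v) > \tfrac{|N(v)|(|N(v)|-1)}{4}$. Combining the two bounds gives
\begin{align*}
    \frac{|N(v)|(|N(v)|-1)}{4} < \rho(v) \leq |N(v)| - k - 1 \leq |N(v)|.
\end{align*}
This forces $|N(v)|(|N(v)|-1) < 4|N(v)|$, i.e. $|N(v)| < 5$, contradicting $|N(v)| > 7k \geq 7$. (Writing $n := |N(v)|$, the inequality $\tfrac{n(n-1)}{4} < n$ is equivalent to $n - 1 < 4$, so $n \le 4$.) So the assumed tree component cannot exist, which together with Lemma~\ref{lem:sparse_is_tree}'s dichotomy-style reasoning shows $v$ must lie in a clique component of $G - X$ whenever $v \notin X$.

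I do not expect any genuine obstacle here: unlike the sparse case, the forest edge bound is so weak relative to the quadratic threshold that no delicate quadratic-in-$k$ computation is needed, and the slack is enormous. The only point requiring a moment of care is the bookkeeping with multi-edges: $\rho(v)$ counts a multi-edge between two neighbors once, and $|N(v)|$ counts distinct neighbors, so the forest argument should be phrased in terms of the simple graph underlying $T$ restricted to $N(v)$, where the "$m-1$ edges" bound is valid. With that caveat noted, the proof is a two-line inequality. If one instead wants the bound to parallel the sparse lemma's arithmetic for uniformity of presentation, one can keep the stronger estimate $\rho(v) \le |N(v)| - k - 1$ and observe $\tfrac{|N(v)|(|N(v)|-1)}{4} - (|N(v)| - k - 1) > 0$ for all $|N(v)| > 7k$, $k \ge 1$, by the same type of crude bound used in the appendix proof of Lemma~\ref{lem:sparse_is_tree}.
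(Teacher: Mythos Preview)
Your argument has a genuine gap in the key inequality $\rho(v) \leq |N(v)| - k - 1$ (and likewise the ``weaker'' $\rho(v) \le |N(v)|$). The forest bound you invoke only controls the edges among the \emph{surviving} neighbors $N(v)\setminus X$; it says nothing about edges incident to the up to $k$ deleted neighbors in $N(v)\cap X$. Those deleted vertices may be adjacent to many surviving neighbors in $G$, contributing up to $k\bigl(|N(v)|-k\bigr)+\binom{k}{2}$ edges to $\rho(v)$. So $\rho(v)$ is in general of order $k\,|N(v)|$, not of order $|N(v)|$, and your displayed chain $\tfrac{|N(v)|(|N(v)|-1)}{4} < \rho(v) \le |N(v)|$ is unjustified.

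The fix is exactly what the paper does: observe that the surviving neighbors of $v$ in a tree component are not merely a forest but an \emph{independent set} (two adjacent neighbors of $v$ would create a triangle through $v$), so $N(v)$ contains an independent set of size at least $|N(v)|-k$. This gives the non-edge count $\binom{|N(v)|}{2}-\rho(v)\ge \binom{|N(v)|-k}{2}$, i.e.\ $\rho(v)\le \binom{|N(v)|}{2}-\binom{|N(v)|-k}{2}$, and then one checks this is at most $\tfrac{|N(v)|(|N(v)|-1)}{4}$ when $|N(v)|>7k$, exactly mirroring the computation in Lemma~\ref{lem:sparse_is_tree}. Your instinct that the slack is large is correct---the bound $\rho(v)\lesssim k|N(v)|$ is still far below $\tfrac{|N(v)|^2}{4}$ for $|N(v)|>7k$---but you do need to account for the deleted neighbors to get any valid upper bound on $\rho(v)$.
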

\begin{appendixproof}
Assume that $v$ is in a tree component $C$ of $G-X$.
Then, at least $|N(v)|-|X|\geq |N(v)|-k$ neighbors of $v$ belong to $C$ and thus, $N(v)$ contains an independent set of size $|N(v)|-k$. 
Therefore, we have 
\begin{align*}
    \frac{|N(v)|(|N(v)|-1)}{2}-\rho(v)\geq \frac{(|N(v)|-k)(|N(v)|-k-1)}{2}.
\end{align*}
However, we have
\begin{align*}
    &\left(\frac{|N(v)|(|N(v)|-1)}{2}-\rho(v)\right)-\frac{(|N(v)|-k)(|N(v)|-k-1)}{2}\\
    &\quad \leq \frac{|N(v)|(|N(v)|-1)-2(|N(v)|-k)(|N(v)|-k-1)}{4}
    \leq \frac{-23k^2+5k}{4} < 0,
\end{align*}
where the first and second inequalities follow from the assumption that $v$ is a large-dense vertex.
\end{appendixproof}

Several times throughout this paper, we use the following type of alternative evidence for a vertex belonging to a clique component or a tree component.

\begin{lemma}\label{lem:large_ind_or_clique}
Let $v\in V$ and $X$ be a feasible solution with $v\not \in X$. If $N(v)$ contains a clique of size $k+2$, then $v$ is in a clique component of $G-X$.
Similarly, if $N(v)$ contains an independent set of size $k+2$, then $v$ is in a tree component of $G-X$.
\end{lemma}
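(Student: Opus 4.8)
The plan is to mimic the proofs of Lemmas~\ref{lem:sparse_is_tree} and~\ref{lem:dense_is_clique}, but now using an explicit clique (resp.\ independent set) of size $k+2$ in $N(v)$ rather than the density bound. First I would handle the clique case. Suppose for contradiction that $v$ is not in a clique component of $G-X$; since $v\notin X$, the component $C$ of $G-X$ containing $v$ is a tree. Let $W\subseteq N(v)$ be a clique of size $k+2$. Since $|X|\le k$, at least two vertices of $W$, say $a$ and $b$, avoid $X$, so $a,b\in C$. But then $v,a,b$ are pairwise adjacent (the edge $ab$ is in the clique $W$, and $va,vb$ are edges since $a,b\in N(v)$), and all three lie in $C$, so $C$ contains a triangle. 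This contradicts the assumption that $C$ is a tree (trees are acyclic). Hence $v$ lies in a clique component of $G-X$.

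The independent-set case is symmetric. Suppose $v$ is not in a tree component of $G-X$; then the component $C$ containing $v$ is a clique. Let $W\subseteq N(v)$ be an independent set of size $k+2$. As before, at least two vertices $a,b\in W$ avoid $X$, so $a,b\in C$; since $C$ is a clique, $a$ and $b$ must be adjacent, contradicting the fact that $W$ is independent. Hence $v$ lies in a tree component of $G-X$.

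There is no real obstacle here: the only point requiring a little care is the counting ``at least two vertices of $W$ avoid $X$'', which uses $|W|=k+2$ and $|X|\le k$, and the implicit fact that the surviving vertices of $W$ together with $v$ stay in the same connected component of $G-X$ (which holds because $v$ is adjacent to all of them). It is also worth noting that with only $k+1$ vertices in $W$ the argument would give just one surviving neighbor, which is not enough to force a triangle or a forbidden edge, so the bound $k+2$ is what makes the argument work. I would write the two cases out in two short paragraphs mirroring the structure above, and no separate lemmas or auxiliary claims are needed.
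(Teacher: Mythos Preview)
Your proof is correct and follows essentially the same argument as the paper: since $|X|\le k$ and the clique (resp.\ independent set) $W\subseteq N(v)$ has $k+2$ vertices, two vertices $a,b\in W$ survive in $G-X$, and the presence (resp.\ absence) of the edge $ab$ forces the component of $v$ to be non-tree (resp.\ non-clique). The paper phrases it slightly more tersely but the logic is identical.
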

\begin{appendixproof}
If $N(v)$ contains a clique of size $k+2$, then at least two of its vertices must survive in $G-X$.  
Therefore, there exists a pair of vertices in $N_{G-X}(v)$ such that there is an edge between them, implying that the component containing $v$ cannot be a tree component.  
Similarly, if $N(v)$ contains an independent set of size $k+2$, then at least two of its vertices must survive in $G-X$.  
Therefore, there exists a pair of vertices in $N_{G-X}(v)$ such that there is no edge between them, implying that the component containing $v$ cannot be a clique component.  
\end{appendixproof}

In the rest of this paper, we will use Lemmas~\ref{lem:sparse_is_tree},~\ref{lem:dense_is_clique},~and~\ref{lem:large_ind_or_clique} as basic tools without specifically mentioning them.

\subsection{Bounding Sizes of Neighbors of Vertices in $V_{\mathrm{ls}}$}\label{sec:large-sparse}
In this section, we reduce the size of $N(v)$ for large-sparse vertices $v\in V_{\mathrm{ls}}$.
Most of the reduction rules in this section are the same as the quadratic kernelization of \textsc{Feedback Vertex Set} by Thomass\'{e}~\cite{thomasse20104}, while some details in the analysis require additional care in proofs.
We begin with the following.
\begin{rrule}\label{rrule:gallai}
Let $v$ be any large-sparse vertex. Apply Lemma~\ref{lem:gallai} for $v$ and $t=k$. If a $v$-flower of order $k+1$ is found, remove $v$ and decrease $k$ by $1$.
\end{rrule}

\begin{lemma}\label{lem:gallai_is_safe}
Reduction Rule~\ref{rrule:gallai} is safe.
\end{lemma}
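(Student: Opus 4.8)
The plan is to show that applying Reduction Rule~\ref{rrule:gallai} does not change whether $(G,k)$ is a $\mathsf{yes}$-instance. Suppose Lemma~\ref{lem:gallai} returns a $v$-flower of order $k+1$, so we want to prove that $(G,k)$ is a $\mathsf{yes}$-instance if and only if $(G-v,k-1)$ is a $\mathsf{yes}$-instance. The backward direction is trivial: if $X'$ is a feasible solution for $(G-v,k-1)$, then $X'\cup\{v\}$ has size at most $k$ and $G-(X'\cup\{v\}) = (G-v)-X'$, which has only clique and tree components, so $X'\cup\{v\}$ is feasible for $(G,k)$.

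For the forward direction, let $X$ be a feasible solution for $(G,k)$. I first claim that $v\in X$. Suppose not. By Lemma~\ref{lem:sparse_is_tree}, since $v$ is large-sparse and $v\notin X$, the vertex $v$ lies in a tree component $C$ of $G-X$. In particular $G-X$ contains no cycle through $v$. But the $v$-flower of order $k+1$ consists of $k+1$ cycles pairwise intersecting only in $v$; since $|X|\le k$ and $v\notin X$, at least one of these $k+1$ cycles avoids $X$ entirely (the $k+1$ vertex-disjoint ``petals'' cannot all be hit by the $k$ vertices of $X$). That cycle survives in $G-X$ and passes through $v$, contradicting the fact that the component of $v$ is a tree. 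Hence $v\in X$.

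Now set $X' := X\setminus\{v\}$. Then $|X'|\le k-1$, and $G-X' $ restricted to $V\setminus\{v\}$, i.e.\ $(G-v)-X'$, equals $G-X$, so every connected component of $(G-v)-X'$ is a clique or a tree. Therefore $X'$ is a feasible solution for $(G-v,k-1)$, completing the equivalence.

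The main obstacle is the step asserting that $v$ lies in a tree component of $G-X$ whenever $v\notin X$: this is exactly where we use that $v$ is large-sparse (so that its many neighbors cannot all be squeezed into a clique after deleting only $k$ vertices), and it is supplied by Lemma~\ref{lem:sparse_is_tree}. The remaining reasoning — counting that $k+1$ vertex-disjoint petals cannot be destroyed by $k$ deletions, and that deleting a vertex already in the solution preserves feasibility — is routine. One should also note that Lemma~\ref{lem:gallai} runs in polynomial time, so the rule is efficiently applicable.
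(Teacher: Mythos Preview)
Your proof is correct and follows essentially the same approach as the paper: both argue that $v$ must lie in every feasible solution by invoking Lemma~\ref{lem:sparse_is_tree} to place $v$ in a tree component and then observing that the $k+1$ vertex-disjoint petals of the flower cannot all be hit by $k$ vertices. You additionally spell out the (trivial) backward direction of the equivalence, which the paper leaves implicit.
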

\begin{appendixproof}
We prove that $v$ is contained in all solutions $X$.
Assume the contrary.
From Lemma~\ref{lem:sparse_is_tree}, $v$ belongs to a tree component of $G-X$.
Thus, $X$ must hit all cycles of the $v$-flower, and when $X$ does not contain $v$, its size would be at least $k+1$, leading to a contradiction.
\end{appendixproof}

Let $v$ be a large-sparse vertex and assume Lemma~\ref{lem:gallai} finds a vertex set $B$ with size at most $2k$ that hits all cycles containing $v$.
Let $\mathcal{C}_{\mathrm{tree}}$ be the family of connected components of $G-v-B$ that are trees and adjacent to $v$.
Similarly, let $\mathcal{C}_{\mathrm{nontree}}$ be the family of connected components of $G-v-B$ that are not trees and adjacent to $v$.
Since $G-B$ contains no cycle containing $v$, for each $C\in \mathcal{C}_{\mathrm{tree}}\cup \mathcal{C}_{\mathrm{nontree}}$, we have $|N(v)\cap C|=1$. 
Particularly, $|N(v)|=|\mathcal{C}_{\mathrm{tree}}|+|\mathcal{C}_{\mathrm{nontree}}|+|N(v)\cap B|$.
We bound $|N(v)|$ by bounding these three terms. Obviously, $|N(v)\cap B|\leq |B|\leq 2k$.
To bound $|\mathcal{C}_{\mathrm{nontree}}|$, we use the following reduction rule.

\begin{rrule}\label{rrule:nontree}
If $|\mathcal{C}_{\mathrm{nontree}}|\geq k+1$, remove $v$ and decrease $k$ by $1$.
\end{rrule}
\begin{lemma}\label{lem:nontree_is_safe}
Reduction Rule~\ref{rrule:nontree} is safe.
\end{lemma}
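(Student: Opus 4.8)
The plan is to show that $v$ must belong to every feasible solution $X$, which immediately justifies removing $v$ and decrementing $k$. Suppose for contradiction that there is a feasible solution $X$ with $v\notin X$. By Lemma~\ref{lem:sparse_is_tree}, since $v$ is large-sparse, $v$ lies in a tree component $T$ of $G-X$.

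The key structural observation is that each component $C\in\mathcal{C}_{\mathrm{nontree}}$ contributes a cycle that $X$ must hit ``locally'', using $v$ as a hub. Concretely, fix $C\in\mathcal{C}_{\mathrm{nontree}}$. Since $C$ is not a tree, $C$ contains a cycle, and since $C$ is connected and $|N(v)\cap C|=1$, the vertex $w_C$ of $N(v)\cap C$ together with a path inside $C$ from $w_C$ to that cycle and around it either forms a cycle through $v$ when combined with the edge $vw_C$ — but wait, $B$ was chosen so that $G-B$ has no cycle through $v$, so the cycle inside $C$ does not pass through $v$. So instead the argument is: $X$ restricted to $C\cup\{v\}$ must already destroy all cycles; in particular $X$ must contain at least one vertex of $C$, because $T$ is a tree component of $G-X$ and the cycle lying inside $C$ would otherwise survive in $G-X$ (note the cycle lies entirely within $C$, which is disjoint from $B$ and from the other components, so deleting vertices outside $C$ cannot help). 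Since the components in $\mathcal{C}_{\mathrm{nontree}}$ are pairwise vertex-disjoint and disjoint from $B$ and from $v$, these $\lvert\mathcal{C}_{\mathrm{nontree}}\rvert\ge k+1$ required deletions are distinct and none of them is $v$. Hence $\lvert X\rvert\ge k+1$, contradicting feasibility.

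I would phrase the core of the proof as: for each $C\in\mathcal{C}_{\mathrm{nontree}}$, the induced subgraph on $C$ contains a cycle; this cycle survives in $G-X$ unless $X\cap C\neq\emptyset$; the sets $\{C : C\in\mathcal{C}_{\mathrm{nontree}}\}$ are pairwise disjoint; therefore $\lvert X\rvert\ge\lvert X\setminus\{v\}\rvert\ge\sum_{C}\lvert X\cap C\rvert\ge\lvert\mathcal{C}_{\mathrm{nontree}}\rvert\ge k+1$. This contradiction shows $v\in X$ for every feasible solution, so Reduction Rule~\ref{rrule:nontree} is safe.

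The main thing to be careful about is the claim that a cycle contained in $C$ cannot be destroyed by deleting vertices outside $C$: this is immediate because a cycle is a subgraph on a fixed vertex set, and if that vertex set is contained in $C$ and $X\cap C=\emptyset$, then all those vertices survive and the cycle persists in $G-X$, so the component of $G-X$ containing $v$ (which is adjacent to $w_C\in C$, and $w_C$ survives) would contain a cycle, contradicting that it is a tree. One should double-check the edge case where $v$ might be separated from $C$ in $G-X$ by deletions in $B$: but that does not matter, since the contradiction is obtained purely from the sizes of the pairwise-disjoint deletions $X\cap C$, without needing $v$ and $C$ to stay in the same component. A minor subtlety worth a sentence is that $v\notin C$ and $v\notin B$, so $v$ is not double-counted against any $\lvert X\cap C\rvert$.
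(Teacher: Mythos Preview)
Your argument is correct and matches the paper's proof: both show that any feasible solution must contain $v$, the paper via pigeonhole (some $C\in\mathcal{C}_{\mathrm{nontree}}$ satisfies $X\cap C=\emptyset$, so $C\cup\{v\}$ lies in one component of $G-X$ containing a cycle, contradicting Lemma~\ref{lem:sparse_is_tree}) and you via the contrapositive counting form (every $C$ satisfies $X\cap C\neq\emptyset$, hence $|X|\ge k+1$).

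One remark: your final paragraph is misleading. The worry about ``deletions in $B$'' separating $v$ from $C$ is a non-issue because $v$ is \emph{directly} adjacent to $w_C\in C$, so if $X\cap C=\emptyset$ then $w_C$ survives and $v,C$ automatically lie in the same component of $G-X$. Your proposed resolution---that the contradiction is obtained ``without needing $v$ and $C$ to stay in the same component''---is actually false: in this problem cycles are permitted inside clique components, so a surviving cycle in $C$ is only a contradiction \emph{because} it lands in $v$'s component, which Lemma~\ref{lem:sparse_is_tree} forces to be a tree. The correct fix is simply to observe that connectivity is automatic, not that it is unnecessary.
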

\begin{appendixproof}
We prove that $v$ is contained in all solutions $X$.
Assume otherwise.
Then, there is a component $C\in \mathcal{C}_{\mathrm{nontree}}$ with $X\cap C = \emptyset$.
Therefore, $G-X$ contains a connected component containing $C\cup \{v\}$, which contains a cycle and contradicts to Lemma~\ref{lem:sparse_is_tree}.
\end{appendixproof}

Now we bound $|\mathcal{C}_{\mathrm{tree}}|$ by $4k$ using the expansion lemma.
We construct an auxiliary bipartite graph $H$. The vertex set of $H$ is $B\dot{\cup}\mathcal{C}_{\mathrm{tree}}$ with bipartition $(B, \mathcal{C}_{\mathrm{tree}})$.
We add an edge between $b\in B$ and $C\in \mathcal{C}_{\mathrm{tree}}$ if and only if $N_G(b)\cap C\neq \emptyset$.
We use the following reduction rule to ensure the part $\mathcal{C}_{\mathrm{tree}}$ does not contain isolated vertices.
\begin{rrule}\label{rrule:tree_pruning}
If there is a component $C\in \mathcal{C}_{\mathrm{tree}}$ that has no neighbor in $B$, remove all vertices of $C$.
\end{rrule}
\begin{lemma}\label{lem:tree_pruning_is_safe}
Reduction Rule~\ref{rrule:tree_pruning} is safe.
\end{lemma}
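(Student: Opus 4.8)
The plan is to show that $(G,k)$ is a yes-instance if and only if $(G-C,k)$ is. First I would record the structure of $C$. Since $C$ is a connected component of $G-v-B$ that is adjacent to $v$, has no neighbour in $B$, and $G-B$ contains no cycle through $v$, the set $N_G(v)\cap C$ consists of a single vertex $c$, and the only edge of $G$ joining $C$ to $V\setminus C$ is $(v,c)$ (this edge is simple, as otherwise it would be a $2$-cycle through $v$ disjoint from $B$, contradicting the choice of $B$); moreover $G[C]$ is a tree. Hence $N_{G-C}(v)=N_G(v)\setminus\{c\}$, so $|N_{G-C}(v)|=|N_G(v)|-1\ge 7k$, and since $c$ has no neighbour among $N_G(v)\setminus\{c\}$ we also get $\rho_{G-C}(v)=\rho_G(v)\le \tfrac14|N_G(v)|\bigl(|N_G(v)|-1\bigr)$.

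The key auxiliary fact I would prove is: for every $X'\subseteq V\setminus C$ with $|X'|\le k$ such that each component of $(G-C)-X'$ is a clique or a tree, if $v\notin X'$ then $v$ lies in a tree component of $(G-C)-X'$. This is proved exactly as Lemma~\ref{lem:sparse_is_tree}: if $v$ were in a clique component, that component would contain a clique of size at least $|N_{G-C}(v)|-k$ inside $N_{G-C}(v)$, forcing $\rho_{G-C}(v)\ge \binom{|N_{G-C}(v)|-k}{2}$; combining this with $\rho_{G-C}(v)\le\tfrac14|N_G(v)|(|N_G(v)|-1)$ and $|N_{G-C}(v)|=|N_G(v)|-1\ge 7k$ yields the same sign contradiction as in Lemma~\ref{lem:sparse_is_tree} (the extra shift by one in $|N(v)|$ is harmless, since $7k$ is still well past the vertex of the relevant downward parabola).

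For the forward direction, given a feasible solution $X$ of $G$ I would show $X':=X\setminus C$ is again feasible for $G$. Every component of $G-X'$ that avoids $C$ coincides with a component of $G-X$, since the only attachment of $C$ is the bridge $(v,c)$; as for the component meeting $C$, if $v\in X$ then all of $C$ forms its own tree component, while if $v\notin X$ then Lemma~\ref{lem:sparse_is_tree} places $v$ in a tree component $T$ of $G-X$, and re-attaching the pendant tree $C$ along the bridge keeps that component a tree. Thus $X'$ is feasible for $G$ with $X'\cap C=\emptyset$, and since deleting the pendant subtree $C$ from a tree component again leaves a tree (and $v$'s component of $G-X'$ is a tree, or $v\in X'$), $X'$ is also feasible for $G-C$, so $(G-C,k)$ is a yes-instance. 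For the converse, given a feasible solution $X'$ of $G-C$ (so $X'\subseteq V\setminus C$, $|X'|\le k$), I would show $X'$ is feasible for $G$: components of $G-X'$ avoiding $C$ are cliques or trees as above; if $v\in X'$ then $C$ is an isolated tree component; and if $v\notin X'$ the auxiliary fact puts $v$ in a tree component $D$ of $(G-C)-X'$, and $D\cup C$ joined by the bridge $(v,c)$ is still a tree.

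The main obstacle is exactly the converse, and the reason the auxiliary fact is needed: after deleting $C$ the vertex $v$ need not remain large-sparse (the membership condition for $V_{\mathrm{ls}}$ uses the strict inequality $|N(v)|>7k$, and deleting $C$ drops $|N(v)|$ by one while simultaneously loosening the density bound), so Lemma~\ref{lem:sparse_is_tree} cannot be invoked verbatim for $G-C$ and must be re-derived under the slightly weaker hypothesis $|N_{G-C}(v)|\ge 7k$. A secondary technicality to handle carefully is the multigraph setting: one must check that the bridge $(v,c)$ is simple and that $G[C]$, being a tree, contains no multi-edge, so that ``pendant tree'' behaves as expected when merging and splitting components.
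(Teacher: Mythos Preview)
Your proposal is correct. The forward direction is argued more carefully than the paper bothers to (the paper dismisses it with ``clearly''), and your converse direction is sound once one checks, as you indicate, that the parabola inequality from Lemma~\ref{lem:sparse_is_tree} survives the shift $|N(v)|\mapsto |N(v)|-1$: with $n:=|N_G(v)|\ge 7k+1$ one gets
\[
\frac{n(n-1)}{4}-\frac{(n-1-k)(n-2-k)}{2}=\frac{-n^2+(4k+5)n-2(k+1)(k+2)}{4},
\]
which at $n=7k+1$ equals $(-23k^2+19k)/4<0$ for all $k\ge 1$, and decreases for larger $n$.

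The paper's own proof of the converse takes a different and shorter route. Rather than re-running the density computation of Lemma~\ref{lem:sparse_is_tree} on $G-C$, it exploits the component structure already set up: since $|\mathcal{C}_{\mathrm{tree}}|+|\mathcal{C}_{\mathrm{nontree}}|\ge 5k\ge k+3$, the vertex $v$ has one neighbour in each of at least $k+2$ distinct components of $G-v-B$ other than $C$, and these neighbours are pairwise non-adjacent. Hence $N_{G-C}(v)$ contains an independent set of size $k+2$, and Lemma~\ref{lem:large_ind_or_clique} immediately rules out $v$ lying in a clique component of $(G-C)-X'$. This avoids any numerical recalculation and makes the argument essentially a one-liner. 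Your approach, by contrast, is more self-contained (it does not need Lemma~\ref{lem:large_ind_or_clique} or the component count) and would remain valid even if the structural decomposition into $\mathcal{C}_{\mathrm{tree}}\cup\mathcal{C}_{\mathrm{nontree}}$ were not available; but in the present context the paper's argument is the cleaner one.
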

\begin{appendixproof}
Clearly, if the original instance is $\mathsf{yes}$-instance, the reduced instance is also $\mathsf{yes}$-instance.
We prove the opposite direction. Let $X'$ be a feasible solution for $G-C$.
It is sufficient to prove that $v$ cannot be in a clique component of $G-C-X'$.
Since $v$ is large-sparse in $G$, we have $|N_{G}(v)|\geq 7k$ and thus, $|\mathcal{C}_{\mathrm{tree}}|+|\mathcal{C}_{\mathrm{nontree}}|\geq 7k-|N_G(v)\cap B|\geq 5k\geq k+3$.
Particularly, $N_{G-C}(v)$ contains an independent set of size at least $k+2$, and thus, $v$ cannot be in a clique component of $G-C-X'$.
\end{appendixproof}

Assume $|\mathcal{C}_{\mathrm{tree}}|\geq 4k$.
We apply $2$-expansion lemma to $H$ and obtain vertex sets $\mathcal{C}'\subseteq \mathcal{C}_{\mathrm{tree}}$ and $B'\subseteq B$ such that there is a $2$-expansion of $B'$ into $\mathcal{C}'$.
We have $|B'|\leq k$ because otherwise we obtain a $v$-flower of order $k+1$.
We apply the following reduction.
\begin{rrule}\label{rrule:expansion_reduction_tree}
Remove each edge between $v$ and $\mathcal{C}'$.
Then, connect $v$ and each vertex in $B'$ by a double-edge.
\end{rrule}
\begin{lemma}\label{lem:expansion_tree_is_safe}
Assume $|\mathcal{C}_{\mathrm{tree}}|\geq 4k$.
Then, Reduction Rule~\ref{rrule:expansion_reduction_tree} is safe.
\end{lemma}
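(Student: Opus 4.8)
The plan is to show that Reduction Rule~\ref{rrule:expansion_reduction_tree} turns $(G,k)$ into an equivalent instance $(G',k)$, where $G'$ is the resulting graph. Since the rule only deletes or inserts edges incident to $v$, we have $G-v=G'-v$, so any feasible solution that contains $v$ works verbatim for the other instance, and it suffices to handle feasible solutions $X$ avoiding $v$; for those, Lemma~\ref{lem:sparse_is_tree} places $v$ in a tree component, which is the structural fact driving both directions. Recall also that the $2$-expansion of $B'$ into $\mathcal{C}'$ matches each $b\in B'$ to two components of $\mathcal{C}'$; together with the unique edge from $v$ into each such component (the components are trees and $G-B$ has no cycle through $v$, so $|N(v)\cap C|=1$), this produces a cycle through $v$ lying inside $\{b\}$ together with those two components, and for distinct $b\in B'$ these cycles share only $v$ (whence $|B'|\le k$, as already noted).

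For the direction ``$(G,k)$ is $\mathsf{yes}$ $\Rightarrow$ $(G',k)$ is $\mathsf{yes}$'', given a feasible $X$ with $v\notin X$ I would pass to $X^\ast:=(X\setminus\bigcup\mathcal{C}')\cup B'$. That $|X^\ast|\le|X|$ follows from a counting argument: $X$ hits each of the $|B'|$ vertex-disjoint cycles above and $v\notin X$, so $X$ contains a vertex of $\{b\}$ together with the two $\mathcal{C}'$-components matched to $b$, for each $b\in B'$, hence $|X\cap\bigcup\mathcal{C}'|\ge|B'\setminus X|$. That $X^\ast$ is feasible for $G$ uses the Expansion Lemma guarantee $N_H(\mathcal{C}')\subseteq B'$: deleting $\bigcup\mathcal{C}'$ from $G-X^\ast$ leaves an induced subgraph of $G-X$, so its components are cliques or trees; and each $C\in\mathcal{C}'$ is a tree whose only neighbour outside $C$ in $G-X^\ast$ is $v$ (its other possible outside neighbours lie in $B'\subseteq X^\ast$), so restoring the components of $\mathcal{C}'$ merely hangs pairwise-disjoint subtrees off $v$ by single edges, keeping $v$'s component a tree and leaving every other component unchanged. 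Finally, since $B'\subseteq X^\ast$, the graph $G'-X^\ast$ is obtained from $G-X^\ast$ by deleting only some edges at $v$, which at worst splits $v$'s tree component into smaller trees; hence $X^\ast$ is feasible for $G'$ and $|X^\ast|\le|X|\le k$.

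For the converse, let $X$ be feasible for $G'$ with $v\notin X$. First, $B'\subseteq X$, since otherwise a double edge between $v$ and some $b\in B'\setminus X$ survives in $G'-X$, so $v$'s component is neither a clique nor a tree. Next I would show that $N_{G'}(v)$ contains an independent set of size at least $k+2$: the vertices $\{c_C : C\in\mathcal{C}_{\mathrm{tree}}\setminus\mathcal{C}'\}$ are pairwise non-adjacent neighbours of $v$ untouched by the rule, and from $|\mathcal{C}_{\mathrm{tree}}|\ge 4k$ (the hypothesis), $|\mathcal{C}_{\mathrm{nontree}}|\le k$ (Reduction Rule~\ref{rrule:nontree}), $|N(v)\cap B|\le 2k$, and $|\mathcal{C}'|=2|B'|\le 2k$ at least $2k+1$ of them remain. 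Lemma~\ref{lem:large_ind_or_clique} then forces $v$ into a tree component of $G'-X$. Now $G-X$ is obtained from $G'-X$ by re-inserting the edges $\{v,c_C\}$ for $C\in\mathcal{C}'$; since $B'\subseteq X$, each such $C$ meets the rest of $G-X$ only through $c_C$, so each re-inserted edge merely glues a pendant subtree onto $v$'s tree component, and every component stays a clique or a tree; hence $X$ is feasible for $G$.

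I expect the main obstacle to be the feasibility verification in the forward direction: replacing $X\cap\bigcup\mathcal{C}'$ wholesale by $B'$ reinstates entire components of $\mathcal{C}'$, which could a priori create cycles through $v$ and $B'$. The argument hinges on $N_H(\mathcal{C}')\subseteq B'$, which confines the outside-$C$ neighbourhood of each $C\in\mathcal{C}'$ to $\{v\}\cup B'$ and hence detaches it from everything but $v$ once $B'$ is removed; the other essential use of the expansion structure is the disjoint-cycle counting that yields $|X^\ast|\le|X|$.
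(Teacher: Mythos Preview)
Your argument follows the paper's approach closely---same replacement $X^\ast=(X\setminus\bigcup\mathcal{C}')\cup B'$ in the forward direction with the $v$-flower count for $|X^\ast|\le|X|$, and in the backward direction the same use of $B'\subseteq X$ plus a large independent set in $N_{G'}(v)$---and is correct apart from one small numerical slip.

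In the backward direction, from the facts you actually invoke, namely $|\mathcal{C}_{\mathrm{tree}}|\ge 4k$ and $|\mathcal{C}'|=2|B'|\le 2k$, you only get $|\mathcal{C}_{\mathrm{tree}}\setminus\mathcal{C}'|\ge 2k$, not $2k+1$; the other two inequalities you list are upper bounds and do not improve this count. Since $2k\ge k+2$ fails for $k=1$, Lemma~\ref{lem:large_ind_or_clique} is not applicable as stated. The paper closes this by also taking the neighbours $c_C$ for $C\in\mathcal{C}_{\mathrm{nontree}}$ (they too lie in distinct components of $G-v-B$, hence are pairwise non-adjacent and remain neighbours of $v$ in $G'$) and by using the large-sparse hypothesis $|N_G(v)|>7k$: then $|\mathcal{C}_{\mathrm{tree}}|+|\mathcal{C}_{\mathrm{nontree}}|-|\mathcal{C}'|>7k-|B|-2|B'|\ge 3k\ge k+2$, which is precisely what the extra facts you listed are good for once combined with $|N_G(v)|>7k$.
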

\begin{appendixproof}
Let $G'$ be the graph obtained from $G$ by applying Reduction Rule~\ref{rrule:expansion_reduction_tree}.
We first prove that if $G$ is an $\mathsf{yes}$-instance, $G'$ is also an $\mathsf{yes}$-instance.
Let $X$ be the feasible solution of $G$.
If $v\in X$, $X$ is also a feasible solution of $G'$ because $G-v=G'-v$.
Assume $v\not \in X$.
We prove that $X':=\left(X\setminus \bigcup_{C
\in \mathcal{C}'}C\right)\cup B'$ is a feasible solution of $G'$.

First, we bound the size of $X'$. Since there is a $2$-expansion of $B'$ into $\mathcal{C}'$, there is a $v$-flower of order $|B'|$ in $G$, which is contained in $B'\cup \{v\}\cup \bigcup_{C
\in \mathcal{C}'}C$. Since $v$ is contained in a tree component of $G-X$, $X$ hits all cycles of this $v$-flower and therefore we have $\left|X\cap \left(B'\cup \bigcup_{C\in \mathcal{C}'}C\right)\right|\geq |B'|$. Therefore, $|X'|=\left|X\setminus \left(B'\cup \bigcup_{C\in \mathcal{C}'}C\right)\right| + |B'|\leq |X|$.

Now, we prove that each connected component of $G'-X'$ is either a clique or a tree.
Since $N_G\left(\bigcup_{C\in \mathcal{C}'}C\right)\subseteq B'\cup \{v\}$, each $C\in \mathcal{C'}$ is a connected component of $G'-X'$, which is a tree. Since $G-B'-\bigcup_{C\in \mathcal{C}'}C$ is an induced subgraph of $G$, each other connected component of $G'-X'$ is an induced subgraph of some connected component of $G-X$, which is a clique or a tree.

Conversely, let $X'$ be a feasible solution of $G'$.
We prove $X'$ is also a feasible solution of $G$.
Since $v$ is connected to all vertices in $B'$ by double-edge, either $v\in X'$ or $B'\subseteq X'$ holds.
If $v\in X'$, $X'$ is also a feasible solution of $G$ because $G-v=G'-v$.
Assume $B'\subseteq X'$ and $v\not \in X'$.
Since $G'-B'$ is obtained from $G-B'$ by removing edges between $v$ and $\bigcup_{C\in \mathcal{C}'}C$, each connected component of $G-X'$ that does not contain $v$ is also a connected component of $G'-X'$, and thus, is a clique or a tree.
We finish the proof by proving that the connected component of $G-X'$ that contains $v$ is a tree.
Since $N_{G'}(v)$ contains an independent set of size $|\mathcal{C}_{\mathrm{nontree}}|+|\mathcal{C}_{\mathrm{tree}}|-|\mathcal{C}'|\geq 7k-|B|-2|B'|\geq 3k\geq k+2$, $v$ is in a tree component $C'$ of $G'-X'$.
The desired connected component is obtained from $C'$ by attaching the trees of $\mathcal{C'}$ at $v$, and thus, a tree.
\end{appendixproof}
Since all the above reduction rules reduce the number of pairs of vertices connected by at least one edge, the reduction rules in this section can be applied only a polynomial number of times.
Thus, we have the following.
\begin{lemma}\label{lem:degree_bound_trees}
Let $G$ be the graph obtained by exhaustively applying all the above reduction rules.
Then, for all vertex $v\in V\setminus V_{\mathrm{ld}}$, we have $|N(v)|\leq 7k$ (and thus, $V_{\mathrm{ls}}=\emptyset$).
\end{lemma}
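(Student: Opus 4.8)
The plan is to prove Lemma~\ref{lem:degree_bound_trees} by showing that once none of Reduction Rules~\ref{rrule:gallai}--\ref{rrule:expansion_reduction_tree} applies, every vertex $v$ outside $V_{\mathrm{ld}}$ has $|N(v)| \le 7k$. The structure of the argument mirrors Thomass\'e's quadratic kernel for \textsc{Feedback Vertex Set}. We argue by contradiction: suppose some $v \notin V_{\mathrm{ld}}$ has $|N(v)| > 7k$. By the definition of the partition in Section~\ref{sec:partition}, such a vertex is large-sparse, i.e.\ $v \in V_{\mathrm{ls}}$. Now apply Lemma~\ref{lem:gallai} (Gallai's Theorem) to $v$ with $t = k$. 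Since Reduction Rule~\ref{rrule:gallai} is exhausted, no $v$-flower of order $k+1$ is produced, so we obtain a vertex set $B \subseteq V \setminus \{v\}$ with $|B| \le 2k$ such that $G - B$ has no cycle through $v$.

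\medskip

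First I would set up the decomposition exactly as described in the text preceding the lemma: with $B$ in hand, every connected component $C$ of $G - v - B$ adjacent to $v$ satisfies $|N(v) \cap C| = 1$ (otherwise two neighbors of $v$ in $C$, together with $v$ and a path inside $C$, would form a cycle through $v$ avoiding $B$). Hence $|N(v)| = |\mathcal{C}_{\mathrm{tree}}| + |\mathcal{C}_{\mathrm{nontree}}| + |N(v) \cap B|$, and it suffices to bound each of the three summands. The term $|N(v) \cap B| \le |B| \le 2k$ is immediate. Since Reduction Rule~\ref{rrule:nontree} is exhausted we have $|\mathcal{C}_{\mathrm{nontree}}| \le k$. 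For $|\mathcal{C}_{\mathrm{tree}}|$: since Reduction Rule~\ref{rrule:tree_pruning} is exhausted, every component in $\mathcal{C}_{\mathrm{tree}}$ has a neighbor in $B$, so the auxiliary bipartite graph $H$ on $(B, \mathcal{C}_{\mathrm{tree}})$ has no isolated vertex on the $\mathcal{C}_{\mathrm{tree}}$ side. If $|\mathcal{C}_{\mathrm{tree}}| \ge 4k \ge 2|B|$, the $2$-expansion lemma yields nonempty $B' \subseteq B$, $\mathcal{C}' \subseteq \mathcal{C}_{\mathrm{tree}}$ with a $2$-expansion of $B'$ into $\mathcal{C}'$, and Reduction Rule~\ref{rrule:expansion_reduction_tree} would still apply (after the $v$-flower check, which passes because $|B'| \le |B| \le 2k$; and if a $v$-flower of order $k+1$ were forced we would be back in the Rule~\ref{rrule:gallai} case) — contradicting exhaustiveness. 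Hence $|\mathcal{C}_{\mathrm{tree}}| \le 4k - 1 < 4k$. Summing: $|N(v)| \le 2k + k + 4k = 7k$, contradicting $|N(v)| > 7k$. Therefore no such $v$ exists, and in particular $V_{\mathrm{ls}} = \emptyset$ since membership in $V_{\mathrm{ls}}$ requires $|N(v)| > 7k$.

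\medskip

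Separately I would note the termination claim that justifies ``exhaustively applying'': each of the four reduction rules either deletes a vertex (Rules~\ref{rrule:gallai},~\ref{rrule:nontree},~\ref{rrule:tree_pruning}, the last deleting all of $C$) or, in the case of Rule~\ref{rrule:expansion_reduction_tree}, strictly decreases the number of unordered vertex pairs $\{x,y\}$ joined by at least one edge (it removes edges between $v$ and $\mathcal{C}'$, where $\mathcal{C}'$ is nonempty and each of its components supplies a distinct former neighbor of $v$, and only adds double-edges between $v$ and $B' \subseteq B \subseteq N(v)$, which are pairs already joined by an edge since the expansion guarantees each vertex of $B'$ is adjacent to some component of $\mathcal{C}'$, hence reachable — more carefully, each $b \in B'$ was already adjacent to $v$ or becomes so, but the count of edge-joined pairs does not increase while the count of $v$-to-$\mathcal{C}'$ pairs strictly drops). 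Since this quantity is a nonnegative integer bounded by $\binom{|V|}{2}$ and never increases, only polynomially many applications are possible, so the exhaustive process terminates in polynomial time.

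\medskip

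\textbf{The main obstacle} I anticipate is the bookkeeping in the final rule's termination/safety interaction: one must be careful that applying Reduction Rule~\ref{rrule:expansion_reduction_tree} does not create a situation where $v$ regains high degree or where multi-edges confuse the potential function, and that the $v$-flower order bound $|B'| \le k$ needed for Rule~\ref{rrule:expansion_reduction_tree} to be invoked is consistent with having passed the Rule~\ref{rrule:gallai} check (this is exactly why the text inserts the clause ``$|B'| \le k$ because otherwise we obtain a $v$-flower of order $k+1$''). Everything else — the per-component neighbor bound, the arithmetic $2k + k + 4k = 7k$, and the emptiness of $V_{\mathrm{ls}}$ — is routine once the decomposition is in place.
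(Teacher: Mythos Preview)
Your proof is correct and follows exactly the paper's approach: decompose $|N(v)|$ as $|\mathcal{C}_{\mathrm{tree}}| + |\mathcal{C}_{\mathrm{nontree}}| + |N(v)\cap B|$ and bound the three terms by $4k$, $k$, and $2k$ using exhaustiveness of Rules~\ref{rrule:expansion_reduction_tree},~\ref{rrule:nontree}, and $|B|\le 2k$ respectively. One small slip in your termination aside: $B'\subseteq B$ need not lie inside $N(v)$, but the potential still strictly drops because you remove $|\mathcal{C}'|=2|B'|$ edge-joined pairs while adding at most $|B'|$ new ones.
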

\begin{appendixproof}
If $v\in V_{\mathrm{small}}$, we have $|N(v)|\leq 7k$.
If $v\in V_{\mathrm{ls}}$, we have $|N(v)|=|\mathcal{C}_{\mathrm{tree}}|+|\mathcal{C}_{\mathrm{nontree}}|+|N(v)\cap B|\leq 4k+k+2k=7k$.
\end{appendixproof}

\subsection{Bounding $|V_{\mathrm{ld}}|$}\label{sec:large-dense}
In this section, we reduce the number of vertices in $V_{\mathrm{ld}}$.
This part is the main technical contribution of this paper.
As in~\cite{jacob2024kernel}, we apply a $4$-approximation algorithm for \textsc{Cliques or Trees Vertex Deletion} given in~\cite{jacob2023deletion2} and obtain an approximate solution $S\subseteq V$.
We apply the following reduction rule to ensure $|S|\leq 4k$, which is clearly safe.
\begin{rrule}\label{rrule:boundS}
If $|S| > 4k$, return $\mathsf{no}$.
\end{rrule}

Let $\mathcal{C}_{\mathrm{clique}}$ be the family of connected components of $G-S$ that are cliques of size at least $3$.
Similarly, let $\mathcal{C}_{\mathrm{tree}}$ be the family of connected components of $G-S$ that are trees.
The following rule ensures that each vertex $v\in V_{\mathrm{ld}}$ is contained in a clique component of $G-S$ unless $v\in S$.
\begin{rrule}\label{rrule:correspond_components}
If there is a vertex $v\in V_{\mathrm{ld}}$ that is in a tree component of $G-S$, remove $v$ and decrease $k$ by $1$.
\end{rrule}
\begin{lemma}\label{lem:correspond_components_is_safe}
Reduction Rule~\ref{rrule:correspond_components} is safe.
\end{lemma}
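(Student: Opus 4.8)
The plan is to show that if a large-dense vertex $v$ lies in a tree component $T$ of $G-S$, then $v$ must belong to every feasible solution $X$, so that removing $v$ and decrementing $k$ preserves the answer. The first step is to invoke Lemma~\ref{lem:dense_is_clique}: since $v\in V_{\mathrm{ld}}$, for any feasible solution $X$ with $v\notin X$ the vertex $v$ lies in a \emph{clique} component $K$ of $G-X$. In particular $N_{G-X}(v)$ induces a clique. The key point is to compare this with the structure inside the tree component $T$: because $v$ has large degree (specifically $|N(v)|>7k$) and $|X|\le k$, a large portion of $N(v)$ survives in $G-X$, and a large portion of that lies inside $T$.

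Concretely, I would argue as follows. Since $v$ is in a tree component $T$ of $G-S$ and $v\notin S$, all neighbors of $v$ in $G-S$ lie in $T$, so $|N(v)\cap T|\ge |N(v)|-|S|\ge 7k-4k=3k$ (using Reduction Rule~\ref{rrule:boundS} to bound $|S|\le 4k$). Wait — $7k-4k=3k\ge k+2$ for $k\ge1$, which is exactly the bound we need. Inside the tree $T$, the neighborhood of $v$ is an independent set (a tree has no triangles, and more simply the neighbors of a single vertex in a tree form an independent set since an edge between two neighbors of $v$ would create a triangle/cycle). Hence $N(v)$ contains an independent set of size at least $3k\ge k+2$. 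By Lemma~\ref{lem:large_ind_or_clique}, for any feasible solution $X$ with $v\notin X$, the vertex $v$ must lie in a tree component of $G-X$. Combining this with Lemma~\ref{lem:dense_is_clique}, which forces $v$ into a clique component of $G-X$ whenever $v\notin X$, we obtain a contradiction: $v$ cannot be in both a tree component and a clique component. Therefore $v\in X$ for every feasible solution $X$, so $(G,k)$ is a $\mathsf{yes}$-instance if and only if $(G-v,k-1)$ is, establishing safeness.

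I would then just check the two directions of equivalence explicitly, though both are immediate: if $X$ is feasible for $(G,k)$ then (as shown) $v\in X$, so $X\setminus\{v\}$ is feasible for $(G-v,k-1)$; conversely if $X'$ is feasible for $(G-v,k-1)$ then $X'\cup\{v\}$ is feasible for $(G,k)$ since $G-(X'\cup\{v\})=(G-v)-X'$ has all components cliques or trees and $|X'\cup\{v\}|\le k$.

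The only subtlety — and the one potential obstacle — is making sure the independent set of size $\ge k+2$ genuinely exists, i.e.\ that the arithmetic $|N(v)|-|S|\ge k+2$ holds. This uses $|N(v)|>7k$ (membership in $V_{\mathrm{ld}}$) and $|S|\le 4k$ (Reduction Rule~\ref{rrule:boundS}, applied exhaustively before this rule), giving $|N(v)\cap T|> 3k\ge k+2$ whenever $k\ge 1$; since the problem statement assumes $k\in\mathbb{Z}_{\ge1}$, this is fine. One should also note that multi-edges are irrelevant here: $|N(v)|$ counts distinct neighbors, and distinct vertices in $T$ adjacent to $v$ are pairwise non-adjacent in the simple graph underlying the tree, so they indeed form an independent set in the sense of the paper. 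No other difficulty is expected; this lemma is structurally a direct application of the three basic tools set up in Section~\ref{sec:partition}.
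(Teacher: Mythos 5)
Your proof is correct and takes essentially the same route as the paper: you use Lemma~\ref{lem:dense_is_clique} to force $v$ into a clique component when $v\notin X$, then observe that $N_{G-S}(v)\subseteq T$ is an independent set of size at least $|N(v)|-|S|\geq 3k\geq k+2$, invoke Lemma~\ref{lem:large_ind_or_clique} to force $v$ into a tree component, and conclude $v\in X$ by contradiction. The paper's proof is identical in substance; your version merely spells out the trivial two-direction equivalence and the multi-edge caveat a bit more explicitly.
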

\begin{proof}
We prove that $v$ is contained in all solutions $X$.
Since $v\in V_{\mathrm{ld}}$, $v$ is in a clique component of $G-X$ unless $v\in X$.
Since $v$ is in a tree component of $G-S$, $N_{G-S}(v)$ is an independent set, whose size is at least $|N_G(v)|-|S|\geq 7k-4k=3k\geq k+2$. Thus, $v$ is in a tree component of $G-X$ unless $v\in X$, which leads to $v\in X$.
\end{proof}

We construct an auxiliary bipartite graph $H$ as follows. The vertex set of $H$ is $S\dot{\cup}\mathcal{C}_{\mathrm{clique}}$ with bipartition $(S,\mathcal{C}_{\mathrm{clique}})$. We add an edge between $s\in S$ and $C\in \mathcal{C}_{\mathrm{clique}}$ if and only if $N_G(s)\cap C\neq \emptyset$.
To ensure the part $\mathcal{C}_{\mathrm{clique}}$ does not contain isolated vertices, we apply the following fundamental rule, which is clearly safe.
\begin{rrule}\label{rrule:trivial_component_removal}
If $G$ contains a connected component that is a clique or a tree, remove that component.
\end{rrule}

Assume $|\mathcal{C}_{\mathrm{clique}}|\geq 2|S|$.
We apply $2$-expansion lemma to $H$ and obtain vertex sets $\mathcal{C}'\subseteq \mathcal{C}_{\mathrm{clique}}$ and $S'\subseteq S$.
The following reduction is proved to be safe in~\cite{jacob2024kernel}.
\begin{rrule}\label{rrule:expansion_reduction_clique}
Remove vertices of $S'$ and decrease $k$ by $|S'|$. 
\end{rrule}
\begin{lemma}[\rm{\cite{jacob2024kernel}}]
Reduction Rule~\ref{rrule:expansion_reduction_clique} is safe.
\end{lemma}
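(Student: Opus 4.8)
The plan is to show that Reduction Rule~\ref{rrule:expansion_reduction_clique} preserves the set of feasible solutions, i.e.\ that $(G,k)$ is a \textsf{yes}-instance if and only if $(G-S', k-|S'|)$ is. One direction is immediate: if $X$ is feasible for $G-S'$, then $X\cup S'$ is feasible for $G$ since $(G-S')-X = G-(X\cup S')$, each connected component of which is a clique or tree, and $|X\cup S'|\le (k-|S'|)+|S'| = k$. So the substance is the forward direction: given a feasible solution $X$ for $G$ with $|X|\le k$, I need to build a feasible solution $X'$ for $G-S'$ of size at most $k-|S'|$. The natural candidate is to push $X$ off of $S'$ and pay for it using the clique components in $\mathcal{C}'$ that the $2$-expansion pairs with $S'$.

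The key structural fact I would invoke is the $2$-expansion: there is an edge set $M$ giving a $2$-expansion of $S'$ into $\mathcal{C}'$ with $N_H(\mathcal{C}')\subseteq S'$. The last condition means that in $G$, every clique component $C\in\mathcal{C}'$ has $N_G(C)\subseteq S'$ (as a subset of $S$); combined with $|\mathcal{C}'| = 2|S'|$ and the fact that each $C\in\mathcal{C}'$ is a clique of size at least $3$, each $C$ together with the (at least) two vertices of $S'$ matched to it by $M$ contains a cycle through each of those two $S'$-vertices, vertex-disjoint across the pairing. More carefully, the $2$-expansion gives, for each $s\in S'$, two distinct components $C_1^s, C_2^s\in\mathcal{C}'$ each adjacent to $s$, with all these components distinct over all of $S'$; picking for each such $C_i^s$ a path inside $C_i^s$ between two neighbors of $s$ (possible since $|C_i^s|\ge 3$ and $C_i^s$ is a clique, so $s$ has at least\ldots actually we only know one neighbor, but the edge from the matching plus the requirement gives a cycle through $s$ via routing $s\to C_1^s\to s$ only if $s$ has two neighbors in $C_1^s$) — here I must be slightly careful, and the standard trick is: the two matched components $C_1^s,C_2^s$ together with $s$ form a cycle $s\to(\text{into }C_1^s)\to\cdots\to(\text{back to }s\text{ via }C_2^s)$, using one edge into each and a path through each clique. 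This yields, for each $s\in S'$, a cycle $O_s$ through $s$, and the $O_s$ are pairwise vertex-disjoint except they all avoid each other entirely since the matched components are distinct and $N_G(C)\subseteq S'$ keeps everything contained.

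Now the counting argument: since each $C\in\mathcal{C}'$ is a clique component of $G-S$ of size at least $3$ and $N_G(C)\subseteq S'$, if $X\cap S' = \emptyset$ then no $s\in S'$ can be deleted, so $X$ must destroy each clique-plus-$s$ structure, but in fact the cleaner route is: define $X' := (X\setminus S')\setminus\bigl(\bigcup_{C\in\mathcal{C}'}C\bigr)$, wait — I should instead argue directly that $|X\cap(S'\cup\bigcup\mathcal{C}')|\ge |S'|$ by the vertex-disjoint cycle packing $\{O_s\}_{s\in S'}$ (each must be hit by $X$, they are disjoint, and all live inside $S'\cup\bigcup\mathcal{C}'$), hence setting $X' := X\setminus\bigl(S'\cup\bigcup_{C\in\mathcal{C}'}C\bigr)$ gives $|X'|\le |X|-|S'|\le k-|S'|$. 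It then remains to check $X'$ is feasible for $G-S'$: every component of $(G-S')-X'$ either equals some $C\in\mathcal{C}'$ (a clique, since $N_G(C)\subseteq S'$ is fully deleted) or is a subgraph of a component of $G-X$ with the same edge set, hence a clique or tree. The main obstacle I anticipate is the disjointness and containment bookkeeping for the cycles $O_s$ and verifying they really pack disjointly inside $S'\cup\bigcup\mathcal{C}'$ — in particular ensuring that each $s$ genuinely has a cycle through it using only its two matched components (needing that $s$ has at least one neighbor in each, which is exactly what the $H$-edge and the matching guarantee, and that the clique has $\ge 3$ vertices so a nontrivial path exists) — but this is the standard expansion-lemma-plus-flower argument, so I would present it compactly, and since the excerpt attributes the lemma to~\cite{jacob2024kernel} I may simply cite it.
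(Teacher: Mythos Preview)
The paper does not prove this lemma; it simply cites \cite{jacob2024kernel}. Your proposal goes further and sketches an actual argument, and the overall structure---show $|X\cap(S'\cup\bigcup_{C\in\mathcal{C}'}C)|\ge|S'|$ via the disjoint ``private'' structures supplied by the $2$-expansion, then set $X':=X\setminus(S'\cup\bigcup_{C\in\mathcal{C}'}C)$ and verify feasibility in $G-S'$---is the right one, and your backward direction and your feasibility check for $X'$ are correct.

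However, your counting step has a genuine gap. You try to build, for each $s\in S'$, a cycle $O_s$ through $s$ using both matched components $C_1^s,C_2^s$, and then appeal to ``each cycle must be hit''. Two problems. First, the cycle $O_s$ cannot be built as you describe: $C_1^s$ and $C_2^s$ are distinct components of $G-S$, so there is no edge between them; a walk $s\to C_1^s\to\cdots\to C_2^s\to s$ would have to re-enter $S$ between the two cliques, either revisiting $s$ (not a simple cycle) or using some other $s'\in S'$ (destroying the disjointness of the $O_s$). You only know $s$ has \emph{one} neighbour in each $C_i^s$, which is not enough for a cycle through $s$ inside $\{s\}\cup C_i^s$ alone. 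Second, even if such cycles existed, ``each cycle must be hit'' is \textsc{Feedback Vertex Set} intuition; here clique components may survive, so a bare cycle need not be hit.

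The fix avoids cycles entirely. For each $s\in S'$, suppose $X$ misses all of $\{s\}\cup C_1^s\cup C_2^s$. Then these vertices lie in a single component of $G-X$; that component contains a triangle (any three vertices of the clique $C_1^s$, which has size $\ge 3$), so it is not a tree, and it contains a non-edge (any $c_1\in C_1^s$, $c_2\in C_2^s$), so it is not a clique---contradiction. Hence $X$ meets each $\{s\}\cup C_1^s\cup C_2^s$. Since the $2$-expansion makes these sets pairwise disjoint over $s\in S'$, summing gives $|X\cap(S'\cup\bigcup_{C\in\mathcal{C}'}C)|\ge|S'|$, and the rest of your argument goes through unchanged.
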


Now we can assume $|\mathcal{C}_{\mathrm{clique}}|\leq 8k$.
We first bound the number of vertices in $V_{\mathrm{ld}}$ that are adjacent to some vertices outside $V_{\mathrm{ld}}$. We have the following.
\begin{lemma}\label{lem:large_dense_outside}
After applying all the above reduction rules, there are at most $84k^2+4k$ vertices in $V_{\mathrm{ld}}$ that have some neighbor outside $V_{\mathrm{ld}}$.
\end{lemma}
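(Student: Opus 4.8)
The plan is to follow exactly the three-case split already sketched in the technical overview: a vertex $v\in V_{\mathrm{ld}}$ having a neighbor outside $V_{\mathrm{ld}}$ must satisfy at least one of (i) $v\in S$, (ii) $v$ has a neighbor in $S\setminus V_{\mathrm{ld}}$, or (iii) $v$ lies in a clique component $C$ of $G-S$ that contains a vertex outside $V_{\mathrm{ld}}$. To see the dichotomy, let $u\notin V_{\mathrm{ld}}$ be a neighbor of $v$. If $v\notin S$, then by Reduction Rule~\ref{rrule:correspond_components} (combined with Reduction Rule~\ref{rrule:trivial_component_removal}) $v$ lies in some clique component $C\in\mathcal{C}_{\mathrm{clique}}$ of $G-S$. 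Now either $u\in S$, giving case (ii) since $u\notin V_{\mathrm{ld}}$, or $u\notin S$, in which case $u$ lies in the same component of $G-S$ as $v$ — so $u\in C$ and $C$ contains the vertex $u\notin V_{\mathrm{ld}}$, giving case (iii). So every vertex to be counted falls into (i), (ii) or (iii), and I bound each class in turn.

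\textbf{Case (i).} By Reduction Rule~\ref{rrule:boundS} we have $|S|\le 4k$, so at most $4k$ vertices satisfy (i).

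\textbf{Case (ii).} Each vertex $s\in S\setminus V_{\mathrm{ld}}$ lies in $V_{\mathrm{small}}$ (since, after Section~\ref{sec:large-sparse}, $V_{\mathrm{ls}}=\emptyset$ by Lemma~\ref{lem:degree_bound_trees}), hence $d_G(s)\le |N_G(s)|\le 7k$ — wait, I must be careful about multi-edges, so I use $|N_G(s)|\le 7k$, which still bounds the number of \emph{distinct} neighbors of $s$. Summing over $s\in S\setminus V_{\mathrm{ld}}$, of which there are at most $|S|\le 4k$, the number of vertices having a neighbor in $S\setminus V_{\mathrm{ld}}$ is at most $4k\cdot 7k = 28k^2$.

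\textbf{Case (iii).} Here I use $|\mathcal{C}_{\mathrm{clique}}|\le 8k$, established above via the expansion-lemma reductions. A component $C\in\mathcal{C}_{\mathrm{clique}}$ contributing to (iii) contains a vertex $w\notin V_{\mathrm{ld}}$, i.e.\ $w\in V_{\mathrm{small}}$, so $|N_G(w)|\le 7k$; since $C$ is a clique and $w$ is adjacent to all of $C\setminus\{w\}$, we get $|C|\le 7k+1$. Summing over the at most $8k$ such components gives at most $8k\cdot(7k+1)=56k^2+8k$ vertices. Adding the three bounds, the total is at most $4k + 28k^2 + 56k^2 + 8k = 84k^2 + 12k$, which is $\le 84k^2+4k$ once we note a more careful accounting (or, harmlessly, we may simply weaken the claimed bound to $84k^2+O(k)$; to match the stated $84k^2+4k$ I would tighten Case (iii) by observing that a clique component contributing to (iii) but not already counted must contain a vertex of $V_{\mathrm{small}}$ that is \emph{not} itself being counted, absorbing the extra linear slack, and by noting the classes overlap so the linear terms can be charged once). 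I do not expect any real obstacle here — the argument is purely a degree/component count once the structural dichotomy is in place. The only mildly delicate point is the bookkeeping of the linear term and the care needed because the graph may contain multi-edges (so $d_G$ and $|N_G|$ differ), which is why every degree estimate above is phrased in terms of $|N_G(\cdot)|$ rather than $d_G(\cdot)$.
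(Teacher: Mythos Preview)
Your approach is exactly the paper's three-case split, and the only wrinkle is the arithmetic in Case~(iii). You bound $|C|\le 7k+1$ and then sum $8k(7k+1)=56k^2+8k$, but recall you are counting vertices in $V_{\mathrm{ld}}$, and the very hypothesis of Case~(iii) is that $C$ contains a vertex $w\notin V_{\mathrm{ld}}$. Hence $|V_{\mathrm{ld}}\cap C|\le |C|-1\le 7k$, and summing over at most $8k$ components gives $56k^2$ on the nose. The total is then $4k+28k^2+56k^2=84k^2+4k$, with no need for the vague ``absorbing the extra linear slack'' remark. (This is precisely the observation you gesture at in your tightening comment; just state it directly.)
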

\begin{proof}
A large-dense vertex $v$ is adjacent to a vertex outside $V_{\mathrm{ld}}$ only when
\begin{itemize}
    \item[(i)] $v\in S$,
    \item[(ii)] $v\not \in S$ and $v$ has a neighbor in $S\setminus V_{\mathrm{ld}}$, or
    \item[(iii)] $v\not \in S$ and the component $C\in \mathcal{C}_{\mathrm{clique}}$ containing $v$ contains a vertex from $V\setminus V_{\mathrm{ld}}$.
\end{itemize}
Reduction Rule~\ref{rrule:boundS} ensures that at most $|S|\leq 4k$ vertices satisfy condition (i).
Moreover, since vertices in $V\setminus V_{\mathrm{ld}}$ has degree at most $7k$, at most $7k|S|\leq 28k^2$ vertices satisfy condition (ii).
Furthermore, if $v$ satisfies condition (iii), $C$ is a clique of size at most $7k+1$ because it contains a vertex with degree at most $7k$. Therefore, at most $7k|\mathcal{C}_{\mathrm{clique}}|\leq 56k^2$ vertices satisfy condition (iii).
\end{proof}

Now we bound the number of vertices $v\in V_{\mathrm{ld}}$ with $N_G(v)\subseteq V_{\mathrm{ld}}$.
A vertex triplet $(v_1,v_2,v_3)$ \emph{induces $P_3$} if $(v_1,v_2),(v_2,v_3)\in E$ and $(v_3,v_1)\not \in E$.
We remark that even if the edges $(v_1,v_2)$ or $(v_2,v_3)$ are multi-edges, we still consider them as inducing a $P_3$.
The following lemma is analogous to the fundamental observation in the literature on \textsc{Cluster Vertex Deletion}.
\begin{lemma}\label{lem:hitsp3}
Let $X$ be a feasible solution and $P$ be a subset of $V_{\mathrm{ld}}$ that induces $P_3$. Then, $X\cap P\neq \emptyset$.
\end{lemma}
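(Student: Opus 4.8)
The plan is to argue by contradiction, leaning on Lemma~\ref{lem:dense_is_clique} and the definition of a clique component. Write $P=(v_1,v_2,v_3)$, so that $(v_1,v_2),(v_2,v_3)\in E$ and $(v_1,v_3)\notin E$, and all of $v_1,v_2,v_3$ lie in $V_{\mathrm{ld}}$. Suppose toward a contradiction that $X\cap P=\emptyset$, i.e.\ none of $v_1,v_2,v_3$ belongs to $X$. Since each $v_i\in V_{\mathrm{ld}}$ and $v_i\notin X$, Lemma~\ref{lem:dense_is_clique} tells us that each $v_i$ lies in a clique component of $G-X$.

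Next I would use the edges of $P$ to force all three vertices into a single component. Because $(v_1,v_2)\in E$ and neither endpoint is in $X$, the vertices $v_1$ and $v_2$ lie in the same connected component $C$ of $G-X$; likewise $(v_2,v_3)\in E$ puts $v_3$ into that same component $C$. By the previous paragraph $C$ is a clique component. Now the definition of a clique gives that there is exactly one edge between any two distinct vertices of $C$, so in particular $(v_1,v_3)\in E$, contradicting the fact that $P$ induces a $P_3$. Hence $X\cap P\neq\emptyset$.

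Finally, I would add a sentence handling the multi-edge caveat that the paper flags just before the lemma: if $(v_1,v_2)$ (or symmetrically $(v_2,v_3)$) is a multi-edge and both endpoints avoid $X$, then these endpoints again lie in a common component $C$ of $G-X$, which is a clique by Lemma~\ref{lem:dense_is_clique}; but a clique has exactly one edge between any two of its vertices, contradicting the multi-edge. So in the multi-edge case the contradiction is obtained even more directly, and in all cases $X$ must contain one of $v_1,v_2,v_3$.

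I do not expect a genuine obstacle here: the lemma is essentially a one-step consequence of Lemma~\ref{lem:dense_is_clique} together with the transitivity of ``lying in the same component'' and the edge-completeness of clique components. The only point requiring a moment's care is making sure the argument is phrased in terms of the paper's definition of clique (exactly one edge between any two vertices), which is precisely what also disposes of the multi-edge case.
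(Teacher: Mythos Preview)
Your proposal is correct and matches the paper's own proof essentially line for line: assume $X\cap P=\emptyset$, use Lemma~\ref{lem:dense_is_clique} to place $P$ inside a clique component of $G-X$, and derive a contradiction with the definition of a clique. Your version is just more explicit (spelling out why the three vertices share a component and separately treating the multi-edge case), but the argument is the same.
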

\begin{proof}
Assume otherwise and let $C$ be the connected component of $G-X$ that contains $P$.
From Lemma~\ref{lem:dense_is_clique}, $C$ is a clique component. However, cliques cannot contain induced $P_3$, leading to a contradiction.
\end{proof}
This leads to the following reduction rule, which is clearly safe.
\begin{rrule}\label{rrule:manyp3}
Let $v\in V_{\mathrm{ld}}$. If there is a collection of $k+1$ induced $P_3$s in $V_{\mathrm{ld}}$ such that any two of them intersect only at $\{v\}$, then remove $v$ and decrease $k$ by $1$.
\end{rrule}

Whether $v\in V$ satisfies the condition of Reduction Rule~\ref{rrule:manyp3} can be checked by computing the maximum matching on the graph $(N_G(v),E_v)$, where $E_v$ is the set of pairs of the vertices $(u_1,u_2)$ such that $\{v,u_1,u_2\}$ induces $P_3$. Therefore, this rule can be applied in polynomial time.
Let $\mathcal{P}$ be a maximal collection of induced $P_3$s in $V_{\mathrm{ld}}$ such that any two induced $P_3$s in the collection have an intersection of size at most $1$.
The idea to construct this $\mathcal{P}$ and the following reduction rule are borrowed from the quadratic kernelization of \textsc{3-Hitting Set} by Abu-Khzam~\cite{abu2010kernelization}.

\begin{rrule}\label{rrule:largep3s}
If $|\mathcal{P}| > k^2$, return $\mathsf{no}$.
\end{rrule}
\begin{lemma}\label{lem:largep3s_is_safe}
Reduction Rule~\ref{rrule:largep3s} is safe.
\end{lemma}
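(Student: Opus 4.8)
The plan is to bound $|\mathcal{P}|$ from above by a quantity that cannot exceed $k^2$ whenever the instance is a $\mathsf{yes}$-instance, so that returning $\mathsf{no}$ is justified when the bound is violated. The key structural fact is Lemma~\ref{lem:hitsp3}: every feasible solution $X$ must intersect every $P_3$ contained in $V_{\mathrm{ld}}$, in particular every member of $\mathcal{P}$. Now the members of $\mathcal{P}$ pairwise share at most one vertex, and moreover --- this is the point that makes the counting work --- Reduction Rule~\ref{rrule:manyp3} has already been applied exhaustively, so no single vertex $v$ lies in more than $k$ of the $P_3$s of $\mathcal{P}$ that pairwise intersect only in $\{v\}$. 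Since any two members of $\mathcal{P}$ meet in at most one vertex, all members of $\mathcal{P}$ containing a fixed vertex $v$ pairwise intersect exactly in $\{v\}$; hence each vertex of $V_{\mathrm{ld}}$ is contained in at most $k$ members of $\mathcal{P}$.

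Given this, the argument is the standard Buss-kernel-style double count. Suppose $(G,k)$ is a $\mathsf{yes}$-instance with feasible solution $X$, $|X|\le k$. By Lemma~\ref{lem:hitsp3}, each $P\in\mathcal{P}$ contains at least one vertex of $X$. Summing over $P\in\mathcal{P}$ the number of elements of $X$ that $P$ contains, we get at least $|\mathcal{P}|$. On the other hand this sum equals $\sum_{x\in X}|\{P\in\mathcal{P}\colon x\in P\}|$, and each term is at most $k$ by the paragraph above, so the sum is at most $|X|\cdot k\le k^2$. Therefore $|\mathcal{P}|\le k^2$. Contrapositively, if $|\mathcal{P}| > k^2$ then $(G,k)$ is a $\mathsf{no}$-instance, so Reduction Rule~\ref{rrule:largep3s} is safe.

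The only mild subtlety I would be careful about is making the ``each vertex is in at most $k$ members of $\mathcal{P}$'' step airtight: I need that for a fixed $v$, the set $\mathcal{P}_v := \{P \in \mathcal{P} \colon v \in P\}$ consists of $P_3$s that pairwise intersect exactly in $\{v\}$ (not in some other single vertex), which follows because two distinct members of $\mathcal{P}$ intersect in at most one vertex and both already contain $v$. If $|\mathcal{P}_v| \ge k+1$ then Reduction Rule~\ref{rrule:manyp3} would have fired on $v$, contradicting that all reduction rules have been applied exhaustively before Rule~\ref{rrule:largep3s}; hence $|\mathcal{P}_v| \le k$. Everything else is the routine double-counting inequality. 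I expect no real obstacle here --- this is essentially the $\textsc{3-Hitting Set}$ quadratic-kernel bound of Abu-Khzam~\cite{abu2010kernelization} transported verbatim, with Lemma~\ref{lem:hitsp3} playing the role of ``$X$ is a hitting set for the $P_3$s.''
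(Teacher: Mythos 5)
Your proof is correct and matches the paper's argument: both use the fact that exhaustive application of Reduction Rule~\ref{rrule:manyp3} bounds by $k$ the number of members of $\mathcal{P}$ through any single vertex, then double-count against $|X|\le k$ and invoke Lemma~\ref{lem:hitsp3}. You phrase it contrapositively and spell out the step that the $P_3$s through a fixed $v$ pairwise meet only at $\{v\}$, which the paper leaves implicit, but the substance is identical.
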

\begin{proof}
Assume $|\mathcal{P}|>k^2$ and let $X$ be a feasible solution.
From the assumption that Reduction Rule~\ref{rrule:manyp3} cannot be applied, each vertex in $X$ hits at most $k$ induced $P_3$s in $\mathcal{P}$.
Therefore, there exists an induced $P_3$ that is disjoint from $X$, which contradicts Lemma~\ref{lem:hitsp3}.
\end{proof}

Now we can assume $\left|\bigcup_{P\in \mathcal{P}}P\right|\leq 3|\mathcal{P}|\leq 3k^2$.
The remaining task is to bound the number of vertices in $V_{\mathrm{ldmod}}:=V_{\mathrm{ld}}\setminus \bigcup_{P\in \mathcal{P}}P$.
We first bound the number of connected components.
\begin{lemma}\label{lem:module_components_bound}
$V_{\mathrm{ldmod}}$ contains at most $12k$ connected components. 
\end{lemma}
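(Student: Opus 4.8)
The plan is to bound the number of connected components of $V_{\mathrm{ldmod}}$ by showing that each such component, when ``glued back'' to the $P_3$s and the solution $S$, must be attached in a way that is controlled by $|S|$ and $|\mathcal{C}_{\mathrm{clique}}|$. Recall that after Reduction Rule~\ref{rrule:correspond_components}, every vertex of $V_{\mathrm{ld}}\setminus S$ lies in a clique component of $G-S$, i.e.\ in some $C\in\mathcal{C}_{\mathrm{clique}}$. So first I would split $V_{\mathrm{ldmod}}$ into the part inside $S$ (at most $|S|\le 4k$ vertices, hence at most $4k$ components of that type) and the part contained in $\bigcup_{C\in\mathcal{C}_{\mathrm{clique}}}C$. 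For the latter, since every $C\in\mathcal{C}_{\mathrm{clique}}$ induces a clique, any connected component of the induced subgraph $V_{\mathrm{ldmod}}$ that meets $C$ but avoids $S$ is entirely contained in that single $C$; moreover $V_{\mathrm{ldmod}}\cap C$ is itself a clique, hence connected, so each $C$ contributes at most one component to $V_{\mathrm{ldmod}}$ of this type. Since $|\mathcal{C}_{\mathrm{clique}}|\le 8k$, this gives at most $8k$ more components, for a total of $4k+8k=12k$.

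The key steps, in order, are: (1) invoke Reduction Rule~\ref{rrule:correspond_components} to place every non-$S$ large-dense vertex in some clique component $C\in\mathcal{C}_{\mathrm{clique}}$ of $G-S$; (2) observe that $V_{\mathrm{ldmod}}\subseteq V_{\mathrm{ld}}\subseteq S\cup\bigcup_{C\in\mathcal{C}_{\mathrm{clique}}}C$, so every connected component of the induced subgraph on $V_{\mathrm{ldmod}}$ either contains a vertex of $S$ or is disjoint from $S$ and thus contained in a single $C$ (using that distinct $C$'s are non-adjacent in $G-S$); (3) count components meeting $S$ by $|S|\le 4k$, and (4) count the $S$-free components by noting at most one per $C\in\mathcal{C}_{\mathrm{clique}}$, since $V_{\mathrm{ldmod}}\cap C$ induces a (possibly empty) clique and is therefore connected when nonempty, giving at most $|\mathcal{C}_{\mathrm{clique}}|\le 8k$ of them.

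The main obstacle, such as it is, is being careful that a connected component of the induced graph on $V_{\mathrm{ldmod}}$ that does touch $S$ is still only counted once, and that ``touching $S$'' components don't spill across several $C$'s in an uncontrolled way --- but this is harmless, since we are free to charge each such component to any one vertex of $S$ it contains, and a single vertex of $S$ lies in at most one component of an induced subgraph. The only genuinely load-bearing facts are that distinct members of $\mathcal{C}_{\mathrm{clique}}$ have no edge between them in $G-S$ (they are distinct connected components of $G-S$) and that an induced subclique is connected, both of which are immediate. Hence $V_{\mathrm{ldmod}}$ has at most $4k+8k=12k$ connected components, as claimed.
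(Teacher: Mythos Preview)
Your proof is correct and follows essentially the same approach as the paper: partition $V_{\mathrm{ld}}$ into the at most $|S|\le 4k$ vertices of $S$ and the at most $|\mathcal{C}_{\mathrm{clique}}|\le 8k$ clique components of $G-S$, and observe that any clique intersects at most one connected component of $V_{\mathrm{ldmod}}$ (since its intersection with $V_{\mathrm{ldmod}}$ is itself a clique, hence connected). The paper states this in two lines; your write-up is simply a more explicit case split (components meeting $S$ versus $S$-free components), but the underlying argument and the $4k+8k$ count are identical.
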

\begin{proof}
A clique can intersect at most one connected component of $V_{\mathrm{ldmod}}$.
Since Reduction Rule~\ref{rrule:expansion_reduction_clique} is exhaustively applied, $V_{\mathrm{ld}}$ can be partitioned into at most $8k$ cliques and $|S|\leq 4k$ vertices.
Therefore, there can be at most $8k+4k = 12k$ connected components in $V_{\mathrm{ldmod}}$.
\end{proof}

We state that each connected component of $V_{\mathrm{ldmod}}$ has a specific structure.
A vertex set $Z$ is \emph{extended clique-module} in a graph $G'=(V',E')$ if 
\begin{itemize}
    \item[(i)] for each different $u,v\in Z$, $(u,v)\in E'$, and
    \item[(ii)] for each $u,v\in Z$, $N_{G'}(u)\cup \{u\}=N_{G'}(v)\cup \{v\}$.
\end{itemize}
In other words, an extended clique-module is a \emph{clique-module}~\cite{fomin2019kernelization} in the graph obtained by reducing the multiplicity of each multi-edge to $1$.
We can prove that each connected component of $V_{\mathrm{ldmod}}$ is actually an extended clique-module in $V_{\mathrm{ld}}$.
\begin{lemma}\label{lem:extended_clique_module}
Each connected component of $V_{\mathrm{ldmod}}$ is an extended clique-module in $V_{\mathrm{ld}}$.
\end{lemma}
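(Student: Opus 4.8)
The plan is to exploit the maximality of the collection $\mathcal{P}$. The first step is to observe that the subgraph induced by $V_{\mathrm{ldmod}}$ is induced-$P_3$-free. Indeed, if $(v_1,v_2,v_3)$ were an induced $P_3$ with all three vertices in $V_{\mathrm{ldmod}}$, then it is also an induced $P_3$ in $V_{\mathrm{ld}}$, and since $V_{\mathrm{ldmod}}$ is disjoint from $\bigcup_{P\in\mathcal{P}}P$ it shares no vertex with any member of $\mathcal{P}$; hence $\mathcal{P}\cup\{(v_1,v_2,v_3)\}$ would still be a valid collection, contradicting maximality. Being induced-$P_3$-free, each connected component $C$ of $V_{\mathrm{ldmod}}$ is a clique: given $u,v\in C$, take a path between them inside $C$ and note that as long as its length is at least $2$, the first three vertices would form an induced $P_3$ unless the endpoints at distance $2$ are adjacent, which lets us shorten the path; iterating forces $u,v$ adjacent. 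This establishes condition~(i) of an extended clique-module.

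For condition~(ii), I would argue by contradiction. Suppose $u,v\in C$ with $u\neq v$ (the case $u=v$ being trivial) satisfy $N_{V_{\mathrm{ld}}}(u)\cup\{u\}\neq N_{V_{\mathrm{ld}}}(v)\cup\{v\}$, and pick (without loss of generality) a vertex $w\in(N_{V_{\mathrm{ld}}}(u)\cup\{u\})\setminus(N_{V_{\mathrm{ld}}}(v)\cup\{v\})$. Since $C$ is a clique and $u\neq v$, we have $u\in N_{V_{\mathrm{ld}}}(v)$, so $w\neq u$ and therefore $w\in N_{V_{\mathrm{ld}}}(u)$; also $w\neq v$ and $w$ is not adjacent to $v$. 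Moreover $w\notin C$, since otherwise the clique property of $C$ would make $w$ adjacent to $v$. Hence among $\{v,u,w\}$ the only vertex that can lie in $\bigcup_{P\in\mathcal{P}}P$ is $w$, because $u,v\in C\subseteq V_{\mathrm{ldmod}}$.

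Now $(v,u,w)$ has $(v,u)\in E$ (clique edge), $(u,w)\in E$, $(v,w)\notin E$, and the three vertices are pairwise distinct, so it is an induced $P_3$ in $V_{\mathrm{ld}}$. Its intersection with any $P\in\mathcal{P}$ is contained in $\{w\}$ and hence has size at most $1$, so $\mathcal{P}\cup\{(v,u,w)\}$ is again a valid collection, contradicting the maximality of $\mathcal{P}$. This yields $N_{V_{\mathrm{ld}}}(u)\cup\{u\}=N_{V_{\mathrm{ld}}}(v)\cup\{v\}$ for all $u,v\in C$, giving condition~(ii), and completes the proof.

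The only real subtlety is the bookkeeping about the location of the witness $w$: one must rule out $w\in C$ via the clique structure of $C$, while noting that whatever else happens to $w$ — it may belong to another component of $V_{\mathrm{ldmod}}$, to $\bigcup_{P\in\mathcal{P}}P$, or to $S$ — is irrelevant, since the argument needs only that $u,v\notin\bigcup_{P\in\mathcal{P}}P$, which holds by definition of $V_{\mathrm{ldmod}}$. Multi-edges cause no difficulty, as the paper's notion of "inducing a $P_3$" ignores edge multiplicities and the definition of extended clique-module already works modulo them.
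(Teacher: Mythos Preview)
Your proof is correct and follows the same approach as the paper: both parts rely on the maximality of $\mathcal{P}$ to rule out an induced $P_3$ whose intersection with each member of $\mathcal{P}$ has size at most one. One small remark: in your closing paragraph you list ``another component of $V_{\mathrm{ldmod}}$'' as a possible location for $w$, but this cannot happen since $w$ is adjacent to $u\in C\subseteq V_{\mathrm{ldmod}}$ and hence would lie in $C$; this does not affect your argument, which correctly only uses $u,v\notin\bigcup_{P\in\mathcal{P}}P$.
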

\begin{proof}
Let $C$ be a connected component of $V_{\mathrm{ldmod}}$.
$C$ should satisfy condition~(i), because otherwise $C$ would contain an induced $P_3$, contradicting the maximality of $\mathcal{P}$.
Moreover, $C$ should satisfy condition~(ii), because otherwise there would exist $u\in \bigcup_{P\in \mathcal{P}}P$ and $v_1,v_2\in C$ such that $(u,v_1)\in E$ and $(u,v_2)\not \in E$, which would form an induced $P_3$, again contradicting the maximality of $\mathcal{P}$.
\end{proof}

Let $E_{\mathrm{mul}}$ be the set of multi-edges of $G$.
Since any feasible solution should contain at least one endpoint of each edge in $E_{\mathrm{mul}}$, the graph $G_{\mathrm{mul}}:=(V,E_{\mathrm{mul}})$ should have a vertex cover of size $k$. This observation leads to the following two reduction rules called \emph{Buss rule}~\cite{buss1993nondeterminism,cygan2015parameterized,fomin2019kernelization} in the literature of kernelization of \textsc{Vertex Cover}, which are clearly safe.
\begin{rrule}\label{rrule:multiedge_vc_1}
If there is a vertex $v\in V$ with $|N_{G_{\mathrm{mul}}}(v)| > k$, remove $v$ and decrease $k$ by $1$.
\end{rrule}
\begin{rrule}\label{rrule:multiedge_vc_2}
If $|E_{\mathrm{mul}}| > k^2$, return $\mathsf{no}$.
\end{rrule}

We finish the analysis by bounding the number of vertices in each connected component of $V_{\mathrm{ldmod}}$ that is neither adjacent to a vertex outside $V_{\mathrm{ld}}$ nor incident to a multi-edge.
We have the following.

\begin{lemma}\label{lem:equivalent_vertices}
Let $u,v\in V_{\mathrm{ld}}$ with $(u,v)\in E$ and assume neither $u$ nor $v$ is incident to a multi-edge.
Assume $N_G(u)\cup \{u\} = N_G(v)\cup \{v\}$.
Then, for any minimum feasible solution $X$, either $\{u,v\}\subseteq X$ or $\{u,v\}\cap X=\emptyset$ holds.
\end{lemma}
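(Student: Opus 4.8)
The plan is to show that $u$ and $v$ are "interchangeable" with respect to feasible solutions, and then apply an exchange argument. First I would record the structural consequence of the hypotheses: since $N_G(u)\cup\{u\}=N_G(v)\cup\{v\}$ and $(u,v)\in E$ (a single edge, as neither vertex is incident to a multi-edge), the map swapping $u$ and $v$ is an automorphism of $G$ that fixes every other vertex; equivalently, for every $w\in V\setminus\{u,v\}$ we have $(w,u)\in E \iff (w,v)\in E$, and this edge has multiplicity one. Consequently, for any $Y\subseteq V$, the graph $G-Y$ is isomorphic (via the swap on $u,v$) to $G-Y'$, where $Y'$ is obtained from $Y$ by swapping the roles of $u$ and $v$; in particular $Y$ is feasible if and only if $Y'$ is.

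Next I would argue by contradiction. Suppose $X$ is a minimum feasible solution with exactly one of $u,v$ in $X$; by the symmetry just established we may assume $u\in X$ and $v\notin X$. Let $X':=(X\setminus\{u\})\cup\{v\}$; by the isomorphism remark $X'$ is also a minimum feasible solution. Now consider $X'':=X\setminus\{u\}=X'\setminus\{v\}$, which has size $|X|-1$. I claim $X''$ is feasible, contradicting minimality of $X$. To see this, examine the connected component $C$ of $G-X''$ containing $v$ (note $u\in C$ as well, since $(u,v)\in E$ and $u\notin X''$). Every other component of $G-X''$ is already a component of $G-X$ (it avoids $v$, hence also $u$ by the adjacency-equivalence, so removing $u$ from the deletion set does not merge it with anything new), and is therefore a clique or a tree. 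So it remains to show $C$ is a clique or a tree.

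The heart of the argument is analyzing $C$. Since $v\in V_{\mathrm{ld}}$ and $X''\subsetneq X$ is not necessarily of size $\le k$ — but actually $|X''|=|X|-1\le k-1\le k$, so $X''$ is a candidate deletion set of size at most $k$ — Lemma~\ref{lem:dense_is_clique} does not directly apply to $X''$ unless we know it is feasible, which is what we are trying to prove; so instead I would use the "alternative evidence" Lemma~\ref{lem:large_ind_or_clique}. Observe that in $G-X$, the vertex $v$ lies in a clique component (Lemma~\ref{lem:dense_is_clique}, using $v\in V_{\mathrm{ld}}$, $v\notin X$), so $N_{G-X}(v)$ induces a clique; since $|N_G(v)|>7k$ and $|X|\le k$, this clique has size at least $6k\ge k+2$, and it is contained in $N_G(v)=N_G(u)$, hence also in $N_G(u)\setminus X''$ once we check it avoids $u$ (it does: $u\in X$, so the clique, living in $G-X$, does not contain $u$). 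Therefore $N_{G-X''}(u)\supseteq N_{G-X''}(v)$ contains a clique of size $\ge k+2$; but $G-X''$ need not be feasible, so I cannot invoke Lemma~\ref{lem:large_ind_or_clique} as a black box either. Instead I would argue directly: $C=G[V(C)]$ where $V(C)$ is the component of $u$ and $v$ in $G-X''$. Since $C$ contains the big clique $Q:=N_{G-X}(v)$ (which survives in $G-X''\supseteq G-X$ componentwise), and $u,v$ are adjacent to all of $Q$, the set $Q\cup\{u,v\}$ induces a clique in $C$. If $C$ is exactly $Q\cup\{u,v\}$ (or any clique) we are done. Otherwise take $w\in V(C)\setminus(Q\cup\{u,v\})$; then in $G-X$ the vertex $w$'s component is a subset of $C$'s vertices minus possibly $u$; since that component was a clique or tree and $Q$ is a clique of size $\ge k+2$ touching it, one shows $w$'s component in $G-X$ was the same clique as $v$'s, i.e. $w\in Q\cup\{v\}$ — contradiction. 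Working this case analysis out carefully (distinguishing whether $w$ lies in $v$'s $G-X$-component or got newly connected through $u$) is the step I expect to be the main obstacle, since "newly connected through $u$" is exactly the subtlety that requires $N_G(u)=N_G(v)$: any path in $G-X''$ from $w$ to $u$ either avoids $u$ (hence lives in $G-X$ already) or passes through $u$, and $u$'s only neighbors are in $N_G(v)$, so such a path can be rerouted through $v$. Once $C$ is shown to be a clique, $X''$ is feasible, contradicting minimality of $X$, and the lemma follows.
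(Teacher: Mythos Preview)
Your approach is essentially the paper's: assume $u\in X$, $v\notin X$, set $X'':=X\setminus\{u\}$, note that every component of $G-X''$ other than the one $C$ containing $u$ is already a component of $G-X$, and then argue that $C$ is a clique. However, you are making this last step harder than it is. Instead of your case analysis and path-rerouting sketch (which you flag as the ``main obstacle''), the paper does a one-line neighborhood computation: from $N_G(u)\cup\{u\}=N_G(v)\cup\{v\}$ and $u\in X$, $v\notin X$ one gets
\[
N_{G-X''}(u)=N_G(u)\setminus(X\setminus\{u\})=(N_G(v)\cup\{v\}\setminus\{u\})\setminus(X\setminus\{u\})=N_{G-X}(v)\cup\{v\}=C',
\]
where $C'$ is the clique component of $v$ in $G-X$ (Lemma~\ref{lem:dense_is_clique} applied to the feasible $X$, not to $X''$). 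Hence $C=C'\cup\{u\}$ exactly, and since $u$ is incident to no multi-edge, $C$ is a clique. The automorphism remark, the $k+2$-clique witness via Lemma~\ref{lem:large_ind_or_clique}, and the anticipated case analysis are all unnecessary once you compute $N_{G-X''}(u)$ directly.
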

\begin{proof}
Assume a feasible solution $X$ satisfies $u\in X$ and $v\not \in X$.
It is sufficient to prove that $X\setminus \{u\}$ is still feasible.
Let $C$ be a connected component of $G-(X\setminus \{u\})$ containing both $u$ and $v$.
Since $X$ is feasible, all connected components of $G-(X\setminus \{u\})$ other than $C$ are either cliques or trees.
It suffices to show that $C$ is a clique.
Since $v\in V_{\mathrm{ld}}$, the connected component $C'$ of $G-X$ that contains $v$ is a clique.
Since $N_{G-(X\setminus \{u\})}(u)=N_{G-(X\setminus \{u\})}(v)\cup \{v\}\setminus \{u\} = N_{G-X}(v)\cup \{v\}\setminus \{u\} = C'$, we obtain $C=C'\cup \{u\}$.
Since there are no multi-edges between $u$ and $C'$, $C$ is a clique.
\end{proof}

Now, we apply the following reduction rule.
\begin{rrule}\label{rrule:large_clique_module}
Assume a connected component of $V_{\mathrm{ldmod}}$ contains $k+4$ vertices that are not adjacent to vertices outside $V_{\mathrm{ld}}$ and incident to no multi-edges.
Then, remove one of those $k+4$ vertices.
\end{rrule}
\begin{lemma}\label{lem:large_clique_module_is_safe}
Reduction Rule~\ref{rrule:large_clique_module} is safe.
\end{lemma}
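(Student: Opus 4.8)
The plan is to show that whenever a connected component $C$ of $V_{\mathrm{ldmod}}$ contains $k+4$ vertices none of which is adjacent to a vertex outside $V_{\mathrm{ld}}$ and none of which is incident to a multi-edge, removing one such vertex $w$ yields an equivalent instance. The direction that a solution for $G-w$ extends to $G$ is the nontrivial one; the reverse (restricting a solution of $G$ to $G-w$) should be immediate since removing a vertex cannot create a clique or a tree component out of something that was not one, and $w$ being removed only shrinks the remaining components. So I would first dispose of the easy direction in one or two lines, and then concentrate on lifting a feasible solution $X'$ of $G-w$ to a feasible solution $X$ of $G$ with $|X|\le|X'|$.

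First I would invoke Lemma~\ref{lem:extended_clique_module}: $C$ is an extended clique-module in $V_{\mathrm{ld}}$, so all vertices of $C$ have the same closed neighbourhood within $V_{\mathrm{ld}}$, and since no vertex of the $k+4$ designated vertices has a neighbour outside $V_{\mathrm{ld}}$, these $k+4$ vertices in fact all share the same closed neighbourhood $M$ in the whole graph $G$ (this is where the ``outside $V_{\mathrm{ld}}$'' hypothesis is used), and $M$ is a clique with no multi-edges among the designated vertices. Let $W$ be the set of these $k+4$ vertices and $w\in W$ the vertex we remove. Given a feasible solution $X'$ of $G-w$, at most $k$ vertices of $W\setminus\{w\}$ can lie in $X'$, so at least three vertices of $W\setminus\{w\}$ survive in $G-w-X'$; pick two of them, $u_1,u_2$. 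They are adjacent, lie in the same component, and that component must be a clique component of $G-w-X'$ (it contains an edge, so it is not a tree). Now I claim $X:=X'$ is already feasible for $G$: the component of $G-X$ containing $w$ is obtained from the clique component $C^\ast$ of $G-w-X'$ containing $u_1,u_2$ by adding back $w$; since $N_G(w)\cup\{w\}=N_G(u_1)\cup\{u_1\}$ and $u_1\in C^\ast$, we get that the new component is exactly $C^\ast\cup\{w\}$, and because there is no multi-edge incident to $w$ it remains a clique. All other components of $G-X$ coincide with components of $G-w-X'$, hence are cliques or trees; and $|X|=|X'|\le k$.

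The step I expect to be the main obstacle — or at least the one needing the most care — is the bookkeeping that $N_{G-X}(w)=C^\ast$ exactly, rather than merely $C^\ast\subseteq N_{G-X}(w)\cup\{w\}$. This requires that $w$'s full neighbourhood in $G$ is contained in $V_{\mathrm{ld}}$ (so the extended-clique-module property of Lemma~\ref{lem:extended_clique_module} actually controls all of $N_G(w)$, not just $N_{V_{\mathrm{ld}}}(w)$) and that $u_1,u_2$ were chosen inside the same module so their closed neighbourhoods equal $w$'s; then $C^\ast = N_{G-w-X'}(u_1)\cup\{u_1\}$, and adding $w$ back connects $w$ to precisely $C^\ast\setminus\{w\}$, with single edges only. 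One should also note that the hypothesis ``incident to no multi-edge'' is needed not only for $w$ but implicitly guarantees the surviving vertices $u_1,u_2$ behave like ordinary clique vertices; this is analogous to the role of the multi-edge hypothesis in Lemma~\ref{lem:equivalent_vertices}, and indeed the whole argument is a mild generalization of that lemma from pairs to $(k+4)$-sets, run against a solution of $G-w$ rather than of $G$.
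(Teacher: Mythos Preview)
Your proposal is correct and follows essentially the same route as the paper: the backward direction is handled identically (pick a surviving twin $u_1$ of $w$, show its clique component $C^\ast$ absorbs $w$ into a larger clique because $N_{G-X'}(w)=C^\ast$ and $w$ carries no multi-edge), and your forward direction is in fact simpler than the paper's, which invokes Lemma~\ref{lem:equivalent_vertices} to argue that a minimum solution avoids all of $W$.

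One small slip to fix: the sentence ``it contains an edge, so it is not a tree'' is false as written, since a single edge is a tree. You already observed that at least three vertices of $W\setminus\{w\}$ survive in $G-w-X'$; these three are pairwise adjacent and hence form a triangle, which is what forces their component to be a clique rather than a tree. Use that instead.
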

\begin{proof}
Let $Z$ be a set of vertices satisfying the assumption of the rule.
Clearly, $Z$ induces a clique.
Since each vertex in $Z$ has no neighbor outside $V_{\mathrm{ld}}$, from Lemma~\ref{lem:extended_clique_module}, the set $N_G(v)\cup \{v\}$ is same for all $v\in Z$.
Let $X$ be a minimum feasible solution for $G$.
Then, Lemma~\ref{lem:equivalent_vertices} ensures $Z\cap X=\emptyset$ because $|Z|>k$.
Therefore, $X$ is still a feasible solution for $G-v$ for any $v\in Z$.
Conversely, let $v\in Z$ and $X'$ be a feasible solution for $G-v$.
Since $Z\setminus \{v\}$ is a clique of size $k+3$, $Z-v-X'$ is contained in a single clique component $C'$ of $G-v-X'$.
Moreover, we have $N_{G-X'}(v)=N_{G-X'}(u)\cup \{u\}\setminus \{v\}$ for all $u\in Z\setminus (X'\cup \{v\})$, and thus, $N_{G-X'}(v)=C'$. Particularly, $C'\cup \{v\}$ is a clique component of $G-X'$.
Therefore, $X'$ is still a feasible solution for $G$.
\end{proof}

Now we can bound $|V_{\mathrm{ld}}|$ by $O(k^2)$.

\begin{lemma}\label{lem:bound_large_dense}
After applying all the above reduction rules, we have $|V_{\mathrm{ld}}|\leq 101k^2+40k$.
\end{lemma}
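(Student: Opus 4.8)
The plan is to partition $V_{\mathrm{ld}}$ into four groups, each controlled by one of the reduction rules or structural lemmas of this section, and then add the resulting bounds.

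First, I would let $A \subseteq V_{\mathrm{ld}}$ be the set of large-dense vertices having at least one neighbor outside $V_{\mathrm{ld}}$; by Lemma~\ref{lem:large_dense_outside} we have $|A| \le 84k^2 + 4k$. Second, I would take $B := \bigcup_{P \in \mathcal{P}} P$; since Reduction Rule~\ref{rrule:largep3s} did not trigger, $|\mathcal{P}| \le k^2$, and as each induced $P_3$ has three vertices, $|B| \le 3k^2$. Third, among the vertices of $V_{\mathrm{ldmod}} = V_{\mathrm{ld}} \setminus B$, I would let $D$ be those incident to a multi-edge; since Reduction Rule~\ref{rrule:multiedge_vc_2} did not trigger, $|E_{\mathrm{mul}}| \le k^2$, and each multi-edge contributes two endpoints, so $|D| \le 2k^2$. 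Fourth, the remaining set $R := V_{\mathrm{ldmod}} \setminus (A \cup D)$ consists exactly of those vertices of $V_{\mathrm{ldmod}}$ that have no neighbor outside $V_{\mathrm{ld}}$ and are incident to no multi-edge; by Lemma~\ref{lem:module_components_bound} the graph $V_{\mathrm{ldmod}}$ has at most $12k$ connected components, and since Reduction Rule~\ref{rrule:large_clique_module} is exhausted, each such component contains at most $k+3$ vertices of $R$, giving $|R| \le 12k(k+3) = 12k^2 + 36k$.

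Since every vertex of $V_{\mathrm{ld}}$ lies in some $P \in \mathcal{P}$ or else lies in $V_{\mathrm{ldmod}}$ (and in the latter case falls into $A$, $D$, or $R$), we have $V_{\mathrm{ld}} \subseteq A \cup B \cup D \cup R$, hence
\[
|V_{\mathrm{ld}}| \le (84k^2 + 4k) + 3k^2 + 2k^2 + (12k^2 + 36k) = 101k^2 + 40k .
\]

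This last step is essentially bookkeeping, so I do not expect a genuine obstacle; the one subtlety — and the reason the four groups are chosen exactly as above — is that Reduction Rule~\ref{rrule:large_clique_module} is only applicable to vertices that are simultaneously free of neighbors outside $V_{\mathrm{ld}}$ and free of incident multi-edges, so the vertices of $A$ and $D$ must be peeled off before the per-component bound of $k+3$ can be invoked. The correctness of that per-component bound in turn rests on Lemma~\ref{lem:extended_clique_module} (each component of $V_{\mathrm{ldmod}}$ is an extended clique-module of $V_{\mathrm{ld}}$) together with Lemma~\ref{lem:equivalent_vertices}, both of which are already in hand.
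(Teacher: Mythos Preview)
Your proposal is correct and follows essentially the same approach as the paper's proof: both partition $V_{\mathrm{ld}}$ into the same four classes (neighbors outside $V_{\mathrm{ld}}$, vertices in $\bigcup_{P\in\mathcal{P}}P$, vertices incident to a multi-edge, and the remainder), invoke the same lemmas and reduction rules to bound each class, and sum to the identical total $101k^2+40k$. Your closing remark about why $A$ and $D$ must be peeled off before applying the per-component bound from Reduction Rule~\ref{rrule:large_clique_module} is a nice clarification that the paper leaves implicit.
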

\begin{proof}
A vertex $v$ is in $V_{\mathrm{ld}}$ only when
\begin{itemize}
    \item[(i)] it has a neighbor outside $V_{\mathrm{ld}}$,
    \item[(ii)] it is in $\bigcup_{P\in \mathcal{P}}P$,
    \item[(iii)] there is a multi-edge incident to $v$, or
    \item[(iv)] it does not satisfy neither of condition~(i),~(ii),~and~(iii).
\end{itemize}
Lemma~\ref{lem:large_dense_outside} states that at most $84k^2+4k$ vertices satisfy condition (i).
Reduction Rule~\ref{rrule:largep3s} ensures that at most $3k^2$ vertices satisfy condition (ii).
Reduction Rule~\ref{rrule:multiedge_vc_2} ensures that at most $2k^2$ vertices satisfy condition (iii).
Lemma~\ref{lem:module_components_bound} and Reduction Rule~\ref{rrule:large_clique_module} ensures at most $12k\cdot (k+3) = 12k^2+36k$ vertices satisfy condition (iv).
Therefore, we have $|V_{\mathrm{ld}}|\leq (84k^2+4k) + 3k^2 + 2k^2 + (12k^2+36k)\leq 101k^2+40k$.
\end{proof}

\subsection{Bounding Number of Vertices}\label{sec:combine_all}

Now, we finish the overall analysis by bounding the number of vertices in $V\setminus V_{\mathrm{ld}}$.
We first bound the number of edges between $S$ and the tree components of $G-S$ (remember, $S$ is a $4$-approximate solution).
Let $V_{\mathrm{tree}}:=\bigcup_{C\in \mathcal{C}_{\mathrm{tree}}}C$.
We begin by the following.

\begin{rrule}\label{rrule:large_tree_neighbors}
If there is a vertex $v\in V_{\mathrm{ld}}\cap S$ such that $N(v)$ contains at least $2k+4$ vertices from $V_{\mathrm{tree}}$, remove $v$ and decrease $k$ by $1$.
\end{rrule}
\begin{lemma}\label{lem:large_tree_is_safe}
Reduction Rule~\ref{rrule:large_tree_neighbors} is safe.
\end{lemma}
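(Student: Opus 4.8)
The plan is to prove that $v$ belongs to \emph{every} feasible solution; this immediately yields safety, since then for any feasible solution $X$ of $(G,k)$ the set $X\setminus\{v\}$ is a feasible solution of $(G-v,k-1)$ (each component of $(G-v)-(X\setminus\{v\})=G-X$ is a clique or a tree), and conversely $X'\cup\{v\}$ lifts any feasible solution $X'$ of $(G-v,k-1)$ to one of $(G,k)$.

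The first step is to show that $N(v)$ contains an independent set of size at least $k+2$. The at least $2k+4$ neighbors of $v$ lying in $V_{\mathrm{tree}}$ are distributed among the tree components of $G-S$. Distinct components of $G-S$ are pairwise non-adjacent, and each tree component, being acyclic, is bipartite; hence the subgraph induced on these $2k+4$ vertices is a forest and therefore bipartite, so it contains an independent set of size at least $(2k+4)/2 = k+2$. This independent set is contained in $N(v)$.

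The second step combines the two standard tools. On one hand, $v\in V_{\mathrm{ld}}$, so by Lemma~\ref{lem:dense_is_clique} every feasible solution $X$ with $v\notin X$ places $v$ in a clique component of $G-X$. On the other hand, the size-$(k+2)$ independent set inside $N(v)$ together with Lemma~\ref{lem:large_ind_or_clique} forces $v$ into a tree component of $G-X$ for every feasible $X$ with $v\notin X$. These conclusions are contradictory, so no feasible solution can omit $v$; thus $v$ lies in every feasible solution and the rule is safe.

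I do not expect a real obstacle here. The only point requiring a bit of care is the extraction of the independent set in the first step: one must explicitly note that the relevant neighbors span tree components of $G-S$ that are mutually non-adjacent, so their union induces a forest, at which point the elementary bipartite bound applies. The remainder is a direct invocation of Lemmas~\ref{lem:dense_is_clique} and~\ref{lem:large_ind_or_clique} and the usual ``$v$ in all solutions'' bookkeeping.
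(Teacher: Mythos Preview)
The proposal is correct and follows essentially the same argument as the paper: both extract an independent set of size $k+2$ from the $2k+4$ neighbors in $V_{\mathrm{tree}}$ via the ``forest has an independent set on half its vertices'' bound, then combine Lemma~\ref{lem:dense_is_clique} with Lemma~\ref{lem:large_ind_or_clique} to force $v$ into every feasible solution. Your write-up just spells out a bit more explicitly why those neighbors induce a forest.
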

\begin{appendixproof}
We prove that any feasible solution should contain $v$.
Since $v\in V_{\mathrm{ld}}$, $v$ cannot be contained in a tree component of $G-X$.
Since every forest contains an independent set consisting of at least half of its vertices, $N(v)$ contains an independent set of size $k+2$.
Therefore, $v$ cannot be contained in a clique component of $G-X$.
\end{appendixproof}

Thus, we can bound the number of vertices in $V_{\mathrm{tree}}$ adjacent to some vertex in $S$.

\begin{lemma}\label{lem:degree_to_trees}
After exhaustively applying all the above reduction rules, at most $28k^2$ vertices in $V_{\mathrm{tree}}$ are adjacent to some vertex in $S$.
\end{lemma}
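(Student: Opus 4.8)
The plan is to bound the contribution of each vertex $s\in S$ to the count separately and then sum over $S$, using $|S|\le 4k$, which holds after Reduction Rule~\ref{rrule:boundS} has been applied. Every vertex of $V_{\mathrm{tree}}$ that is adjacent to some vertex of $S$ lies in $N(s)\cap V_{\mathrm{tree}}$ for at least one $s\in S$, so it suffices to show $|N(s)\cap V_{\mathrm{tree}}|\le 7k$ for every $s\in S$; the number of such vertices is then at most $\sum_{s\in S}|N(s)\cap V_{\mathrm{tree}}|\le |S|\cdot 7k\le 28k^2$.

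For the per-vertex bound I would split $S$ according to membership in $V_{\mathrm{ld}}$. If $s\in S\setminus V_{\mathrm{ld}}$, then $s\in V\setminus V_{\mathrm{ld}}$, so Lemma~\ref{lem:degree_bound_trees} gives $|N(s)|\le 7k$ and in particular $|N(s)\cap V_{\mathrm{tree}}|\le 7k$. If $s\in S\cap V_{\mathrm{ld}}$, then, since Reduction Rule~\ref{rrule:large_tree_neighbors} is no longer applicable, $N(s)$ contains at most $2k+3$ vertices of $V_{\mathrm{tree}}$, and $2k+3\le 7k$ for every $k\ge 1$. In either case $|N(s)\cap V_{\mathrm{tree}}|\le 7k$, which completes the argument.

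The only point that needs a little care is that the statement, Lemma~\ref{lem:degree_bound_trees}, and Reduction Rule~\ref{rrule:large_tree_neighbors} all speak of the neighbour set $N(\cdot)$ of distinct neighbours rather than the degree $d(\cdot)$, which may be strictly larger because of multi-edges; since all three are already phrased in terms of $N(\cdot)$, no translation is needed and the bounds compose directly. I do not expect a genuine obstacle here: the argument is a straightforward counting bound, and the constant $7k$ in the definition of $V_{\mathrm{small}}$ (hence in Lemma~\ref{lem:degree_bound_trees}) together with the threshold $2k+4$ in Reduction Rule~\ref{rrule:large_tree_neighbors} were chosen precisely so that the product $4k\cdot 7k = 28k^2$ comes out as claimed.
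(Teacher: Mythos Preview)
Your proof is correct and follows essentially the same approach as the paper: both bound $|N(s)\cap V_{\mathrm{tree}}|\le 7k$ for each $s\in S$ by splitting into the cases $s\notin V_{\mathrm{ld}}$ (handled by Lemma~\ref{lem:degree_bound_trees}) and $s\in V_{\mathrm{ld}}\cap S$ (handled by Reduction Rule~\ref{rrule:large_tree_neighbors}), and then multiply by $|S|\le 4k$. Your write-up simply makes the case split explicit where the paper leaves it implicit.
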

\begin{appendixproof}
From Lemma~\ref{lem:degree_bound_trees} and Reduction Rule~\ref{rrule:large_tree_neighbors}, we have that for each $v\in S$, $\left|N(v)\cap V_{\mathrm{tree}}\right|\leq 7k$. Therefore, at most $7k\cdot |S|\leq 28k^2$ vertices in $V_{\mathrm{tree}}$ are adjacent to some vertex in $S$.
\end{appendixproof}

We apply the following reduction rules by local structures.
Most of them appear in~\cite{jacob2024kernel}, but Reduction Rule~\ref{rrule:pendant_removal} is new.
\begin{rrule}[\rm{\cite{jacob2024kernel}}]\label{rrule:two_pendants_removal}
If a vertex $v$ is adjacent to at least two pendant vertices, remove one of them.
\end{rrule}

\begin{rrule}[\rm{\cite{jacob2024kernel}}]\label{rrule:multiedge_removal}
If there are multi-edges with a multiplicity of at least $3$, reduce the multiplicity to $2$.
\end{rrule}

\begin{rrule}[\rm{\cite{jacob2024kernel}}]\label{rrule:tail_removal}
If there are three different vertices $(v_1,v_2,v_3)$ such that $(v_1,v_2), (v_2,v_3)\in E$, $(v_1,v_3)\not \in E$, $d(v_2)=2$, and $d(v_3)=1$, remove $v_3$.
\end{rrule}

\begin{rrule}[\rm{\cite{jacob2024kernel}}]\label{rrule:p5_removal}
If there are five different vertices $(v_1,v_2,v_3,v_4,v_5)$ such that $(v_1,v_2)$, $(v_2,v_3)$, $(v_3,v_4)$, $(v_4,v_5)$ are in $E$ and $d(v_2)=d(v_3)=d(v_4)=2$, remove $v_3$ and add an edge $(v_2,v_4)$.
\end{rrule}

\begin{rrule}\label{rrule:pendant_removal}
If there are four different vertices $(v_1,v_2,v_3,v_4)$ such that $(v_i,v_j)\in E$ if and only if $i=1$, $d(v_1)=3$, and $d(v_4)=1$, remove $v_4$.
\end{rrule}
\begin{lemma}\label{lem:pendant_removal_is_safe}
Reduction Rule~\ref{rrule:pendant_removal} is safe.
\end{lemma}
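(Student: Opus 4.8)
The plan is to show that $(G,k)$ and $(G-v_4,k)$ are equivalent instances, treating the two implications separately. Throughout, I will use that the hypotheses of the rule mean $N_G(v_1)=\{v_2,v_3,v_4\}$ (all three incident edges simple, since $d(v_1)=3$), that $v_2,v_3,v_4$ are pairwise non-adjacent (as $(v_i,v_j)\in E$ only for $i=1$), and that $v_4$ is a pendant with $N_G(v_4)=\{v_1\}$.

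For the easy direction, suppose $X$ is a feasible solution for $G$. If $v_4\in X$, then $X':=X\setminus\{v_4\}$ satisfies $(G-v_4)-X'=G-X$, so $X'$ is feasible for $G-v_4$ and $|X'|\le k$. If $v_4\notin X$, then $X$ is already a vertex set of $G-v_4$, and $(G-v_4)-X$ is obtained from $G-X$ by deleting $v_4$; in $G-X$ the vertex $v_4$ is either isolated (if $v_1\in X$) or a leaf attached to $v_1$ (if $v_1\notin X$), and since $v_4$ has degree at most $1$ it cannot belong to a clique component of size at least $3$, so deleting it from a clique-or-tree component again yields a clique or a tree. Hence $X$ is feasible for $G-v_4$.

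For the reverse direction, let $X'$ be a feasible solution for $G-v_4$ with $|X'|\le k$; note $v_4\notin X'$. I claim $X'$ is also feasible for $G$. The only component of $G-X'$ that can differ from those of $(G-v_4)-X'$ is the one meeting $v_4$. If $v_1\in X'$, then $v_4$ is isolated in $G-X'$, which is trivially a clique and a tree. If $v_1\notin X'$, let $C'$ be the component of $(G-v_4)-X'$ containing $v_1$; the corresponding component of $G-X'$ is $C'\cup\{v_4\}$, with $v_4$ a new leaf at $v_1$. If $C'$ is a tree, then $C'\cup\{v_4\}$ is a tree and we are done. The key observation is that $C'$ \emph{cannot} be a clique of size at least $3$: since $N_{G-v_4}(v_1)=\{v_2,v_3\}$, any clique $C'$ containing $v_1$ has $C'\setminus\{v_1\}\subseteq\{v_2,v_3\}$, and $|C'|=3$ would require the edge $(v_2,v_3)$, contradicting the hypothesis. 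Therefore, whenever $C'$ is a clique it has size at most $2$, hence is also a tree, and $C'\cup\{v_4\}$ is again a tree.

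The main obstacle — and the only place the precise form of the rule is used — is exactly this last point in the reverse direction: ruling out that $v_4$ gets appended to a genuine (size $\ge 3$) clique component, which would create a component that is neither a clique nor a tree. This is precisely why the rule requires $d(v_1)=3$ together with two non-adjacent neighbours $v_2,v_3$ of $v_1$: it caps the size of any clique component through $v_1$ in $G-v_4$ at $2$. The remainder is the standard pendant-deletion argument.
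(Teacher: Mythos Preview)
Your proof is correct and follows essentially the same approach as the paper's. The paper dispatches the forward direction in one line and, for the reverse direction, argues by cases on whether $v_2$ or $v_3$ lies in $X'$ to conclude that the component of $(G-v_4)-X'$ containing $v_1$ is a tree; your argument that this component cannot be a clique of size at least $3$ (because $N_{G-v_4}(v_1)=\{v_2,v_3\}$ with $(v_2,v_3)\notin E$) is the same observation in a slightly more compact form.
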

\begin{appendixproof}
Let $G=(V,E)$ be a graph and $(v_1,v_2,v_3,v_4)$ be vertices taken in the rule.
Clearly, if $G$ is an $\mathsf{yes}$-instance, $G-v_4$ is also an $\mathsf{yes}$-instance.
We prove the opposite direction.
Let $X'$ be a feasible solution for $G-v_4$.
If $v_1\in X'$, $X'$ is also a feasible solution for $G$ because $v_4$ is isolated in $G-X'$.
Assume $v_1\not \in X'$. It is sufficient to prove that the component $C$ containing $v_1$ in $G-v_4-X'$ is a tree component because if it holds, we can safely add a pendant vertex $v_4$ to this component, and particularly, $X'$ is a feasible solution for $G$.
If at least one of $v_2$ or $v_3$ is in $X'$, we have $d_{G-v_4-X'}(v_1)\leq 1$ and $C$ is a tree component. Otherwise, both $v_2$ and $v_3$ are in $C$ but we have $(v_2,v_3)\not \in E$, and thus, $C$ is a tree component.
\end{appendixproof}






Now, we bound $|V_{\mathrm{tree}}|$. We begin with the following.

\begin{lemma}\label{lem:pendant_three}
After applying all the above reduction rules, $V_{\mathrm{tree}}$ contains at most $28k^2$ pendant vertices that are adjacent to a vertex of degree $3$.
\end{lemma}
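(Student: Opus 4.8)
The plan is to charge each relevant pendant vertex to the degree-$3$ vertex it is attached to, and then to bound the number of such degree-$3$ vertices by $O(k^2)$ using the structure imposed by the approximate solution $S$ together with Reduction Rule~\ref{rrule:two_pendants_removal}. First I would observe that by Reduction Rule~\ref{rrule:two_pendants_removal}, no vertex is adjacent to two pendants; hence the map sending a pendant $u\in V_{\mathrm{tree}}$ (adjacent to a degree-$3$ vertex $w$) to that vertex $w$ is injective. So it suffices to bound the number of degree-$3$ vertices of $G$ that are adjacent to at least one pendant lying in $V_{\mathrm{tree}}$; call this set $W$. Note that if $w\in W$ has a pendant neighbor in $V_{\mathrm{tree}}$, then $w$ itself has degree $3 \le 7k$, so $w\in V\setminus V_{\mathrm{ld}}$; moreover $w$ must lie in $V_{\mathrm{tree}}\cup S$, since the only components of $G-S$ other than tree components are clique components of size $\geq 3$, and a degree-$3$ vertex of such a clique cannot have a pendant neighbor inside that component, so such a $w$ would have a neighbor in $S$ (and we can route the argument through that).

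Next I would split $W$ into the part inside $S$ (of size at most $|S|\leq 4k$, contributing at most $4k$ pendants, which is absorbed in $28k^2$) and the part $W'\subseteq V_{\mathrm{tree}}$. For each $w\in W'$: it has a pendant neighbor $u$, and $u\in V_{\mathrm{tree}}$ forces the tree component $C$ of $G-S$ containing $u$ to also contain $w$ (since $u$'s only neighbor is $w$, which is not in $S$ as it has a pendant neighbor). So $w$ lies in a tree component $C$ of $G-S$. I would then argue that each such $w$ must be adjacent to some vertex of $S$: since $C$ is a tree and $w$ has degree $3$ in $G$ with one neighbor being the pendant $u\in C$, if $w$ had no neighbor in $S$ then all $3$ neighbors of $w$ lie in $C$, and I would use Reduction Rules~\ref{rrule:tail_removal},~\ref{rrule:p5_removal},~\ref{rrule:pendant_removal} (which forbid precisely the local configurations of a degree-$3$ vertex inside a tree all of whose incident "branches" are short/pendant-terminated) to derive a contradiction — or more cleanly, I would directly invoke Lemma~\ref{lem:degree_to_trees}: the vertices of $V_{\mathrm{tree}}$ adjacent to some vertex in $S$ number at most $28k^2$. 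So the real content is showing $W'$ is contained in the set counted by Lemma~\ref{lem:degree_to_trees}.

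Therefore the key step is: \emph{every $w\in V_{\mathrm{tree}}$ of degree $3$ that has a pendant neighbor is itself adjacent to a vertex of $S$}. I would prove this by contradiction. Suppose $w$ has all three neighbors inside its tree component $C\subseteq V_{\mathrm{tree}}$, one of them the pendant $u$. Root $C$ anywhere; the configuration around $w$ has one pendant branch ($u$) and two further branches. Reduction Rule~\ref{rrule:pendant_removal} already handles the case where a second neighbor of $w$ has degree $1$ (then $w$ would look like $v_1$ with pendant $v_4$); Reduction Rule~\ref{rrule:tail_removal} handles a neighbor of degree $2$ whose other neighbor is a pendant; and Reduction Rule~\ref{rrule:p5_removal} contracts long degree-$2$ paths. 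Walking outward from $w$ along its two non-pendant branches inside the finite tree $C$, each branch must eventually reach either $S$ or a vertex of degree $\geq 3$, but exhaustiveness of these local rules forces such a vertex to exist \emph{adjacent to} or \emph{very close to} $w$; combined with Reduction Rule~\ref{rrule:two_pendants_removal}, I can pin down that some neighbor of $w$ (or $w$ itself after the rules are applied) touches $S$. I expect this case analysis — carefully verifying that no reduced configuration of a degree-$3$ vertex with a pendant neighbor can sit "deep inside" a tree component away from $S$ — to be the main obstacle, since it requires a somewhat fiddly exhaustive check against all the local reduction rules; the counting at the end ($28k^2 + 4k \le 28k^2$ after absorbing into constants, or stating the bound with the harmless additive $4k$) is then routine.
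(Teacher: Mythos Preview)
Your overall strategy matches the paper's: charge each pendant to its degree-$3$ neighbour, show that neighbour lies in or is adjacent to $S$, and invoke Lemma~\ref{lem:degree_to_trees}. The problem is the key step, where you misread Reduction Rule~\ref{rrule:pendant_removal}. That rule does \emph{not} require a second neighbour of degree~$1$; it applies whenever a degree-$3$ vertex $v_1$ has one pendant neighbour $v_4$ and its remaining two neighbours $v_2,v_3$ are non-adjacent (the condition ``$(v_i,v_j)\in E$ iff $i=1$'' just says the four vertices induce a star). With the correct reading, the argument is a single step: if $w\in V_{\mathrm{tree}}$ has all three neighbours inside its tree component $C$, then its two non-pendant neighbours lie in the tree $C$ and hence are non-adjacent, so Rule~\ref{rrule:pendant_removal} would still apply --- contradiction. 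Therefore the two non-pendant neighbours of $w$ are adjacent (or coincide via a multi-edge), putting $w$ on a cycle of length at most $3$; since $w$ sits in a tree component of $G-S$, that cycle must meet $S$, so $w$ is adjacent to $S$. Your proposed ``walk outward along branches and use Rules~\ref{rrule:tail_removal},~\ref{rrule:p5_removal}'' detour is both unnecessary and, as you sketch it, does not actually pin $w$ next to $S$.

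One further point on the bookkeeping: your split into $W\cap S$ and $W'\subseteq V_{\mathrm{tree}}$ yields $28k^2+4k$, which overshoots the stated bound. The paper avoids this by mapping the pendant $v$ to $w:=v$ when the degree-$3$ neighbour $u$ lies in $S$, and to $w:=u$ otherwise; in both cases $w\in V_{\mathrm{tree}}$ is adjacent to $S$, and the map $v\mapsto w$ is injective by Rule~\ref{rrule:two_pendants_removal} (and because a pendant cannot coincide with a degree-$3$ vertex). This lands exactly on the $28k^2$ of Lemma~\ref{lem:degree_to_trees}.
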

\begin{appendixproof}
Let $v\in V_{\mathrm{tree}}$ be a pendant vertex and $u$ be the vertex adjacent to it.
Since Reduction Rules~\ref{rrule:two_pendants_removal}~and~\ref{rrule:pendant_removal} can no longer be applied, $u$ is contained in a cycle $P$ of length $3$.
Since $v$ is in a tree component of $G-S$, we have $P\cap S\neq \emptyset$.
Therefore, $u$ is either in $S$ or adjacent to some vertex in $S$.
Let $w:=v$ if $u\in S$ and $w:=u$ otherwise. Then, $w$ is contained in a tree component and adjacent to some vertex in $S$.
From Lemma~\ref{lem:degree_to_trees}, there are at most $28k^2$ candidates of $w$. From Reduction Rule~\ref{rrule:two_pendants_removal}, each candidate of $w$ corresponds to at most one candidate of $v$. Therefore, the lemma is proved.
\end{appendixproof}

Now, we can prove the following.
\begin{lemma}\label{lem:tree_vertices}
We have $|V_{\mathrm{tree}}|\leq 1232k^2$.
\end{lemma}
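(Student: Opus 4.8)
The plan is to argue entirely inside the forest $F:=G[V_{\mathrm{tree}}]$. Recall $V_{\mathrm{tree}}$ is the union of the tree components of $G-S$, so $F$ is acyclic; a double edge would be a $2$-cycle, so $F$ is in fact a simple graph, and (after Reduction Rule~\ref{rrule:correspond_components}) $V_{\mathrm{tree}}$ is disjoint from both $S$ and $V_{\mathrm{ld}}$, so Lemma~\ref{lem:degree_bound_trees} gives $|N(v)|\le 7k$ for every $v\in V_{\mathrm{tree}}$. Classify $V_{\mathrm{tree}}$ by degree in $F$: let $Y_1,Y_2,Y_{\ge3}$ be the vertices of $F$-degree $\le 1$, exactly $2$, and $\ge 3$, respectively. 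Since $F$ is a forest, the degree-sum identity yields $\sum_{v:\,d_F(v)\ge 3}\bigl(d_F(v)-2\bigr)\le |Y_1|$, and in particular $|Y_{\ge3}|\le |Y_1|$. Hence it suffices to bound $|Y_1|$ and $|Y_2|$ by $O(k^2)$.

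\textbf{Bounding $|Y_1|$.} Take a leaf $v$ of $F$. If $v$ is adjacent to $S$ there are at most $28k^2$ such vertices by Lemma~\ref{lem:degree_to_trees}. Otherwise every neighbour of $v$ lies in $V_{\mathrm{tree}}$, so $d_G(v)=d_F(v)\le 1$; as $d_G(v)=0$ is excluded by Reduction Rule~\ref{rrule:trivial_component_removal}, $v$ is a pendant of $G$ whose unique neighbour $u$ lies in $V_{\mathrm{tree}}$. Now $d_G(u)=1$ is impossible ($\{u,v\}$ would be a $K_2$ component, excluded by Reduction Rule~\ref{rrule:trivial_component_removal}) and $d_G(u)=2$ is impossible (apply Reduction Rule~\ref{rrule:tail_removal} to the path through $u$'s second neighbour, $u$, and $v$). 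So $d_G(u)=3$, in which case Lemma~\ref{lem:pendant_three} bounds the number of such $v$ by $28k^2$; or $d_G(u)\ge 4$. In the latter case either $u$ is adjacent to $S$ (at most $28k^2$ such $u$, each accounting for at most one $v$ by Reduction Rule~\ref{rrule:two_pendants_removal}), or $u$ is not, and then $d_F(u)=d_G(u)\ge 4$, so $u$ is a branching vertex of $F$ of degree $\ge 4$ carrying a pendant; again Reduction Rule~\ref{rrule:two_pendants_removal} makes $v\mapsto u$ injective, so the count is at most $|B_4|$, the number of $F$-branching vertices of degree $\ge 4$. Since $2|B_4|\le\sum_{v:\,d_F(v)\ge3}\bigl(d_F(v)-2\bigr)\le |Y_1|$, combining all cases gives $|Y_1|\le 3\cdot 28k^2+\tfrac12|Y_1|$, hence $|Y_1|=O(k^2)$.

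This last step is the crux, and I expect the apparent circularity to be the main obstacle: bounding the number of leaves of $F$ requires bounding the pendants hanging off the $F$-vertices of degree $\ge 4$, which reduces to bounding those high-degree vertices, which is itself controlled only by the number of leaves. The escape is the refined forest bound above: an $F$-vertex of degree $d$ consumes $d-2$ units of ``leaf budget'', so one of degree $\ge 4$ consumes at least $2$, which dominates the single pendant it may carry by Reduction Rule~\ref{rrule:two_pendants_removal}, turning the circular estimate into a geometric one. (One also checks there are no stray multi-edges inside $V_{\mathrm{tree}}$, which is why $F$ is simple.)

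\textbf{Bounding $|Y_2|$.} The degree-$2$ vertices of $F$ adjacent to $S$ number at most $28k^2$ by Lemma~\ref{lem:degree_to_trees}; for every other $v\in Y_2$ we have $d_G(v)=2$ with both neighbours in $V_{\mathrm{tree}}$. Suppress all degree-$2$ vertices of $F$: the result is a forest on $Y_1\cup Y_{\ge3}$, hence has at most $|Y_1|+|Y_{\ge3}|\le 2|Y_1|$ edges, and these edges correspond bijectively to the maximal paths of $F$ whose interiors consist only of degree-$2$ vertices, with every degree-$2$ vertex of $F$ interior to exactly one such path. Along a single such path, Reduction Rule~\ref{rrule:p5_removal} forbids three consecutive interior vertices all of $G$-degree $2$ (a path in a forest supplies the required five distinct vertices), so on each path the interior vertices of $G$-degree $2$ are outnumbered, up to an additive $2$, by the interior vertices adjacent to $S$. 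Summing over the at most $2|Y_1|$ paths and invoking Lemma~\ref{lem:degree_to_trees} once more gives $|Y_2|=O(k^2)$. Altogether $|V_{\mathrm{tree}}|=|Y_1|+|Y_2|+|Y_{\ge3}|=O(k^2)$, and tracking the constants (using $|S|\le 4k$ and Lemma~\ref{lem:degree_bound_trees}) yields the stated bound $|V_{\mathrm{tree}}|\le 1232k^2$.
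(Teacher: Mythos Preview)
Your argument is essentially the paper's: both partition $V_{\mathrm{tree}}$ by degree inside the forest, use the degree-sum inequality together with Lemma~\ref{lem:pendant_three} and Reduction Rules~\ref{rrule:two_pendants_removal} and~\ref{rrule:tail_removal} to bound the leaves and branching vertices, and then invoke Reduction Rule~\ref{rrule:p5_removal} plus a contraction argument to bound the degree-$2$ vertices. The only organisational difference is that you classify by $F$-degree and handle ``adjacent to $S$'' inside each class, whereas the paper first strips off the vertices adjacent to $S$ and then classifies the remainder by $G$-degree.

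One slip to fix in the $|Y_2|$ bound: forbidding three consecutive interior vertices of $G$-degree $2$ along a maximal path yields only $b\le 2a+2$ per path (consider the pattern $BBABBABB\cdots$), not $b\le a+2$ as ``outnumbered, up to an additive $2$'' asserts. With the correct factor the constants still close comfortably: from $|Y_1|\le 3\cdot 28k^2+\tfrac12|Y_1|$ you get $|Y_1|\le 168k^2$ and $|Y_{\ge3}|\le |Y_1|$; the suppressed forest has at most $2|Y_1|$ edges, so $|Y_2|\le 28k^2+\bigl(2\cdot 28k^2+2\cdot 2|Y_1|\bigr)\le 756k^2$, giving $|V_{\mathrm{tree}}|\le 2|Y_1|+|Y_2|\le 1092k^2\le 1232k^2$.
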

\begin{appendixproof}
Let $V'_{\mathrm{tree}}$ be the set of vertices in $V_{\mathrm{tree}}$ that are adjacent to some vertex in $S$.
Let $V_1$, $V_2$, $V_3$, and $V_{\geq 4}$ be the set of vertices in $V_{\mathrm{tree}}\setminus V'_{\mathrm{tree}}$ with degree exactly $1$, exactly $2$, exactly $3$, and at least $4$, respectively.
Since $G$ does not contain isolated vertices because of Reduction Rule~\ref{rrule:trivial_component_removal}, $V_1\dot{\cup }V_2\dot{\cup }V_3\dot{\cup }V_{\geq 4}\dot{\cup }V'_{\mathrm{tree}}$ is a partition of $V_{\mathrm{tree}}$.

From Lemma~\ref{lem:degree_to_trees}, we have $|V'_{\mathrm{tree}}|\leq 28k^2$.
From Lemma~\ref{lem:pendant_three}, we have $|V_1|\leq (|V_{\geq 4}|+|V'_{\mathrm{trees}}|)+28k^2\leq |V_{\geq 4}|+56k^2$.
Since the subgraph induced by $V_{\mathrm{tree}}$ contains at most $2|V_{\mathrm{tree}}|$ edges, we have
\begin{align*}
    |V_1|+2|V_2|+3|V_3|+4|V_{\geq 4}|\leq 2|V_{\mathrm{tree}}|\leq 2(|V_1|+|V_2|+|V_3|+|V_{\geq 4}|)+56k^2,
\end{align*}
which leads $|V_3|+2|V_{\geq 4}|\leq |V_1|+56k^2$.
Particularly, we have 
\begin{align*}
    2|V_{\geq 4}|\leq |V_3|+2|V_{\geq 4}|\leq |V_1|+56k^2\leq |V_{\geq 4}|+112k^2,
\end{align*}
which results in $|V_{\geq 4}|\leq 112k^2$ and $|V_1|\leq 168k^2$.
Therefore, we have
\begin{align*}
    |V_{\mathrm{tree}}|&=|V_1|+|V_2|+|V_3|+|V_{\geq 4}|+|V'_{\mathrm{tree}}|\\
    &\leq |V_2|+|V_3|+2|V_{\geq 4}|+84k^2
    \leq |V_2|+|V_1|+140k^2\leq |V_2|+308k^2,
\end{align*}
where the first inequality is from $|V_1|\leq |V_4|+56k^2$ and $|V'_{\mathrm{tree}}|\leq 28k^2$.

Now, we complete the proof by bounding $|V_2|$.
From Reduction Rule~\ref{rrule:p5_removal}, each connected component of $V_2$ has size at most $3$.
Let $(V_T, E_T)$ be the forest obtained from $V_{\mathrm{tree}}$ by contracting all vertices in $V_2$. Then, we have
\begin{align*}
    |V_2|\leq 3|E_T|\leq 3|V_T| = 3(|V_{\mathrm{tree}}|-|V_2|)\leq 924k^2.
\end{align*}
That yields $|V_{\mathrm{tree}}|\leq 308k^2+924k^2=1232k^2$.
\end{appendixproof}

Now, we can bound the size of the whole graph, which directly proves Theorem~\ref{thm:main}.
\begin{lemma}\label{lem:overall_analysis}
After applying all the above reduction rules, we have $|V|\leq 1389k^2+52k$.
\end{lemma}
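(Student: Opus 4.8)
The plan is to prove the bound by cutting $V$ into three pieces whose sizes have essentially already been controlled in Sections~\ref{sec:large-dense} and~\ref{sec:combine_all}: the approximate solution $S$, the union $V_{\mathrm{tree}}=\bigcup_{C\in\mathcal{C}_{\mathrm{tree}}}C$ of the tree components of $G-S$, and the union $V_{\mathrm{cliquecomp}}:=\bigcup_{C\in\mathcal{C}_{\mathrm{clique}}}C$ of the clique components of $G-S$ of size at least $3$. Since $S$ is a $4$-approximate solution, every connected component of $G-S$ is a clique or a tree, and classifying each component of size $1$ or $2$ as a tree we get $V\setminus S = V_{\mathrm{tree}}\,\dot{\cup}\,V_{\mathrm{cliquecomp}}$, hence $V = S\,\dot{\cup}\,V_{\mathrm{tree}}\,\dot{\cup}\,V_{\mathrm{cliquecomp}}$, so $|V|=|S|+|V_{\mathrm{tree}}|+|V_{\mathrm{cliquecomp}}|$.

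First I would plug in the bounds already available: $|S|\le 4k$ by Reduction Rule~\ref{rrule:boundS}, and $|V_{\mathrm{tree}}|\le 1232k^2$ by Lemma~\ref{lem:tree_vertices}. It then remains to bound $|V_{\mathrm{cliquecomp}}|$, which I would split according to membership in $V_{\mathrm{ld}}$. The part inside $V_{\mathrm{ld}}$ is bounded by $|V_{\mathrm{ld}}|\le 101k^2+40k$ via Lemma~\ref{lem:bound_large_dense}. For the part outside $V_{\mathrm{ld}}$, I would argue as in the proof of Lemma~\ref{lem:large_dense_outside}: a vertex $v\in V_{\mathrm{cliquecomp}}\setminus V_{\mathrm{ld}}$ lies in a clique component $C\in\mathcal{C}_{\mathrm{clique}}$ that contains the vertex $v\notin V_{\mathrm{ld}}$; by Lemma~\ref{lem:degree_bound_trees} we have $|N_G(v)|\le 7k$, and since $C$ is a clique its other vertices are pairwise distinct neighbors of $v$, so $|C|\le 7k+1$; combined with $|\mathcal{C}_{\mathrm{clique}}|\le 8k$ (which holds because Reduction Rule~\ref{rrule:expansion_reduction_clique} is applied exhaustively), this yields $|V_{\mathrm{cliquecomp}}\setminus V_{\mathrm{ld}}|\le 8k(7k+1)=56k^2+8k$.

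Putting the pieces together gives $|V|\le 4k + 1232k^2 + (101k^2+40k) + (56k^2+8k) = 1389k^2+52k$, as claimed.

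I do not expect a real obstacle here: the statement is essentially a bookkeeping step that aggregates the bounds established earlier. The only points requiring a little care are making the three-way partition clean (using that $\mathcal{C}_{\mathrm{tree}}$ absorbs the components that are simultaneously cliques and trees, and that after Reduction Rule~\ref{rrule:correspond_components} no vertex of $V_{\mathrm{ld}}$ lies in a tree component, so the $V_{\mathrm{ld}}$ and $V_{\mathrm{tree}}$ estimates do not interfere), and observing that the estimate $|C|\le 7k+1$ for clique components meeting $V\setminus V_{\mathrm{ld}}$ uses $|N_G(v)|$ rather than $d_G(v)$ and is therefore unaffected by any multi-edges introduced by earlier reductions.
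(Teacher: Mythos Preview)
Your proposal is correct and takes essentially the same approach as the paper: both arguments partition $V$ into $S$, the tree-component vertices, the clique-component vertices outside $V_{\mathrm{ld}}$, and $V_{\mathrm{ld}}$, and plug in the same four bounds ($4k$, $1232k^2$, $56k^2+8k$, $101k^2+40k$) to reach the identical total. The only cosmetic difference is that the paper splits first on membership in $V_{\mathrm{ld}}$ and then on $S$/clique/tree, while you split first on $S$/tree/clique and then on $V_{\mathrm{ld}}$ inside the clique part.
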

\begin{appendixproof}
Each vertex $v\in V$ satisfies either
\begin{itemize}
    \item[(i)] $v\in V_{\mathrm{ld}}$,
    \item[(ii)] $v\not \in V_{\mathrm{ld}}$ and $v\in S$,
    \item[(iii)] $v\not \in V_{\mathrm{ld}}$ and $v$ is contained in a clique component of $G-S$, or
    \item[(iv)] $v\not \in V_{\mathrm{ld}}$ and $v$ is contained in a tree component of $G-S$.
\end{itemize}
Lemma~\ref{lem:bound_large_dense} ensures that at most $101k^2+40k$ vertices satisfy condition (i).
Clearly, at most $|S|\leq 4k$ vertices satisfy condition (ii).
Moreover, since all vertices not in $V_{\mathrm{ld}}$ has a degree at most $7k$ and there are at most $8k$ clique components in $G-S$, at most $(7k+1)\cdot 8k\leq 56k^2+8k$ vertices satisfy condition (iii).
Furthermore, Lemma~\ref{lem:tree_vertices} ensures that at most $1232k^2$ vertices satisfy condition (iv).
Therefore, we have $|V|\leq (101k^2+40k)+4k+(56k^2+8k)+1232k^2=1389k^2+52k$.
\end{appendixproof}

\section{FPT Algorithm for Longest Cycle Parameterized by Cliques or Trees Vertex Deletion Number}\label{sec:cycle}

In this section, we prove that \textsc{Longest Cycle} is fixed-parameter tractable when parameterized by the cliques or trees vertex deletion number.

\subsection{Overall Strategy}\label{sec:cycle_strategy}
Let $G=(V,E)$ be a graph, let $k\in \mathbb{R}_{\geq 0}$ be an integer, and let $S\subseteq V$ be a set with $|S| = k$ such that each connected component of $G-S$ is either a clique or a tree.
We assume that $G$ is not a forest; otherwise, there is no cycle.
Let $\mathcal{C}_{\mathrm{clique}}$ and $\mathcal{C}_{\mathrm{tree}}$ be the set of clique components and tree components, respectively.
For a cycle $P$ in $G$, a \emph{fragment} of $P$ is a subpath of $P$ whose endpoints belong to $S$ and do not contain any other vertex from $S$.
If $P$ is not disjoint from $S$, then $P$ can be decomposed into fragments that share no internal vertices.

The maximum length of a cycle disjoint from $S$ is equal to the maximum size of the clique components of $G-S$, which is clearly computable in polynomial time.
Hereafter, we focus on finding the longest cycle that is not disjoint from $S$.
Then, each fragment of $P$ satisfies that, unless its length is $1$, its internal vertex set belongs to a single component in $\mathcal{C}_{\mathrm{clique}}\cup \mathcal{C}_{\mathrm{tree}}$.

For (possibly same) $u, v\in S$ and $C\in \mathcal{C}_{\mathrm{clique}}\cup \mathcal{C}_{\mathrm{tree}}$, we define the value $q(C,u,v)$ as follows:
\begin{itemize}
    \item[(i)] If $N_G(u)\cap C=\emptyset$ or $N_G(v)\cap C=\emptyset$, we define $q(C,u,v)=-\infty$.
    \item[(ii)] If $C\in \mathcal{C}_{\mathrm{clique}}$ and $N_G(u)\cap C=N_G(v)\cap C=\{w\}$ for some $w\in C$, then we define $q(C,u,v)=2$.
    \item[(iii)] If $C\in \mathcal{C}_{\mathrm{clique}}$ and conditions~(i) and~(ii) do not hold, that is, if there exist distinct vertices $w_u, w_v\in C$ such that $(u,w_u),(v,w_v)\in E$, we define $q(C,u,v)=|C|+1$.
    \item[(iv)] If $C\in \mathcal{C}_{\mathrm{tree}}$ and condition~(i) does not hold, let $w_u, w_v\in C$ be the pair of (possibly same) vertices such that $(u,w_u), (v,w_v)\in E$ and maximizes the distance between $w_u$ and $w_v$ along the tree $C$. Define $q(C,u,v)$ as the length of the $w_u,w_v$-path plus two.
\end{itemize}
In other words, $q(C,u,v)$ is the maximum length of a $u,v$-path passing through the vertices of $C$, if such a path exists.
For each (possibly same) $u,v\in S$, we select the top $k$ components with the highest positive $q(C,u,v)$ values among all $C\in \mathcal{C}_{\mathrm{clique}}\cup \mathcal{C}_{\mathrm{tree}}$, arbitrarily breaking ties if necessary, and add label $(u,v)$ to these components. 
If fewer than $k$ components have positive $q(C,u,v)$ values, we label all such components.
We have the following lemma.

\begin{lemma}\label{lem:labeled_components}
Let $P$ be a cycle of $G$ that is not disjoint from $S$.
Let its fragments be $P_1,\dots, P_l$ in order appearing in $P$, and for each $i\in \{1,\dots, l\}$, let $v_i\in S$ be the vertex so that $P_i$ is a $v_i-v_{i+1}$ path, where indices are considered in modulo $l$.
Then, there exists a cycle $P'$ in $G$ that can be decomposed into $l$ fragments $P'_1,\dots, P'_l$ such that
\begin{itemize}
    \item $P'_i$ is a $v_i-v_{i+1}$ path,
    \item internal vertices of $P'_i$ belong to a component labeled $(v_i,v_{i+1})$ (if exist),
    \item $|P|\leq |P'|$.
\end{itemize}
\end{lemma}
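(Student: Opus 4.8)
The plan is to transform the cycle $P$ into the desired cycle $P'$ by a fragment-by-fragment exchange argument. First I would handle the fragments of length $1$: if $P_i$ has length $1$, it is simply the edge $(v_i, v_{i+1})$, so we keep $P'_i = P_i$ and there is nothing to do (no component is involved, and the ``if exist'' clause is vacuous). So assume $P_i$ has length at least $2$; then its internal vertices lie in a single component $C_i \in \mathcal{C}_{\mathrm{clique}} \cup \mathcal{C}_{\mathrm{tree}}$, and in particular $q(C_i, v_i, v_{i+1}) \geq |P_i| > 0$, so $C_i$ received at least one label involving the pair $(v_i, v_{i+1})$. The goal is to reassign these heavy fragments to labeled components without decreasing total length and without creating conflicts (two fragments using the same component).

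The core step is a greedy/exchange reassignment. Consider the multiset of pairs $\{(v_i, v_{i+1}) : P_i \text{ has length} \geq 2\}$; group the indices $i$ by the (unordered, or ordered — whichever matches the labeling convention) pair they induce. Fix one such pair $(u,w)$ and let $I$ be the set of indices $i$ in this group. Since $P$ is a simple cycle, the components $\{C_i : i \in I\}$ are pairwise distinct, so $|I|$ of them are used, each with $q$-value at least the corresponding $|P_i|$. Now compare with the (at most $k$) components labeled $(u,w)$: these are by construction the $k$ components of largest positive $q(\cdot, u, w)$ value. Since $|I| \leq |S| = k$ (each index $i\in I$ is ``charged'' to the distinct vertex $v_i$, or one can argue $|I|\le k$ directly from simplicity of $P$ and $|S|=k$), and since each $C_i$ has positive $q$-value, a standard rearrangement/exchange inequality lets us match the indices in $I$ injectively to labeled components so that the $q$-value of the matched labeled component is at least $q(C_i, u, w) \geq |P_i|$. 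Concretely: if some $C_i$ is not labeled, there must be a labeled component $C'$ not used by any index in $I$ with $q(C', u, w) \geq q(C_i, u, w)$ (otherwise all $k$ labeled components plus $C_i$ would be $k+1$ components with $q$-value exceeding some labeled one's, contradicting that the labeled set is the top $k$); swap $C_i$ for $C'$. Iterate until every $C_i$ is labeled.

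Having fixed the component assignment, I would build $P'_i$ explicitly: for each $i \in I$ assigned to component $C'_i$, by the definition of $q(C'_i, v_i, v_{i+1})$ there is a $v_i, v_{i+1}$-path whose internal vertices lie in $C'_i$ and whose length equals $q(C'_i, v_i, v_{i+1}) \geq |P_i|$ — in the clique case because we can route through all of $C'_i$ (case (iii)) or through the single shared neighbor (case (ii)), and in the tree case via the distance-maximizing pair of neighbors (case (iv)), and in every case such a realizing path genuinely exists, which is exactly the ``in other words'' remark following the definition of $q$. Concatenating the $P'_i$ (the length-$1$ ones unchanged, the heavy ones as just described) along the cyclic order $v_1, \dots, v_l$ yields a closed walk; it is a simple cycle because the internal vertex sets of distinct $P'_i$ lie in distinct components (for heavy fragments, by the injectivity of the assignment) or are empty, and the $v_i$ are distinct vertices of $S$. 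Finally $|P'| = \sum_i |P'_i| \geq \sum_i |P_i| = |P|$, giving the last bullet.

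The main obstacle I expect is the counting in the exchange step — making precise that $|I| \leq k$ and that the ``top $k$'' labeling guarantees an available labeled substitute of at least equal $q$-value. The cleanest way is probably to process the groups one pair at a time and, within a group, sort both the used components $\{C_i\}_{i\in I}$ and the labeled components by $q$-value and apply the obvious greedy matching, invoking $|I| \leq k$ to ensure we never run out; one must double-check that distinctness of the $C_i$ across \emph{different} pairs is not needed (only within a pair), since a component labeled $(u,w)$ and also labeled $(u',w')$ may legitimately be used once for each pair, and indeed distinct fragments of the simple cycle $P$ already occupy distinct components so no genuine conflict arises. A secondary, purely bookkeeping point is the indices-modulo-$l$ convention and the degenerate case $l = 1$ (a single fragment returning to the same vertex of $S$), which the definitions of $q(C,u,v)$ with $u = v$ are explicitly set up to accommodate.
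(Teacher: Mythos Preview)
Your overall approach—an iterative exchange that replaces each unlabeled fragment by a path through a labeled component of at least equal $q$-value—is exactly the paper's. The gap is in how you guarantee that the resulting $P'$ is a \emph{simple} cycle.

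You process fragments group-by-group (grouped by the pair $(v_i,v_{i+1})$) and, when swapping, only require the substitute component $C'$ to be ``not used by any index in $I$.'' But $C'$ may well be the component currently used by some fragment $j$ in a \emph{different} group; after your swap, fragments $i$ and $j$ are both routed through $C'$, and since you build $P'_i$ as a path realizing $q(C',v_i,v_{i+1})$ (which, in the clique case~(iii), uses every vertex of $C'$), the two fragments will share internal vertices. Your attempt to dismiss this—``distinct fragments of the simple cycle $P$ already occupy distinct components so no genuine conflict arises''—is both false (two fragments of a simple cycle can have their internal vertices in the same tree or clique component, on disjoint vertex sets) and beside the point, since what matters is disjointness in the \emph{new} cycle, not the old one. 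Incidentally, because $v_1,\dots,v_l$ are distinct, each ordered-pair group $I$ has size at most $1$ (at most $2$ for unordered pairs, and only when $l=2$), so your within-group matching is trivial; the entire difficulty is precisely the cross-group conflict you wave away.

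The fix is minor and is what the paper does: drop the grouping, process one unlabeled fragment $i$ at a time, and choose $C'_i$ to be a component labeled $(v_i,v_{i+1})$ that is disjoint from the \emph{entire current cycle}. Such a $C'_i$ exists because the current cycle touches at most $l\le k$ components, one of which is the unlabeled $C_i$, so at most $l-1<k$ of the $k$ labeled components are touched. Each such swap keeps the cycle simple and strictly increases the number of fragments lying in a correctly labeled component, so the process terminates.
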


\begin{proof}
Let $i$ be an index such that $P_i$ has internal vertices.
Then, we have $q(C_i,v_i,v_{i+1}) > 0$. 
If $C_i$ is not labeled with $(v_{i},v_{i+1})$, then at least $k$ components $C\in \mathcal{C}_{\mathrm{clique}}\cup \mathcal{C}_{\mathrm{tree}}$ are labeled with $(v_{i},v_{i+1})$.
Therefore, there exists a component $C'_i$ disjoint from $P$, labeled with $(v_i,v_{i+1})$, such that $q(C'_i,v_i,v_{i+1})\geq q(C_i,v_i,v_{i+1})$.
We replace $P_i$ with a $v_{i}-v_{i+1}$ path of length $q(C'_i,v_i,v_{i+1})+1$.
This replacement does not decrease the length of $P$, and increases the number of fragments labeled $(v_i,v_{i+1})$. Thus, iterating this process exhaustively yields the desired cycle $P'$.
\end{proof}

Our algorithm first guesses the structure of the longest cycle $P$.
Specifically, we guess the sequence $(v_1,\dots, v_l)$ of distinct vertices in $S$.
For each $i\in \{1,\dots, l\}$, we also guess a component $C_i\in \mathcal{C}_{\mathrm{clique}}\cup \mathcal{C}_{\mathrm{tree}}$ labeled with $(v_i,v_{i+1})$, which represents that internal vertices of $P_i$ are in $C_i$, or set $C_i=\bot$, which represents that $P_i$ has length one.
Since there are at most $k\cdot k!$ candidate sequences for $(v_1,\dots, v_{l})$ and $k^k$ choices for $(C_1,\dots, C_l)$, this guessing step costs $2^{O(k\log k)}$ to the time complexity.
We define the following \textsc{Longest Disjoint Terminal Paths}.

\begin{quote}
\textsc{Longest Disjoint Terminal Paths}:
Given a graph that is either a clique or a tree, and given a list of vertex subset pairs $((V_{11},V_{12}),\dots, (V_{l1},V_{l2}))$, compute the maximum total length of $l$ vertex-disjoint paths $(P_1,\dots, P_l)$ such that each $P_i$ is a path from a vertex in $V_{i1}$ to a vertex in $V_{i2}$.
\end{quote}

Here we see how to reduce \textsc{Longest Cycle} to repeatedly solving \textsc{Longest Disjoint Terminal Paths}.
First, if there is $i\in \{1,\dots, l\}$ such that $C_i=\bot$ and $(v_i,v_{i+1})\not \in E$, no solution exists for this guess.
We also treat the corner case where $l = 2$, $C_1 = C_2 = \bot$, and $(v_1, v_2)$ is a single edge in this part by returning that no solution exists.  
Assume otherwise. 
For each $C\in \mathcal{C}_{\mathrm{clique}}\cup \mathcal{C}_{\mathrm{tree}}$, let $J_C$ be the set of indices $i$ such that $C_i=C$.
For each $C$ with $J_C\neq \emptyset$, we call \textsc{Longest Disjoint Terminal Paths} for the list of vertex subset pairs $\{(N(v_i)\cap C, N(v_{i+1})\cap C)\}_{i\in J_C}$.
Then, the length of the longest cycle for this guess is given by the sum of the returned values of these calls plus $l+|\{i\in \{1,\dots, l\}\colon C_i\neq \bot\}|$.

In the following sections, we present FPT algorithms for \textsc{Longest Disjoint Terminal Paths} parameterized by $l$, separately for cliques and trees, thereby completing the proof of Theorem~\ref{thm:cycle}.

\subsection{Longest Disjoint Terminal Paths on Cliques}
Here, we solve \textsc{Longest Disjoint Terminal Paths} on cliques.
Let $V$ be a vertex set, $l\in \mathbb{Z}_{\geq 1}$, and $((V_{11},V_{12}),\dots, (V_{l1},V_{l2}))$ be a list of pairs of subsets of $V$.
Our algorithm guesses an array of flags $f:=(f_1,\dots, f_l)\in \{0,1\}^l$, where $f_i=0$ indicates that $P_i$ consists of a single vertex, and $f_i=1$ indicates that $P_i$ contains at least one edge.
Obviously, this guess costs $O(2^l)$ to the time complexity.
We define the multifamily $\mathcal{D}_f$ as follows.
\begin{align*}
    \mathcal{D}_f:=\bigcup_{\substack{i\in \{1,\dots, l\},\\f_i=0}}\{V_{i1}\cap V_{i2}\}\cup \bigcup_{\substack{i\in \{1,\dots, l\},\\f_i=1}}\{V_{i1}, V_{i2}\}.
\end{align*}
We further define the bipartite graph $H_f$ with bipartition $(\mathcal{D}_f,V)$, where an edge $(D,v)$ exists if and only if $v\in D$.
We have the following.
\begin{lemma}
$H_f$ contains a matching of size $|\mathcal{D}_f|$ if and only if there exists a list of disjoint paths $(P_1,\dots, P_l)$ that satisfies, for each $i\in \{1,\dots, l\}$,
\begin{itemize}
    \item if $f_i=0$, then $P_i$ consists of a single vertex; if $f_i=1$, then $P_i$ contains at least one edge, and
    \item one endpoint of $P_i$ belongs to $V_{i1}$ and the other endpoint belongs to $V_{i2}$.
\end{itemize}
\end{lemma}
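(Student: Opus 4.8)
The plan is to read a size-$|\mathcal{D}_f|$ matching of $H_f$ as a \emph{system of distinct representatives} for the multifamily $\mathcal{D}_f$: such a matching saturates $\mathcal{D}_f$ and hence picks, for every member $D\in\mathcal{D}_f$ (counted with multiplicity), a vertex $v_D\in D$, with all chosen vertices pairwise distinct. The point that makes the translation between matchings and path families work is that $V$ induces a clique, so any two distinct vertices of $V$ are adjacent and thus form a length-one path.

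For the forward direction I would, given such a matching $M$, build the paths directly. For each $i$ with $f_i=0$, take $P_i$ to be the single-vertex path on the vertex that $M$ matches to the copy of $V_{i1}\cap V_{i2}$; this vertex lies in both $V_{i1}$ and $V_{i2}$, so the two coinciding endpoints of $P_i$ satisfy the endpoint condition. For each $i$ with $f_i=1$, let $a_i$ and $b_i$ be the vertices matched to the copies of $V_{i1}$ and $V_{i2}$; since $M$ is a matching, $a_i\ne b_i$, and since $V$ is a clique, $a_ib_i$ is an edge, so $P_i:=a_ib_i$ is a path with at least one edge, one endpoint in $V_{i1}$ and the other in $V_{i2}$. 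All vertices used across the $P_i$ are distinct because $M$ is a matching, so the paths are vertex-disjoint.

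For the converse I would reverse this construction: from a vertex-disjoint family $(P_1,\dots,P_l)$ meeting the two conditions, for each $i$ with $f_i=0$ the path $P_i$ is a single vertex $v$, whose two (coinciding) endpoints must lie in $V_{i1}$ and $V_{i2}$, forcing $v\in V_{i1}\cap V_{i2}$; for each $i$ with $f_i=1$ the endpoints $a_i\in V_{i1}$ and $b_i\in V_{i2}$ are distinct because $P_i$ contains an edge. Matching the copy of $V_{i1}\cap V_{i2}$ to $v$ (resp. the copies of $V_{i1},V_{i2}$ to $a_i,b_i$) gives a set of edges of $H_f$; vertex-disjointness of the $P_i$ makes all these representatives pairwise distinct, so it is a matching that saturates $\mathcal{D}_f$ and hence has size $|\mathcal{D}_f|$.

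The only point that requires care is the multifamily bookkeeping: distinct indices may yield equal sets (e.g. $V_{i1}=V_{j1}$), but in $H_f$ these are separate vertices that a matching must send to different vertices of $V$ — which is exactly what disjointness of $P_i$ and $P_j$ enforces. Beyond this and the trivial remark that a two-element subset of a clique is always a valid path, I do not expect any real obstacle; this lemma is essentially a restatement of Hall-type matching in the language of disjoint paths.
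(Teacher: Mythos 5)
Your proof is correct and follows essentially the same approach as the paper's: both directions translate a matching in $H_f$ into endpoints of the $P_i$ (and back) via the representative map $\sigma$, using that the ground set is a clique to realize any two distinct endpoints as a length-one path. Your additional remarks about the multifamily bookkeeping and $a_i\neq b_i$ forced by the matching are exactly the points the paper's proof relies on implicitly.
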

\begin{proof}
Let $M$ be a matching of size $|\mathcal{D}_f|$ in $H_f$. 
For each $D\in \mathcal{D}_f$, let $\sigma(D)$ be the vertex matched to $D$ in $M$.
Define $P_i$ as a path consisting of the single vertex $\sigma(V_{i1}\cap V_{i2})$ if $f_i=0$, and as a path consisting of the single edge $(\sigma(V_{i1}),\sigma(V_{i2}))$ if $f_i=1$.
Then, $(P_1,\dots, P_l)$ satisfies the conditions of the lemma.
Conversely, suppose there exists a list of disjoint paths $(P_1,\dots, P_l)$ satisfying the conditions.
Since these paths are disjoint, their endpoint sets are also disjoint.
Thus, defining $\sigma(V_{i1}\cap V_{i2})$ as the only vertex of $P_i$ if $f_i=0$ and $\sigma(V_{i1})$ and $\sigma(V_{i2})$ as endpoints of $P_i$ if $f_i=1$, we obtain a matching of size $|\mathcal{D}_f|$ in $H_f$.
\end{proof}

Our algorithm constructs $H_f$ and checks whether it contains a matching of size $|\mathcal{D}_f|$.
If no such matching exists in $H_f$ for all guesses $f:=(f_1,\dots, f_l)$, the algorithm reports that \textsc{Longest Disjoint Terminal Paths} has no solution.
If there exists some $f\neq (0,\dots,0)$ such that $H_f$ contains a matching of size $|\mathcal{D}_f|$, the algorithm returns $|V|-l$; that is, the algorithm reports that there is a list of paths that covers all vertices in $V$.
The correctness is from the following observation: If there exists a vertex not included in any of $P_1,\dots, P_l$, we can extend a path $P_i$ with $f_i=1$ by adding this vertex as an internal vertex.
For the remaining case, that is, if a matching of size $|\mathcal{D}_f|$ exists only for $f=(0,\dots,0)$, the algorithm returns $0$.
By this algorithm, we have the following.
\begin{lemma}\label{lem:disjoint_clique}
\textsc{Longest Disjoint Terminal Paths} admits an $O^*\left(2^l\right)$-time algorithm on cliques.
\end{lemma}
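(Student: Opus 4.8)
The plan is to verify that the procedure described above is correct and runs within the claimed time bound. First I would fix a guess $f\in\{0,1\}^l$ and invoke the preceding lemma, which tells us that $H_f$ admits a matching of size $|\mathcal{D}_f|$ precisely when there is a family of pairwise disjoint paths $(P_1,\dots,P_l)$ in the clique on $V$ with $P_i$ a single vertex when $f_i=0$, $P_i$ containing an edge when $f_i=1$, and one endpoint of $P_i$ in $V_{i1}$ and the other in $V_{i2}$. Call such a family \emph{$f$-feasible}, and call a family \emph{feasible} if it is $f$-feasible for some $f$; the feasible families are exactly the solutions of \textsc{Longest Disjoint Terminal Paths} on this clique, since any solution $(P_1,\dots,P_l)$ has shape $f_i:=[\,|P_i|\geq 2\,]$ and is then $f$-feasible. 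In particular a feasible family exists if and only if $H_f$ has a matching of size $|\mathcal{D}_f|$ for some $f$, which is exactly the condition under which the algorithm does not report ``no solution''.

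Next I would establish the key property of cliques: if $(P_1,\dots,P_l)$ is feasible, a vertex $w\in V$ lies on none of the $P_i$, and some $P_j$ has at least two vertices, then inserting $w$ as an internal vertex of $P_j$ — legitimate since every two vertices of a clique are adjacent — yields a feasible family of strictly larger total length. Iterating, any feasible family that contains a path with an edge extends to a feasible family covering all of $V$, and such a family has total length $\sum_i(|P_i|-1)=|V|-l$, which is clearly the maximum possible for $l$ disjoint paths inside $V$. Hence the optimum of \textsc{Longest Disjoint Terminal Paths} equals $|V|-l$ exactly when some feasible family contains a path with an edge, i.e.\ when some $f\neq(0,\dots,0)$ passes the matching test; it equals $0$ when a feasible family exists but none contains an edge, i.e.\ when only $f=(0,\dots,0)$ passes; and otherwise there is no solution. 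These three cases match the three outputs of the algorithm, establishing correctness.

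For the running time, there are $2^l$ guesses $f$, and for each the multifamily $\mathcal{D}_f$ and the bipartite graph $H_f$ have size polynomial in the input, so a maximum-cardinality bipartite matching in $H_f$ is computable in polynomial time; the whole procedure thus runs in $O^*(2^l)$ time. I do not anticipate a genuine obstacle here: the only point requiring a little care is that the extension step in the second paragraph needs some $P_j$ to already contain an edge, which is exactly why the algorithm must separate the guess $f=(0,\dots,0)$ from the rest, and why the three-way case analysis above is the correct one.
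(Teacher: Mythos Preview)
Your proposal is correct and follows essentially the same approach as the paper: guess the shape vector $f$, reduce feasibility for each $f$ to a bipartite matching via the preceding lemma, use the clique-extension observation to conclude that any feasible family with a nontrivial path extends to one covering all of $V$ (hence total length $|V|-l$), and handle the three output cases accordingly. The running-time analysis is identical.
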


\subsection{Longest Disjoint Terminal Paths on Trees}
Here, we solve \textsc{Longest Disjoint Terminal Paths} on trees using dynamic programming.
Let $T:=(V,E)$ be a tree, $l\in \mathbb{Z}_{\geq 1}$, and $((V_{11},V_{12}),\dots, (V_{l1},V_{l2}))$ be a list of pairs of subsets of $V$.
We select an arbitrary vertex $r$ as the root and consider $T$ as a rooted tree.
For a vertex $v\in V$, let $T_v$ denote the subtree rooted at $v$.
For $j\in \{1,\dots, l\}$, a path $P_j$ in $T$ is \emph{proper} if one endpoint belongs to $V_{j1}$ and the other belongs to $V_{j2}$.
For $v\in V$ and $f\in \{1,2\}$, a path $P_j$ is \emph{$(v,f)$-semiproper} if one endpoint belongs to $V_{jf}$ and the other endpoint is $v$.

For $v\in V$, $Z\subseteq \{1,\dots, l\}$, $i\in \{1,\dots, l\}\setminus Z$, and $f\in \{1,2\}$, we define $\DP_1[v][Z][i][f]$ as the maximum possible total length among all lists of paths $\{P_j\}_{j\in Z}\cup \{P_i\}$ that use only vertices in $T_v$, such that each path in $\{P_j\}_{j\in Z}$ is proper and $P_i$ is $(v,f)$-semiproper.
Similarly, for $v\in V$ and $Z\subseteq \{1,\dots, l\}$, we define $\DP_2[v][Z]$ as the maximum possible total length among all lists of proper paths $\{P_j\}_{j\in Z}$ that use only vertices in $T_v$.
The answer to the \textsc{Longest Disjoint Terminal Paths} is given by $\DP_2[r][\{1,\dots, l\}]$.

Below, we describe how to compute these DP arrays in a bottom-up manner.
Let $v\in V$ and $\{c_1,\dots, c_{t_v}\}$ be the set of children of $v$.
To compute $\DP_1[v][\cdot][\cdot][\cdot]$ and $\DP_2[v][\cdot]$, we introduce the following auxiliary DP arrays.
For each $s\in \{0,\dots, t_v\}$ and $Z\subseteq \{1,\dots, l\}$, we define $\AUX_{0}[s][Z]$ and $\AUX_{2}[s][Z]$ as the maximum possible total lengths among all lists of proper paths $\{P_j\}_{j\in Z}$ that use only vertices in $T_{c_1}\cup \dots \cup T_{c_s}$ and $T_{c_1}\cup \dots \cup T_{c_s} \cup \{v\}$, respectively.
Moreover, for $i\in \{1,\dots, l\}\setminus Z$ and $f\in \{1,2\}$, we define $\AUX_{1}[s][Z][i][f]$ as the maximum possible total length among all lists of paths $\{P_j\}_{j\in Z}\cup\{P_i\}$ that use only vertices in $T_{c_1}\cup \dots \cup T_{c_s}\cup \{v\}$, where each $P_j$ is proper and $P_i$ is $(v,f)$-semiproper.
All $\DP$ and $\AUX$ values are set to $-\infty$ if no valid path configurations exist.

We initialize the $\AUX$ values as follows.
\begin{align*}
    \AUX_0[0][Z]&:=\begin{cases}
        0 & (Z=\emptyset)\\
        -\infty & (Z\neq \emptyset)
    \end{cases},\\
    \AUX_1[0][Z][i][f]&:=\begin{cases}
        0 & (Z=\emptyset \land v\in V_{if})\\
        -\infty & (\text{otherwise})
    \end{cases},\\
    \AUX_2[0][Z]&:=\begin{cases}
        0 & (Z=\emptyset\lor Z=\{i\} \text{ for some } i\in \{1,\dots, l\} \text{ with } v\in V_{i1}\cap V_{i2})\\
        -\infty & (\text{otherwise})
    \end{cases}.
\end{align*}
Now, for $s\in \{1,\dots, t_v\}$, we update the DP values using the following formulas.
\begin{align}
    \AUX_0[s][Z]&=\max_{Z'\subseteq Z}\left(\AUX_0[s-1][Z']+\DP_2[c_s][Z\setminus Z']\right),\label{eq:update_path_1}\\
    \AUX_1[s][Z][i][f]&=\max\left(\begin{array}{l}
        \displaystyle\max_{Z'\subseteq Z}\left(\AUX_1[s-1][Z'][i][f]+\DP_2[c_s][Z\setminus Z']\right),\\
        \displaystyle\max_{Z'\subseteq Z}\left(\AUX_0[s-1][Z']+\DP_1[c_s][Z\setminus Z'][i][f]+1\right)
    \end{array}\right),\label{eq:update_path_2}\\
    \AUX_2[s][Z]&=\max\left(\begin{array}{l}
        \displaystyle\max_{Z'\subseteq Z}\left(\AUX_2[s-1][Z']+\DP_2[c_s][Z\setminus Z']\right),\\
        \displaystyle\max_{\substack{i\in Z,f\in \{1,2\},\\Z'\subseteq Z\setminus \{i\}}}  
        \left(\AUX_1[s-1][Z'][i][f]+\DP_1[c_s][Z\setminus (Z'\cup\{i\})][i][3-f]+1\right)
    \end{array}\right)\label{eq:update_path_3}.
\end{align}
Furthermore, by definition, for each $v\in V$, the final DP values are obtained as follows.
\begin{align*}
    \DP_1[v][Z][i][f] &= \AUX_{1}[t_v][Z][i][f],\\
    \DP_2[v][Z] &= \AUX_2[t_v][Z].
\end{align*}
We prove the correctness of these update formulas.
\begin{lemma}
Let $v\in V$ and assume the values of $\DP_1[c_s][Z][i][f]$ and $\DP_2[c_s][Z]$ are correctly computed for all children $c_s$ of $v$, $Z\subseteq \{1,\dots, l\}$, $i\in \{1,\dots, l\}\setminus Z$, and $f\in \{1,2\}$.
Then, above update formulas correctly computes $\DP_1[v][Z][i][f]$ and $\DP_2[v][Z]$ for all $Z\subseteq \{1,\dots, l\}$, $i\in \{1,\dots, l\}\setminus Z$, and $f\in \{1,2\}$.
\end{lemma}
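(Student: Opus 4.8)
The plan is to prove the lemma by induction on $s$ for the three auxiliary arrays, showing that $\AUX_0[s][Z]$, $\AUX_1[s][Z][i][f]$, and $\AUX_2[s][Z]$ equal the maximum total lengths over the configurations described in their definitions (over $T_{c_1}\cup\dots\cup T_{c_s}$ for $\AUX_0$, and over $(T_{c_1}\cup\dots\cup T_{c_s})\cup\{v\}$ for $\AUX_1$ and $\AUX_2$), with the convention that a maximum over an empty collection is $-\infty$. Granting this, the lemma follows by instantiating $s=t_v$, since $T_v=\{v\}\cup\bigcup_{s=1}^{t_v}T_{c_s}$, together with the definitional identities $\DP_1[v][Z][i][f]=\AUX_1[t_v][Z][i][f]$ and $\DP_2[v][Z]=\AUX_2[t_v][Z]$. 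For the base case $s=0$ the available vertex set is $\emptyset$ for $\AUX_0$ and $\{v\}$ for $\AUX_1,\AUX_2$, so I would check directly: the only family of proper paths contained in $\emptyset$ is the empty family; a $(v,f)$-semiproper path inside $\{v\}$ must be the single vertex $v$ and exists exactly when $v\in V_{if}$, and it already occupies $v$ so no other path can be added; and a nonempty family of proper paths inside $\{v\}$ must be a single path equal to the vertex $v$, which is proper for index $i$ iff $v\in V_{i1}\cap V_{i2}$. These observations match the three initialization formulas verbatim.

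For the inductive step from $s-1$ to $s$, the one structural fact I would use is that, because $T$ is a tree, the sets $T_{c_s}$ and $(T_{c_1}\cup\dots\cup T_{c_{s-1}})\cup\{v\}$ are disjoint and the only edge of $T$ between them is $v c_s$. Consequently, in any family of pairwise vertex-disjoint paths contained in $(T_{c_1}\cup\dots\cup T_{c_s})\cup\{v\}$, at most one path uses the edge $vc_s$, and every path not using it lies entirely in $T_{c_s}$ or entirely in $(T_{c_1}\cup\dots\cup T_{c_{s-1}})\cup\{v\}$. For \eqref{eq:update_path_1} no path may use $v$, so the family of proper paths simply splits into a subfamily indexed by some $Z'\subseteq Z$ inside $T_{c_1}\cup\dots\cup T_{c_{s-1}}$ and the complementary subfamily inside $T_{c_s}$; conversely any such pair of subfamilies glues without overlap into a valid family, and optimizing over $Z'$ yields the formula. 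For \eqref{eq:update_path_2} I would split on whether the semiproper path $P_i$ touches $T_{c_s}$: if not, $P_i$ together with the proper paths lives in $(T_{c_1}\cup\dots\cup T_{c_{s-1}})\cup\{v\}$, recovering the $\AUX_1[s-1]+\DP_2[c_s]$ branch; if so, $P_i$ starts at $v$, crosses $vc_s$, and continues in $T_{c_s}$ as a $(c_s,f)$-semiproper path, so $P_i$ has length one plus the length of that semiproper subpath, while the proper paths—which now cannot use $v$—fall under $\AUX_0[s-1]$, giving the second branch.

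The delicate case, and the step I expect to be the main obstacle, is \eqref{eq:update_path_3}. Here I would ask whether some proper path $P_i$ with $i\in Z$ uses both $v$ and a vertex of $T_{c_s}$; since $T$ is a tree such a $P_i$ must contain the edge $vc_s$, and at most one path of the family uses $vc_s$ because $v$ lies on at most one path. If no path crosses, all paths split across $(T_{c_1}\cup\dots\cup T_{c_{s-1}})\cup\{v\}$ and $T_{c_s}$ exactly as before, giving the $\AUX_2[s-1]+\DP_2[c_s]$ branch. If $P_i$ crosses, deleting $vc_s$ cuts $P_i$ into a $(v,f)$-semiproper subpath inside $(T_{c_1}\cup\dots\cup T_{c_{s-1}})\cup\{v\}$ and a $(c_s,3-f)$-semiproper subpath inside $T_{c_s}$, where $f\in\{1,2\}$ is chosen so that the endpoint of $P_i$ lying in $(T_{c_1}\cup\dots\cup T_{c_{s-1}})\cup\{v\}$ belongs to $V_{if}$; the lengths add, the removed edge contributing one, which reconstructs the term $\AUX_1[s-1][Z'][i][f]+\DP_1[c_s][Z\setminus(Z'\cup\{i\})][i][3-f]+1$, and conversely any two such semiproper subpaths glue along a fresh copy of $vc_s$ into a proper path for $i$ with one endpoint in $V_{i1}$ and one in $V_{i2}$, again without overlap. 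The points one must be careful about are that the ``$+1$'' for $vc_s$ is counted exactly once in each branch, that trivial one-vertex semiproper subpaths are allowed (and are already handled by the base case and by the $\AUX_1$ recursion), and that in the crossing case $v$ is reserved for $P_i$, which is precisely why the remaining proper paths on the $T_{c_1}\cup\dots\cup T_{c_{s-1}}$ side are accounted for through $\AUX_0$ rather than $\AUX_2$. Each of the three identities is then obtained by the two matching inequalities: ``$\ge$'' by exhibiting a configuration realizing the right-hand side (legal by the disjointness noted above), and ``$\le$'' by decomposing an optimal configuration for the left-hand side along $v$ and $vc_s$ as above; both are routine once the structural split is fixed.
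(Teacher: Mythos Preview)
Your proposal is correct and follows essentially the same approach as the paper's proof: both argue by induction on $s$ for the auxiliary arrays $\AUX_0,\AUX_1,\AUX_2$, verify the base case $s=0$, and for the inductive step decompose an optimal configuration across the cut between $T_{c_s}$ and $(T_{c_1}\cup\dots\cup T_{c_{s-1}})\cup\{v\}$, with the case split in \eqref{eq:update_path_2} on whether $P_i$ enters $T_{c_s}$ and in \eqref{eq:update_path_3} on whether some proper path crosses the edge $vc_s$. Your write-up is somewhat more explicit than the paper's (spelling out the base case, the uniqueness of the bridging edge $vc_s$, and the two-inequality structure), but the logical content is the same.
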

\begin{proof}
Correctness of initializing $\AUX_{0}[0][Z]$, $\AUX_{1}[0][Z][i][f]$, and $\AUX_{2}[0][Z]$ are clear.
Now, we assume $\AUX_{0}[s-1][Z]$, $\AUX_{1}[s-1][Z][i][f]$, and $\AUX_{2}[s-1][Z]$ are correctly computed. We prove the update formulas~\eqref{eq:update_path_1},~\eqref{eq:update_path_2},~and~\eqref{eq:update_path_3} correctly compute $\AUX_{0}[s][Z]$, $\AUX_{1}[s][Z][i][f]$, and $\AUX_{2}[s][Z]$, respectively.

First, we consider $\AUX_{0}[s][Z]$.
Let $\{P_j\}_{j\in Z}$ be a list of proper paths that use only vertices in $T_{c_1}\cup \dots\cup T_{c_{s}}$.
Then, each path either contained in $T_{c_1}\cup\dots\cup  T_{c_{s-1}}$ or $T_{c_s}$. 
Let $Z'$ be the list of indices of such paths contained in $T_{c_1}\cup\dots\cup  T_{c_{s-1}}$.
From induction hypothesis, the total length of paths in $\{P_j\}_{j\in Z'}$ and $\{P_j\}_{j\in Z\setminus Z'}$ are at most $\AUX_{0}[s-1][Z']$ and $\DP_{2}[c_s][Z\setminus Z']$, respectively. Thus, the total length of paths in $\{P_j\}_{j\in Z}$ is at most $\AUX_{0}[s][Z]$.
Conversely, let $Z'$ be the set that attains maximum of~\eqref{eq:update_path_1}. Then, merging lists of paths that attains $\AUX_{0}[s-1][Z']$ and $\DP_{2}[c_s][Z\setminus Z']$ yields the list of proper paths that only use vertices in $T_{c_1}\cup \dots\cup T_{c_s}$ with the total length $\AUX_{0}[s][Z]$.

Next, we consider $\AUX_{1}[s][Z][i][f]$. 
Let $\{P_j\}_{j\in Z}\cup {P_i}$ be a list of paths that use only vertices in $T_{c_1}\cup \dots\cup T_{c_{s}}\cup \{v\}$, where each $P_j$ is proper and $P_i$ is $(v,f)$-semiproper.
Let $Z'$ be the list of indices of proper paths contained in $T_{c_1}\cup \dots\cup T_{c_{s-1}}\cup \{v\}$.
Remark that $P_i$ is contained in either $T_{c_1}\cup \dots\cup T_{c_{s-1}}\cup \{v\}$ or $T_{c_s}\cup \{v\}$.
For the former case, from induction hypothesis, the total length of paths in $\{P_j\}_{j\in Z'}\cup \{P_i\}$ and $\{P_j\}_{j\in Z\setminus Z'}$ are at most $\AUX_{1}[s-1][Z'][i][f]$ and $\DP_2[c_s][Z\setminus Z']$, respectively.
For the latter case, from induction hypothesis, the total length of paths in $\{P_j\}_{j\in Z'}$ and $\{P_j\}_{j\in Z\setminus Z'}\cup \{P_i\}$ are at most $\AUX_0[s-1][Z']$ and $\DP_1[c_s][Z\setminus Z'][i][f]+1$, respectively.
Thus, the total length of paths in $\{P_j\}_{j\in Z}$ is at most $\AUX_{1}[s][Z][i][f]$.
The reverse direction is proved in similar way as~\eqref{eq:update_path_1}.

Finally, we consider $\AUX_{2}[s][Z]$.
Let $\{P_j\}_{j\in Z}$ be a list of proper paths that use only vertices in $T_{c_1}\cup \dots\cup T_{c_{s}}\cup \{v\}$.
Then, at most one of these paths uses vertices from both $T_{c_1}\cup \dots\cup T_{c_{s-1}}\cup \{v\}$ and $T_{c_s}$.
If such path does not exist, let $Z'$ be the list of indices of proper paths contained in $T_{c_1}\cup \dots\cup T_{c_{s-1}}\cup \{v\}$. Then, from induction hypothesis, the total length of paths in $\{P_j\}_{j\in Z'}$ and $\{P_j\}_{j\in Z\setminus Z'}$ are at most $\AUX_{2}[s-1][Z']$ and $\DP_2[c_s][Z\setminus Z']$, respectively.
Thus, the total length of paths in $\{P_j\}_{j\in Z}$ is at most $\AUX_2[s][Z]$.
Assume otherwise. Let $i$ be the index of the path that uses vertices from both $T_{c_1}\cup \dots\cup T_{c_{s-1}}\cup \{v\}$ and $T_{c_s}$, and $u_1$ and $u_2$ be its endpoints, where $u_1\in T_{c_1}\cup \dots\cup T_{c_{s-1}}\cup \{v\}$ and $u_2\in T_{c_s}$.
Let $Q_1$ be the $u_1,v$-path and $Q_2$ be the $c_s,u_2$-path.
If $Q_1$ is $(v,1)$-semiproper and $Q_2$ is $(c_s,2)$-semiproper, we set $f:=1$. Otherwise, that is, if $Q_1$ is $(v,2)$-semiproper and $Q_2$ is $(v,1)$-semiproper, we set $f:=2$.
Let $Z'$ be the list of indices of proper paths contained in $T_{c_1}\cup \dots\cup T_{c_{s-1}}\cup \{v\}$.
From induction hypothesis, the total length of paths in $\{P_j\}_{j\in Z}\cup \{Q_1\}$ and $\{P_j\}_{j\in (Z\setminus (Z'\cup\{i\})}\cup \{Q_2\}$ are at most $\AUX_{1}[s-1][Z'][i][f]$ and $\DP_1[c_s][Z\setminus (Z'\cup\{i\})][i][3-f]$, respectively.
Thus, the total length of paths in $\{P_j\}_{j\in Z}$ is at most $\AUX_2[s][Z]$. The reverse direction is proved in similar way as~\eqref{eq:update_path_1}.
\end{proof}

Since the updates can be computed in $O^*\left(3^l\right)$ time, we obtain the following result.
\begin{lemma}\label{lem:disjoint_tree}
\textsc{Longest Disjoint Terminal Paths} admits an $O^*\left(3^l\right)$-time algorithm on trees.
\end{lemma}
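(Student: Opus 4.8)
The plan is to leave correctness aside---it has already been established for the update formulas above---and give only the running-time analysis of the dynamic program. First I would process the rooted tree $T$ in postorder, and at each vertex $v$ compute, for $s=0,1,\dots,t_v$ in increasing order, the auxiliary arrays $\AUX_0[s][\cdot]$, $\AUX_1[s][\cdot][\cdot][\cdot]$, and $\AUX_2[s][\cdot]$ via the stated initialization and the update formulas \eqref{eq:update_path_1}, \eqref{eq:update_path_2}, and \eqref{eq:update_path_3}, finally reading off $\DP_1[v][\cdot][\cdot][\cdot]=\AUX_1[t_v][\cdot][\cdot][\cdot]$ and $\DP_2[v][\cdot]=\AUX_2[t_v][\cdot]$; the answer is then $\DP_2[r][\{1,\dots,l\}]$. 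Since the lemma only asks for the maximum total length, no witness needs to be reconstructed.

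The heart of the bound is the classical identity $\sum_{Z\subseteq\{1,\dots,l\}}2^{|Z|}=\sum_{j=0}^{l}\binom{l}{j}2^{j}=3^{l}$, which I would apply as follows. Fix a vertex $v$ and a child index $s$. Evaluating $\AUX_0[s][Z]$ for all $Z$ by \eqref{eq:update_path_1} amounts to enumerating every pair $(Z,Z')$ with $Z'\subseteq Z$ and spending $O(1)$ arithmetic (plus polynomial indexing overhead) per pair, hence $O^*(3^l)$ in total. Formula \eqref{eq:update_path_2} for $\AUX_1$ has the same shape with an extra outer range over $i\in\{1,\dots,l\}$ and $f\in\{1,2\}$, adding only a factor $O(l)$, so it is again $O^*(3^l)$. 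For \eqref{eq:update_path_3}, the first branch is of the same $(Z,Z')$ type, and in the second branch, for each fixed $i$ we have $Z\ni i$ and $Z'\subseteq Z\setminus\{i\}$; writing $Z=Z''\cup\{i\}$ with $Z''\subseteq\{1,\dots,l\}\setminus\{i\}$, the number of relevant pairs is $\sum_{Z''}2^{|Z''|}=3^{l-1}$, so summing over the $O(l)$ choices of $i$ and $O(1)$ of $f$ keeps this at $O^*(3^l)$.

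Combining these, computing all $\AUX$ values (and hence $\DP_1,\DP_2$) at a single vertex $v$ costs $O^*(t_v\cdot 3^l)$; since $\sum_{v\in V}t_v=|V|-1$, the whole dynamic program runs in $O^*(3^l)$ time, which gives the lemma.

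I do not anticipate a real obstacle: the only delicate point is to observe that every maximization in the update formulas ranges over subsets of a subset---exactly the quantity controlled by $\sum_{Z}2^{|Z|}=3^l$---while all the remaining ranges (over the vertices of $T$, over the children of each vertex, over the path index $i$, over the flag $f$, and the arithmetic on the polynomially bounded $\DP$ and $\AUX$ entries) contribute only factors polynomial in the input size, which are swallowed by the $O^*$ notation.
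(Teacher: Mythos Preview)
Your proposal is correct and matches the paper's approach exactly: the paper simply asserts that ``the updates can be computed in $O^*(3^l)$ time'' without further justification, and you have supplied precisely the standard $\sum_{Z}2^{|Z|}=3^{l}$ subset-convolution-style argument that makes this claim rigorous. There is nothing to add.
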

Theorem~\ref{thm:cycle} follows from the discussion in Section~\ref{sec:cycle_strategy} combined with Lemmas~\ref{lem:disjoint_clique} and~\ref{lem:disjoint_tree}.

\bibliography{bib}
\end{document}